\DeclareMathAlphabet\mathbb{U}{fplmbb}{m}{n} %Preferred style of \mathbb
\newcommand{\re}{\mathbb{R}}
\DeclareMathOperator{\Exp}{\mathbb{E}}
\DeclareMathOperator{\Var}{Var}
\DeclareMathOperator{\Prob}{\mathbb{P}}
\DeclareMathOperator{\interior}{int}
\DeclareMathOperator{\supp}{supp}
\DeclareMathOperator{\boundary}{bd}
\newcommand{\plim}{\mathchoice{\stackrel{\mathbb{P}}{\rightarrow}}{\rightarrow_\mathbb{P}}{\rightarrow_\mathbb{P}}{\rightarrow_\mathbb{P}}}
\newcommand{\absSmall}[1]{\vert #1 \vert}
\newcommand{\norm}[1]{\left\Vert #1 \right\Vert}
\newcommand{\normSmall}[1]{\Vert #1 \Vert}
\newcommand{\Indic}[1]{\mathbb{1}\left[#1\right]}
\newcommand{\IndicSmall}[1]{\mathbb{1}[#1]}
\newcommand{\op}[1]{o_{\Prob}(#1)}
\newcommand{\Op}[1]{O_{\Prob}(#1)}
\newcommand{\tth}[1]{#1^{\text{th}}}
\newcommand\independenT{\protect\mathpalette{\protect\independeNT}{\perp}}
\def\independeNT#1#2{\mathrel{\rlap{$#1#2$}\mkern2mu{#1#2}}}
\DeclareMathOperator{\independent}{\independenT}
\newcommand{\R}{\ensuremath{\mathcal{R}}}
\def\MYTITLE{Instrumental Variables Estimation of a Generalized Correlated Random 
    Coefficients Model}
\def\MYKEYWORDS{correlated random coefficients, instrumental variables, 
    unobserved heterogeneity, semiparametrics, hedonic models, residential sorting, valuation of clean air}
\title{\MYTITLE\footnote{This paper was presented at the UW--Milwaukee 
        Mini-Conference on Microeconometrics, Duke, Northwestern, Chicago, the 
        Harvard/MIT joint workshop, Ohio State, Boston College, the Triangle 
        econometrics conference, University College London, and the First Conference 
        in Econometric Theory at Universidad de San Andr\'{e}s. We thank 
        audiences at those seminars as well as Federico Bugni, Ivan Canay, 
        Matias Cattaneo, Andrew Chesher, Bryan Graham, Jim Heckman, Stefan Hoderlein, Joel 
        Horowitz, Max Kasy, Shakeeb Khan, Jia Li, Arnaud Maurel, Whitney Newey, 
        Maya Rossin-Slater, Elie Tamer, Duncan Thomas, Chris Timmins, and Ed 
        Vytlacil for helpful conversations and comments.  We thank Fu Ouyang for 
        research assistance.}}
\author{Matthew A. Masten\footnote{Department of Economics, Duke University, 
        \texttt{matt.masten@duke.edu}} \qquad Alexander 
    Torgovitsky\thanks{
    Department of Economics, Northwestern University, 
    \texttt{a-torgovitsky@northwestern.edu}}
}
\date{January 1, 2014}
\begin{document}
\maketitle
\begin{abstract}
We study identification and estimation of the average treatment effect in a correlated random coefficients model that allows for first stage heterogeneity and binary instruments. The model also allows for multiple endogenous variables and interactions between endogenous variables and covariates. Our identification approach is based on averaging the coefficients obtained from a collection of ordinary linear regressions that condition on different realizations of a control function. This identification strategy suggests a transparent and computationally straightforward estimator of a trimmed average treatment effect constructed as the average of kernel-weighted linear regressions. We develop this estimator and establish its $\sqrt{n}$--consistency and asymptotic normality. Monte Carlo simulations show excellent finite-sample performance that is comparable in precision to the standard two-stage least squares estimator. We apply our results to analyze the effect of air pollution on house prices, and find substantial heterogeneity in first stage instrument effects as well as heterogeneity in treatment effects that is consistent with household sorting.
\end{abstract}

\bigskip
\small
\noindent \textbf{JEL classification:} C14; C26; C51

\bigskip
\noindent \textbf{Keywords:} \MYKEYWORDS

%%%%%%%%%%%%%%%%%%%%%%%%%%%%%%%%%%%%%%%%%%%%%%%%%%%%%%%%%%%%%%%%%%%%%%%%%%%%%%%%
%%%%%%%%%%%%%%%%%%%%%%%%%%%%%%%%%%%%%%%%%%%%%%%%%%%%%%%%%%%%%%%%%%%%%%%%%%%%%%%%
%%%%%%%%%%%%%%%%%%%%%%%%%%%%%%%%%%%%%%%%%%%%%%%%%%%%%%%%%%%%%%%%%%%%%%%%%%%%%%%%
%%%%%%%%%%%%%%%%%%%%%%%%%%%%%%%%%%%%%%%%%%%%%%%%%%%%%%%%%%%%%%%%%%%%%%%%%%%%%%%%
%%%%%%%%%%%%%%%%%%%%%%%%%%%%%%%%%%%%%%%%%%%%%%%%%%%%%%%%%%%%%%%%%%%%%%%%%%%%%%%%

\onehalfspacing

\newpage
\normalsize
\section{Introduction}

This paper is about the linear correlated random coefficients (CRC) model.  In 
its simplest form, the model can be written as
\begin{equation}
    \label{eq_model_simple}
    Y = B_{0} + B_{1}X,
\end{equation}
where $Y$ is an outcome variable, $X$ is an explanatory variable and $B \equiv 
(B_0, B_1)$ are unobservable variables. The explanatory variable $X$ is 
endogenous in the sense that it may be statistically dependent with $B_{0}$ and 
$B_{1}$.  Concerns about endogeneity are widespread in economic applications and 
are often addressed by using the variation of an instrumental variable, $Z$, 
that is plausibly independent (or uncorrelated) with $(B_{0}, B_{1})$, but 
correlated with $X$. The most common tool for doing this is the two-stage least 
squares (TSLS) estimator.  However, unless the partial effect of $X$ on $Y$, 
i.e.  $B_{1}$, is a degenerate random variable (a constant), the estimand of the 
TSLS estimator is not necessarily an easily interpretable feature of the 
distribution of $B_{1}$.  Assuming that $B_{1}$ is constant is tantamount to 
assuming that the treatment effect of $X$ on $Y$ is homogenous. As many authors 
have discussed, the theoretical and empirical evidence does not support the 
assumption of homogenous treatment effects. See \cite{heckman2001tjope} and 
\cite{imbens2007ic} for thorough expositions of this point.

To address this problem, several authors have explored auxiliary assumptions 
under which the TSLS estimand becomes a parameter of interest. The most 
influential finding is that of \cite{imbensangrist1994e}, who show that if both 
$X$ and $Z$ are binary and if $Z$ affects $X$ monotonically, then the TSLS 
estimator is consistent for the local average treatment effect (LATE). The LATE 
parameter has generated significant debate over whether it is actually a 
quantity that economists should be interested in; see, for example, the 
\textit{Journal of Economic Perspectives} (2010) and the \textit{Journal of 
    Economic Literature} (2010) symposia. However, as the support of $X$ grows 
from binary to multi-valued discrete to continuous, the TSLS estimand becomes an 
increasingly complicated weighted average of LATEs between different $X$ 
realizations (\citealt{angristimbens1995jotasa}).  Even if one finds a solitary 
LATE to be an interesting parameter, the interpretation, economic significance, 
and policy relevance of such weighted averages of LATEs is more tenuous. A 
less controversial parameter of interest is the average treatment effect (ATE), 
which due to the linearity in \eqref{eq_model_simple} is determined by the 
average partial effect (APE), $\Exp(B_{1})$.  In a series of papers, 
\cite{heckmanvytlacil1998tjohr} and 
\cite{wooldridge1997el,wooldridge2003el,wooldridge2008ic} showed that if the 
effect of $Z$ on $X$ is homogenous, then TSLS will be consistent for 
$\Exp(B_{1})$.  This type of homogeneity assumption is somewhat unsatisfying, 
since accounting for heterogeneity is the main motivation for considering this 
problem to begin with. (See also \cite{litobias2011joe}, who consider Bayesian 
inference in those models.)

An alternative is to consider different instrumental variables estimators 
besides TSLS. \cite*{florensheckmanmeghiretal2008e} take this approach in 
considering a polynomial version of \eqref{eq_model_simple} plus an additive 
nonparametric function of $X$ common to all units. They show that the ATE is 
identified if $X$ is continuously distributed and there exists a function $h$ 
that is strictly increasing in a scalar unobservable $V$ such that $X = h(Z,V)$.  
This type of first stage restriction allows for heterogeneity in the effect of 
$Z$ on $X$, albeit in a limited form, and so directly addresses the concerns 
about previous work by Heckman, Vytlacil and Wooldridge. The utility of the 
first stage restriction is in creating a random variable $R$ which is a control 
function in the sense that $X \independent B \vert R$. A central contribution of 
our paper is to exploit this control function property to provide an alternate 
identification approach to the one considered by 
\cite{florensheckmanmeghiretal2008e}. Our approach has three main benefits 
relative to that of \cite{florensheckmanmeghiretal2008e}.  First, while 
\cite{florensheckmanmeghiretal2008e} require a continuous instrument (see the 
discussion on page \pageref{florens_cts_req_remark}), we can achieve 
identification with binary and discrete instruments in many cases.  Second, our 
approach enables us to include multiple endogenous variables, non-polynomial 
terms and interactions between endogenous variables and covariates in more 
general linear-in-coefficients specifications of \eqref{eq_model_simple}. Third, 
it suggests a computationally straightforward estimator that appears to have 
good finite sample properties. The main drawbacks of our approach relative to 
that of \cite{florensheckmanmeghiretal2008e} is that their model allows for a 
common additive nonparametric function of $X$, and, when $Z$ is continuous, 
their ``measurable separability'' assumption may hold in some cases that our 
corresponding relevance condition does not.

Our results build on recent research on nonparametric identification in 
nonseparable models. A recurring finding in this work is a trade-off between the 
dimension of heterogeneity and the required variation in the instrument $Z$. At 
one extreme lie the papers by \cite{imbensnewey2009e} and \cite{kasy2013wp}, who 
show that unrestricted forms of heterogeneity can be allowed in the outcome 
and/or first stage equations while still attaining point identification of the 
ATE, as long as $Z$ satisfies a large support assumption (they also provide sharp partial identification results when the large support assumption does not hold). Despite their ubiquity across the econometric 
theory literature, such large support assumptions are unlikely to ever be even 
approximately satisfied in practice.  In particular, they rule out the binary 
and discrete instruments that are commonly found in applied work, such as policy 
shifts, institutional changes, and natural experiments. On the other hand, work by 
\cite{chernozhukovhansen2005e}, \cite{torgovitsky2012wpa} and 
\cite{d'haultfouillefevrier2012wp} has shown that binary and discrete 
instruments of this sort can secure identification, as long as the dimension of 
heterogeneity is sufficiently restricted. These restrictions on heterogeneity 
rule out simple, parsimonious specifications like \eqref{eq_model_simple} which 
contain more than one unobservable. Between these two extremes lies the paper by 
\cite{florensheckmanmeghiretal2008e} and also those of \cite{chesher2003e} and 
\cite{masten2012jmp}, both of which require a continuous instrument with small 
support but also allow for additional heterogeneity. Our paper contributes to 
this middle ground and, among other things, provides an example where a broadly 
interesting parameter can be identified in a model with high-dimensional 
heterogeneity and discrete instruments.

The recent work of \cite{grahampowell2012e} (who build on work by 
\citealt{chamberlain1992e}) on CRC models with panel data is related in 
motivation to this paper. Both papers seek to identify the APE---at least among 
some subpopulation---but the analysis is fundamentally different due to 
differences between using panels and instruments as sources of identification.  
Partially related to their paper as well as ours is the literature on random 
uncorrelated coefficient models; for example, \cite{beranhall1992taos} and 
\cite{hoderleinklemelamammen2010et}. That literature assumes $X$ and $(B_0, 
B_1)$ are independent and centers on estimating the distribution of $(B_{0}, 
B_{1})$. In contrast, we limit our focus to identifying averages, but have to 
contend with the difficulty of dependence between $X$ and $(B_{0}, B_{1})$.

An advantage of our identification approach and the linear structure in 
\eqref{eq_model_simple} is that it facilitates estimators that are precise, easy 
to implement, and which do not suffer from the curse of dimensionality. A main 
contribution of our paper is to develop such an estimator of $\Exp(B)$ and 
establish its $\sqrt{n}$--consistency and asymptotic normality. (Due to 
uniformity issues, we actually develop asymptotic theory for an estimator of a 
trimmed version of $\Exp(B)$; see section \ref{sec_asymptotics}.) Our estimator 
is essentially an average of ordinary linear regressions run conditional on a 
realization of a control function and so shares similarities with the control 
function approaches of, for example, \cite{blundellpowell2004troes}, 
\cite{imbensnewey2009e}, \cite{rothe2009joe}, \cite{hoderleinsherman2013cwp4} 
and \cite{torgovitsky2013wp}. The control function is estimated with a first 
stage quantile or distribution regression and the conditioning is approximated 
with kernel weights. Hence, our estimator reduces to a straightforward average 
of weighted linear regressions, where the weights are determined by a first 
stage quantile or distribution regression of $X$ on $Z$. Incorporating 
covariates is a simple matter of including them in these linear mean and 
quantile regressions.  Monte Carlo experiments show that our estimator can 
perform as well or better than the TSLS estimator under conditions when both 
would be consistent, while remaining consistent in situations where TSLS would 
be inconsistent.

We apply our results to study the effect of air pollution on house prices. We 
follow the empirical approach of \cite{chaygreenstone2005jope}, who argue that 
instrumenting is necessary to deal with unobserved economic shocks and sorting 
of households based on unobserved preferences for clean air. They also argue 
that this sorting leads to correlated random coefficients. They define a binary 
instrument based on regulation implemented by the 1970 Clean Air Act Amendments.  
We demonstrate substantial first stage heterogeneity in the effect of this 
instrument, which strongly suggests that the simpler estimators discussed by 
\cite{heckmanvytlacil1998tjohr} and 
\cite{wooldridge1997el,wooldridge2003el,wooldridge2008ic} would be inconsistent for the APE. Likewise, the binary instrument precludes approaches which rely on continuous variation, such as \cite{florensheckmanmeghiretal2008e}.  
For two subsets of counties where the instrument has a statistically significant 
effect on pollution levels, we estimate unweighted APEs of changes in pollution 
on changes in house prices. These estimates demonstrate patterns that are 
consistent with household sorting. Taken together, these estimates along with 
TSLS suggest there is substantial heterogeneity in households' valuation of 
clean air.

The structure of the paper is as follows. In the next section we formally 
discuss the model, assumptions and our identification results. In Section 
\ref{sec_est}, we describe our estimator and discuss its implementation. In 
Section \ref{sec_asymptotics}, we analyze the asymptotic properties of our 
estimator. In Section \ref{sec_mc}, we report the results of Monte Carlo studies 
that demonstrate the performance of our estimator. Finally, in Section 
\ref{sec_app}, we present our application to air pollution and house prices.  
Section \ref{sec_conclusion} concludes.

\section{Model and Identification}
\label{sec_model}

A general version of model \eqref{eq_model_simple} is
\begin{equation}
    \label{eq_model}
    Y = B_{0} + \sum_{j=1}^{d_{x}} B_{j}X_{j} + \sum_{j=1}^{d_{1}} B_{d_{x} + 
        j}Z_{1j} \equiv W'B,
\end{equation}
where $X \in \re^{d_{x}}$ is a vector of potentially endogenous variables, 
$Z_{1} \in \re^{d_{1}}$ is a vector of included exogenous variables with 
$\tth{j}$ component $Z_{1j}$, $W \equiv [1, X', Z_{1}']' \in \re^{d_{w}}$ with 
$d_{w} \equiv 1 + d_{x} + d_{1}$, and $B \in \re^{d_{w}}$ is a vector of 
unobservable variables. In addition to $Z_{1}$, there is a vector of excluded exogenous 
variables (instruments) $Z_{2} \in \re^{d_{2}}$ that do not directly affect $Y$ 
in \eqref{eq_model}. We write the exogenous variables together as $Z \equiv 
[Z_{1}', Z_{2}']' \in \re^{d_{z}}$ with $d_{z} \equiv d_{1} + d_{2}$. 

We divide the vector of endogenous variables into subvectors of lengths $d_{b} 
\geq 1$ and $d_{x} - d_{b} \geq 0$. We refer to the first $d_{b}$ components of 
$X$ as the \emph{basic} endogenous variables and the last $d_{x} - d_{b}$ 
components of $X$ as the \emph{derived} endogenous variables. We assume that the 
basic endogenous variables satisfy a particular first stage structure that is 
specified in the assumptions ahead. In contrast, the derived endogenous 
variables are assumed to be functions of the basic endogenous variables and the 
included exogenous variables $Z_{1}$. For example, we could have $d_{b} = 1$ and 
derived endogenous variables $X_{k} = X^{k}$ for $k > d_{b}$, as in the model of 
\cite{florensheckmanmeghiretal2008e}. Or, we could have $X_{k} = X_{1}Z_{1}$ for 
some $k > d_{b}$ be an interaction variable, which would allow the distribution 
of partial effects to differ arbitrarily across values of $Z_{1}$. This allows, 
for example, men and women to have different distributions of treatment effects, 
allowing for heterogeneity on observables to be dealt with in the usual way.

Throughout our analysis we frequently use the random vector
\[
    R \equiv [F_{X_{1} \vert Z}(X_{1} \vert Z),\ldots,F_{X_{d_{b}} \vert 
        Z}(X_{d_{b}} \vert Z)]',
\]
which we refer to as the \emph{conditional rank} of $X$. We are only concerned 
with the conditional ranks of the basic endogenous variables, since under our 
assumptions the conditional ranks of the derived endogenous variables $F_{X_k 
    \mid Z}(X_k \vert Z)$ for $k=d_b+1,\ldots, d_x$ will contain less 
information. Below, we will restrict $X_{k}$ to be continuously distributed for 
$k = 1,\ldots,d_{b}$ so that $R_{k}$ is distributed uniformly on $[0,1]$ for 
these $k$. Note, however, that if $d_{b} > 1$ then the support of $R$ may be a 
proper subset of $[0,1]^{d_{b}}$. Consider the following assumptions.

\begin{enumerate}[align=left, leftmargin=*, widest=I4.]
    \itshape
    \renewcommand{\theenumi}{Assumption I}
    \renewcommand{\labelenumi}{\textbf{\theenumi.}}
    \item
    \setcounter{enumi}{0}

    \renewcommand{\theenumi}{I\arabic{enumi}}
    \renewcommand{\labelenumi}{\textbf{\theenumi.}}
    \item \label{as_moments} \textbf{(Existence of moments)} $\Exp( B ) < 
        \infty$ and $\Exp( \| W \|^2 ) < \infty$.

    \item \label{as_fs} \textbf{(First stage equation)} For each basic 
        endogenous variable $k = 1,\ldots,d_{b}$, there exists a scalar random 
        variable $V_k$ and a possibly unknown function $h_{k}$ that is strictly increasing in its second argument, for which $X_{k} = h_{k}(Z, V_{k})$. The vector $V \equiv (V_{1},\ldots,V_{d_{b}})$ is continuously distributed.

    \item \label{as_derived} \textbf{(Derived endogenous variables)} For each $k 
        = d_{b} + 1,\ldots,d_{x}$, there exists a known function $g_{k}$ such 
        that $X_{k} = g_{k}(X_{1}\,\ldots,X_{d_{b}}, Z_{1})$.

    \item \label{as_ex} \textbf{(Instrument exogeneity)} $(B, V) \independent 
        Z$.

    \item \label{as_rel} \textbf{(Instrument relevance)} $\Exp[WW' \vert R = r]$ 
        is invertible for almost every $r$ in a known Lebesgue measurable set 
        $\R \subseteq \supp(R)$.
\end{enumerate}

\begin{theorem}
    \label{thm_id}
    Define $\beta(r) \equiv \Exp[B \vert R = r]$.  Under Assumptions I,
    \begin{align*}
        \beta(r) = \Exp[WW' \vert R = r]^{-1}\Exp[WY \vert R = r]
    \end{align*}
    for any $r \equiv (r_{1},\ldots,r_{d_{b}}) \in \R$. Hence both $\beta(r)$ 
    and $\beta_{\R} \equiv \Exp[B \vert R \in \R]$ are point identified.
\end{theorem}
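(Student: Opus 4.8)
The plan is to convert the structural equation into a linear system for $\beta(r)$. Because $Y = W'B$ holds identically, left-multiplying by $W$ gives $WY = WW'B$, and conditioning on $R = r$ yields
\[
    \Exp[WY \vert R = r] = \Exp[WW'B \vert R = r]
\]
for every $r \in \supp(R)$ (the moment conditions in Assumption~\ref{as_moments}, together with standard Cauchy--Schwarz and conditioning arguments, make these conditional expectations well defined). Everything then reduces to the ``control function'' factorization $\Exp[WW'B \vert R = r] = \Exp[WW' \vert R = r]\,\beta(r)$; once this is in hand, Assumption~\ref{as_rel} supplies the invertibility needed to solve for $\beta(r)$ on $\R$.

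The factorization is obtained in three steps. \emph{(i) Rank invariance.} For each basic $k$, Assumption~\ref{as_fs} gives $X_k = h_k(Z, V_k)$ with $h_k$ strictly increasing in its second argument, so $\{X_k \le x\} = \{V_k \le h_k^{-1}(Z, x)\}$ and hence $R_k = F_{X_k \vert Z}(X_k \vert Z) = F_{V_k \vert Z}(V_k \vert Z)$; by Assumption~\ref{as_ex}, $V \independent Z$, so $R_k = F_{V_k}(V_k)$. Since $V$ is continuously distributed, each map $v_k \mapsto F_{V_k}(v_k)$ is almost surely strictly increasing, so $R$ is an a.s.\ invertible coordinatewise transformation of $V$; thus $\sigma(R) = \sigma(V)$ up to null sets, and conditioning on $R = r$ is the same as conditioning on the corresponding value of $V$. \emph{(ii) $W$ depends only on $(Z,V)$.} We have $W = [1, X', Z_1']'$, where the basic $X_k$ are functions of $(Z,V)$ by Assumption~\ref{as_fs}, the derived $X_k = g_k(X_1, \ldots, X_{d_b}, Z_1)$ by Assumption~\ref{as_derived}, and $Z_1$ is a subvector of $Z$; hence $WW'$ is $\sigma(R,Z)$-measurable. \emph{(iii) Exogeneity.} From $(B,V) \independent Z$ (Assumption~\ref{as_ex}) it follows that $B \independent Z \vert V$, hence $\Exp[B \vert R, Z] = \Exp[B \vert R]$. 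Combining, a tower-property calculation gives $\Exp[WW'B \vert R] = \Exp[\,WW'\,\Exp[B \vert R, Z] \,\vert\, R\,] = \Exp[\,WW'\,\Exp[B \vert R] \,\vert\, R\,] = \Exp[WW' \vert R]\,\Exp[B \vert R]$, which is the desired identity.

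Assembling the pieces: for almost every $r \in \R$, $\Exp[WY \vert R = r] = \Exp[WW' \vert R = r]\,\beta(r)$, and since $\Exp[WW' \vert R = r]$ is finite (Assumption~\ref{as_moments}) and invertible (Assumption~\ref{as_rel}) on $\R$, we get $\beta(r) = \Exp[WW' \vert R = r]^{-1}\Exp[WY \vert R = r]$. The right-hand side is a functional of the distribution of the observables---$Y$, $W$, and $Z$ are observed, and $R$ is a known function of $(X,Z)$ because each $F_{X_k \vert Z}$ is identified from the law of $(X,Z)$---so $\beta(r)$ is point identified for $r \in \R$. For $\beta_{\R}$, the law of iterated expectations gives $\beta_{\R} = \Exp[B \vert R \in \R] = \Exp[\,\Exp[B \vert R] \,\vert\, R \in \R\,] = \Exp[\beta(R) \vert R \in \R]$ (well defined once $\Prob(R \in \R) > 0$), which is identified since $\beta(\cdot)$ is identified on $\R$ and the conditional law of $R$ given $R \in \R$ is identified.

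I expect the main obstacle to be step (i): making the identity $R_k = F_{V_k}(V_k)$ and the informational equivalence $\sigma(R) = \sigma(V)$ fully rigorous---handling the ``almost every'' qualifiers and the possible flat portions of $F_{V_k}$ on gaps in $\supp(V_k)$---and, relatedly, the careful deduction of $B \independent Z \vert V$ from the joint exogeneity statement $(B,V) \independent Z$. The remaining steps---the measurability bookkeeping for $W$, the tower-property factorization, and the final matrix inversion---are routine given Assumptions~\ref{as_ex}, \ref{as_rel}, and the moments in Assumption~\ref{as_moments}.
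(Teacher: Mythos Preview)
Your proposal is correct and follows essentially the same route as the paper: premultiply $Y=W'B$ by $W$, condition on $R=r$, factor $\Exp[WW'B\mid R=r]=\Exp[WW'\mid R=r]\beta(r)$ via the control-function property, and invert using \ref{as_rel}. The only cosmetic difference is that the paper packages your steps (i)--(iii) into a separate Proposition (establishing the slightly stronger $W\independent B\mid R$ rather than your sufficient mean-independence $\Exp[B\mid R,Z]=\Exp[B\mid R]$), whereas you prove the needed factorization inline.
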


The proof of Theorem \ref{thm_id} uses the following implication of \ref{as_fs} 
and \ref{as_ex}, which has been used and analyzed in various forms by 
\cite{imbens2007ic}, \cite{florensheckmanmeghiretal2008e}, 
\cite{imbensnewey2009e}, \cite{kasy2010et} and \cite{torgovitsky2012wpa}. Since 
our version is a slight extension, we provide a short proof in the appendix.

\begin{proposition}
    \label{prop_controlvariable}
    \ref{as_fs} and \ref{as_ex} imply that $(R, B) 
    \independent Z$. If \ref{as_derived} also holds, then $W \independent B \vert R$.
\end{proposition}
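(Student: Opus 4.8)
The plan is to show that Assumptions \ref{as_fs} and \ref{as_ex} force the conditional rank $R$ and the first-stage disturbance $V$ to generate the same information (up to $\Prob$-null sets), after which both conclusions follow from short conditional-independence arguments. So the first thing I would prove is the rank identity $R_{k} \equalas \DistSmall{V_{k}}{V_{k}}$ for each $k = 1,\ldots,d_{b}$. Since $h_{k}(z,\cdot)$ is strictly increasing, $h_{k}(z,u) \leq h_{k}(z,v) \iff u \leq v$, so for every $z$ we have $\CDistSmall{X_{k}}{Z}{h_{k}(z,v)}{z} = \CProb{V_{k} \leq v}{Z = z} = \CDistSmall{V_{k}}{Z}{v}{z}$; evaluating at $v = V_{k}$ and using $X_{k} = h_{k}(Z,V_{k})$ gives $R_{k} = \CDistSmall{X_{k}}{Z}{X_{k}}{Z} \equalas \CDistSmall{V_{k}}{Z}{V_{k}}{Z}$, and since \ref{as_ex} implies $V \independent Z$ this last quantity equals $\DistSmall{V_{k}}{V_{k}}$. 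Hence $R$ is a.s.\ a measurable function of $V$.

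For the converse direction, \ref{as_fs} makes each $V_{k}$ continuously distributed, so $\DistSmall{V_{k}}{\cdot}$ is continuous and the quantile transform satisfies $\QuantSmall{V_{k}}{\DistSmall{V_{k}}{V_{k}}} \equalas V_{k}$, where $\QuantSmall{V_{k}}{\cdot}$ is the left-continuous generalized inverse; thus $V_{k} \equalas \QuantSmall{V_{k}}{R_{k}}$ and $V$ is a.s.\ a measurable function of $R$. Consequently $\sigma(R)$ and $\sigma(V)$ agree up to $\Prob$-null sets, so $\CExpB{\cdot}{R}$ and $\CExpB{\cdot}{V}$ coincide on integrable random variables. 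The first claim is then immediate: $(B,R)$ is a.s.\ a measurable function of $(B,V)$ by the previous paragraph, and $(B,V) \independent Z$ by \ref{as_ex}, hence $(B,R) \independent Z$.

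For the second claim I would add \ref{as_derived} and argue conditionally on $V$. Fix $V = v$: each basic endogenous variable is $X_{k} = h_{k}(Z,v_{k})$, hence a function of $Z$; by \ref{as_derived} each derived endogenous variable $X_{k} = g_{k}(X_{1},\ldots,X_{d_{b}},Z_{1})$ is then also a function of $Z$; and $Z_{1}$ is trivially a function of $Z$. So conditional on $V$, the regressor vector $W = [1,X',Z_{1}']'$ is a measurable function of $Z$ alone. On the other hand $(B,V) \independent Z$ implies $B \independent Z \mid V$. Therefore, for a.e.\ $v$, conditional on $V = v$ the vector $B$ is independent of $Z$ while $W$ is a function of $Z$, so $B \independent W \mid V = v$; integrating over $v$ gives $B \independent W \mid V$, which by the equivalence of $\sigma(R)$ and $\sigma(V)$ is $B \independent W \mid R$, i.e.\ $W \independent B \mid R$.

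The step I expect to be the main obstacle is the converse half of the rank equivalence --- recovering $V$ from $R$. This is the only place the continuity of $V$ in \ref{as_fs} is used, and it is essential: if $\sigma(R)$ were strictly coarser than $\sigma(V)$, then $B \independent W \mid V$ would not in general imply $B \independent W \mid R$. The forward rank identity, by contrast, needs only strict monotonicity of $h_{k}$ (no continuity of $h_{k}$ or of $V$), but the write-up should handle generalized inverses carefully and note that $h_{k}$, $g_{k}$, and the composed map $(Z,V) \mapsto W$ are automatically Borel measurable so that all the conditioning operations above are well defined.
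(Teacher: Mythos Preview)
Your proof is correct, but it takes a somewhat different route from the paper's. You first establish the two-sided rank identity $\sigma(R) = \sigma(V)$ (up to null sets), then prove both conclusions by working with $V$ and transferring to $R$ at the end. The paper instead works directly with $R$ throughout: for the first claim it computes $\Prob[R_{k} \leq r_{k}\ \forall k,\ B \leq b \mid Z = z]$ by a chain of equalities (continuity of $X_{k}\mid Z$, monotonicity of $h_{k}$, then \ref{as_ex}) and observes the result does not depend on $z$; for the second claim it notes that, conditional on $R = r$, each basic $X_{k}$ equals $Q_{X_{k}\mid Z}(r_{k}\mid Z)$ and hence $W$ is a function of $Z$, and then invokes $Z \independent B \mid R$, which is an immediate corollary of the first claim.

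The practical difference is that the paper never needs to pass back and forth between $R$ and $V$: once $(R,B) \independent Z$ is in hand, the second claim follows in one line from the observation that $W$ is $\sigma(Z,R)$-measurable. Your detour through $V$ is not wrong, but the ``converse half'' you flag as the main obstacle is avoidable---conditioning on $R$ directly already makes $W$ a function of $Z$, so you can drop the step $B \independent W \mid V \Rightarrow B \independent W \mid R$ entirely. On the other hand, your explicit identity $R_{k} \equalas F_{V_{k}}(V_{k})$ is a clean statement worth recording, and your remark about measurability of the composed maps is a nice bit of housekeeping the paper leaves implicit.
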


\begin{proof}[\textbf{Proof of Theorem \ref{thm_id}}]
\ref{as_moments} ensures that all conditional moments of interest exist. Premultiplying both sides of \eqref{eq_model} by $W$ and taking expectations 
    conditional on $R = r$ for any $r \in \R$, we have
    \begin{align*}
        \Exp[WY \vert R = r] = \Exp[WW'B \vert R = r] = \Exp[WW' \vert R = 
        r]\beta(r),
    \end{align*}
    by Proposition \ref{prop_controlvariable}. Given \ref{as_rel}, we can 
    premultiply both sides by the inverse of $\Exp[WW' \vert R = r]$ to obtain 
    the claimed expression for $\beta(r)$. Since $\Exp[WW' \vert R = r]^{-1}$ 
    and $\Exp[WY \vert R = r]$ are features of the observable data, this shows 
    that $\beta(r)$ and $\beta_{\R} \equiv \Exp[B \vert R \in \R]$ are both 
    identified for any known $\mathcal{R} \subseteq \supp(R)$ satisfying 
    \ref{as_rel}.
\end{proof}

    The intuition behind Theorem \ref{thm_id} is as follows. After conditioning 
    on $R = r$, all of the variation in the basic endogenous variables is due to 
    variation in $Z$, by the definition of $R$. Since the derived endogenous 
    variables are functions of the basic endogenous variables and $Z_{1}$, all 
    of the variation in $W$ conditional on $R = r$ is also due to variation in 
    $Z$. Variation in $Z$, however, is independent of $B$ conditional on $R = r$ 
    by instrument exogeneity (\ref{as_ex}) via Proposition 
    \ref{prop_controlvariable}. As a result, a linear regression of $Y$ on $X$ 
    conditional on $R = r$ identifies $\beta(r) \equiv \Exp[B \vert R = r]$.  
    Averaging $\Exp[B \vert R = r]$ over $r \in \mathcal{R}$ then yields 
    $\beta_{\R} \equiv \Exp[B \vert R \in \mathcal{R}]$. If instrument relevance 
    (\ref{as_rel}) holds for some measure one subset of $\supp(R)$, then 
    $\beta_{\R} = \Exp[B]$ is identified. This intuition is illustrated in 
    Figure \ref{fig_id_intuition}.

    \begin{figure}[h]
        \centering
        \input{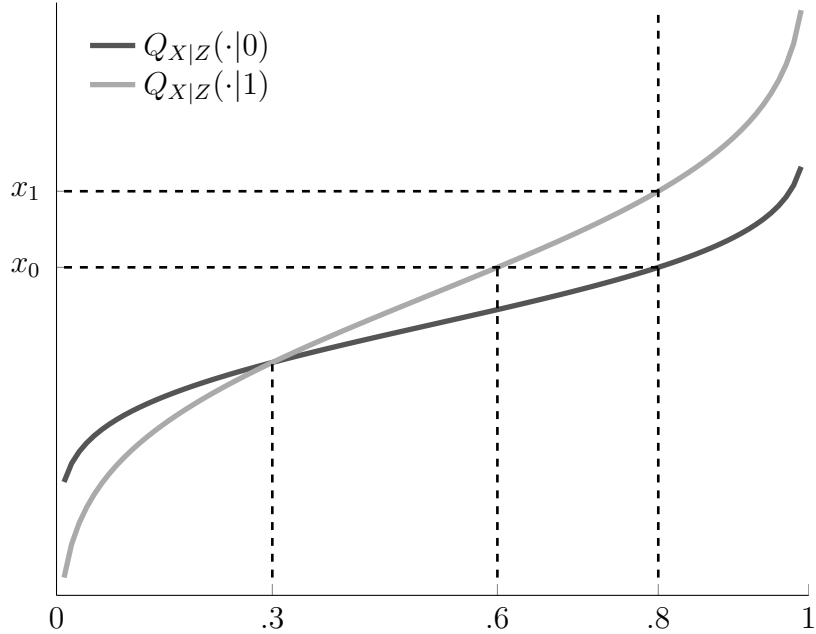}
        \caption{\label{fig_id_intuition} Consider the simple CRC model 
            $\eqref{eq_model_simple}$, $Y = B_0 + B_1 X$, and suppose $Z$ is 
            binary. Conditional on $R = 0.8$, $X$ assumes two values ($x_0$ and 
            $x_1$) depending on the realization of $Z$.  Since $Z \independent B 
            \mid \{ R = 0.8 \}$, a mean regression of $Y$ on $X$ conditional on $R = 
            0.8$ identifies the means of the intercept and slope 
            coefficients, $\Exp(B_0 \mid R=0.8)$ and $\Exp(B_1 \mid R=0.8)$. For 
            the plotted quantile functions, the relevance condition \ref{as_rel} 
            holds for almost every $r \in (0,1)$, since the curves intersect at 
            only one point. Hence the previous argument yields $\Exp(B \mid 
            R=r)$ for all $r \in (0,1)$ and averaging then gives $\Exp(B)$. Note 
            also that the instrument's effect is nonmonotonic---it is positive 
            for units with large $R$ (above $R=0.3$) and negative for units with 
            small $R$.}
    \end{figure}

Theorem \ref{thm_id} is complementary to a result by 
\cite{florensheckmanmeghiretal2008e}. Those authors consider a model with a 
single basic endogenous variable $X$ and the outcome equation
\[
    Y = \varphi(X) + B_0 + B_1 X + B_2 X^2 + \cdots + B_K X^K,
\]
for some pre-specified $K$, where $\varphi$ is an unknown function, and 
$(B_0,\ldots,B_K)$ are random coefficients that are potentially correlated with 
$X$. Except for $\varphi$, this outcome equation can be obtained from 
\eqref{eq_model} with basic endogenous variable $X$, and derived endogenous 
variables $(X^{2},\ldots,X^{K})$. The price of including the $\varphi$ function 
is that \cite{florensheckmanmeghiretal2008e} require a continuous small support 
instrument (see their identification proof on page 1203, the step from equation 
10 to the next line).\label{florens_cts_req_remark} We do not include the 
$\varphi$ function, but are generally able to achieve identification of the 
average coefficients in the polynomial outcome equation model so long as the 
distribution of $Z$ has at least $K + 1$ support points.  
\cite{florensheckmanmeghiretal2008e} also maintain \ref{as_fs} and \ref{as_ex}, 
but in place of \ref{as_rel} they impose a ``measurable separability'' condition 
that is somewhat high-level. As those authors discuss, their measurable 
separability condition may fail if the first stage equation is not continuous in 
$V$. In contrast, our relevance condition \ref{as_rel} does not require such 
continuity. This allows for the support of $X$ conditional on $Z$ to be disjoint.

\ref{as_rel} is directly analogous to the standard no-multicollinearity 
condition in ordinary least squares and consequently requires the analyst to 
avoid standard causes of failure, such as the dummy variable trap. When $d_x = 
d_b = 1$, so that there is a single basic endogenous variable and no derived 
endogenous variables, \ref{as_rel} requires that $\Var[Q_{X \mid Z}(r \mid Z)] > 
0$ for all $r \in \mathcal{R}$. If $Z \in \{0,1\}$ is binary, then $\Var[Q_{X 
    \mid Z}(r \mid Z)] > 0$ happens if and only if $Q_{X \mid Z}(r \mid 0) \neq 
Q_{X \mid Z}(r \mid 1)$; that is, the two curves in Figure 
\ref{fig_id_intuition} are separated at $r$. Since $Q_{X \mid Z}(r \mid Z) = 
h[Z,Q_V(r)]$ by strict monotonicity (\ref{as_fs}) and independence 
(\ref{as_ex}), we must have that for each $r \in \mathcal{R}$ there are distinct 
$z, z' \in \supp(Z)$ with $h[z,Q_V(r)] \neq h[z', Q_V(r)]$. Hence, for all units 
with first stage unobservables $v = Q_V(r)$ for which we want to learn 
$\Exp(B)$, the instrument must affect those units' endogenous variable. 
Generally, whether \ref{as_rel} holds is an empirical matter in the sense that 
the condition only depends on the distribution of observables and so, at least 
in principle, can be checked in the data. 

    When \ref{as_rel} only holds for some proper subset $\R$ of $\supp(R)$ then 
    Theorem \ref{thm_id} identifies $\beta_{\R} \equiv \Exp[B \vert R \in 
    \mathcal{R}]$, which generally will not equal $\Exp[B]$. Nevertheless, 
    $\beta_{\R}$ has an interpretation similar to the unweighted LATE of 
    \cite{imbensangrist1994e}. That is, $\beta_{\R}$ is the unweighted average 
    of $B$ for those agents for whom the instrument has an effect. Note that we 
    do not require this effect to be monotonic. If $Z$ is assumed to have a 
    monotonic effect on $X$ (as in \citealt{imbensangrist1994e}), then 
    $\beta_{\R}$ is the unweighted average of $B$ for those agents who are 
    induced to increase their treatment intensity $X$ due to a change in $Z$.  
    This type of parameter may be of comparable (or even greater) interest than 
    $\Exp[B]$ for a policy maker considering a policy change that affects the 
    determination of $X$ through an incentive $Z$.

    While \ref{as_rel} may fail for some subset of $\supp(R)$, it is an 
    intuitively appealing requirement for an instrument. Agents characterized by 
    an $r$ at which $\Exp[WW' \vert R = r]$ is singular do not experience 
    independent variation in $W$ due to variation in $Z$, and so it is natural 
    that $\Exp[B \vert R = r]$ should not be identifiable for those agents.  
    Assuming that the effect of $Z$ on $X$ is homogenous, as in 
    \cite{heckmanvytlacil1998tjohr} and 
    \cite{wooldridge1997el,wooldridge2003el,wooldridge2008ic}, ignores this 
    distinction and explicitly includes agents in the average for whom the 
    instrument might be completely ineffectual---in effect, extrapolating from 
    $Z$-sensitive agents to $Z$-insensitive agents. Similarly, the measurable 
    separability condition of \cite{florensheckmanmeghiretal2008e} could 
    apparently hold even if there is a non-negligible subset of agents for whom 
    the instrument is irrelevant.

    As in standard linear regression analysis, identification of $\Exp(B)$ (or 
    some conditional version of it) via Theorem \ref{thm_id} provides 
    identification of the ATE and APE when the outcome equation includes 
    nonlinear functions of $X$ or interactions with covariates $Z_{1}$.  This is 
    an elementary point, but we mention it for clarity. Suppose that $Y = B_{0} 
    + B_{1}X + B_{2}XZ_{1}$.  Then the APE is given by $\Exp[B_{1}] + 
    \Exp[B_{2}]\Exp[Z_{1}]$ while the ATE for an exogenous change from $x$ to 
    $\overline{x}$ is given by $(\Exp[B_{1}] + 
    \Exp[B_{2}]\Exp[Z_{1}])(\overline{x} - x)$. Both of these quantities can be 
    obtained from estimates of $\Exp[B_{1}]$, $\Exp[B_{2}]$ and
    $\Exp[Z_{1}]$. Alternatively, an analyst may be interested in the APE for 
    some predetermined value of $z_{1}$, which would be given by $\Exp[B_{1}] + 
    \Exp[B_{2}]z_{1}$. When \eqref{eq_model} contains nonlinear terms, e.g. $Y = 
    B_{0} + B_{1}X + B_{2}X^{2}$, then an analyst may be more interested in 
    reporting $\Exp[B_{1}] + 2\Exp[B_{2}]x$ as the APE when $X$ is exogenously 
    set to $x$.  All of these quantities can be obtained after applying our 
    identification results.

    Among the maintained assumptions for Theorem \ref{thm_id}, \ref{as_fs} is 
    generally the most controversial. While it is more flexible than the 
    homogenous effect specifications of \cite{heckmanvytlacil1998tjohr} and 
    \cite{wooldridge1997el,wooldridge2003el,wooldridge2008ic}, it does restrict 
    the basic endogenous variables to be continuous and also limits the 
    heterogeneity in their first stage equations to have dimension one.  
    One-dimensional heterogeneity of the sort in \ref{as_fs} can be interpreted 
    as ``rank invariance'' in the effect of $Z$ on each basic component of
    $X$. (The concept of rank invariance was first introduced by 
    \citealt{doksum1974taos}.) Rank invariance means that the ordinal ranking of 
    any two agents in terms of any component of $X_{k}$ ($k \leq d_{b}$) would 
    be the same if both agents received the same realization of $Z$, for any 
    realization of $Z$. See \cite{heckmansmithclements1997troes}, 
    \cite{chernozhukovhansen2005e} and \cite{torgovitsky2012wpa} for further 
    discussions of rank invariance. While one-dimensional heterogeneity is 
    restrictive, there are few alternatives in the literature that allow for 
    high-dimensional heterogeneity in both the outcome and first stage equations 
    while attaining point identification of a broadly interpretable parameter.  
    An important exception to this is the work of \cite{kasy2013wp}, who obtains 
    such a result but under the assumption that $Z$ affects a scalar $X$ 
    monotonically and also has large support.
    
Assumptions \ref{as_fs} and \ref{as_ex} together generally imply that 
correlation between $X$ and the random coefficients $B$ must occur through $V$. 
For example, specifying $B$ as a direct function of $X$, such as setting $B_0 = 
X$, implies that, conditional on $R$, some variation remaining in $B$ is due to 
$Z$, and hence \ref{as_ex} will typically not hold. Thus, \ref{as_fs} and 
\ref{as_ex} should be viewed as also placing restrictions on the manner in which 
$X$ and $B$ may be dependent. This point is not unique to our model---even in a 
simple textbook model \eqref{eq_model_simple} with a constant $B_1$, data 
generating processes like $B_0 = X$ will violate the usual uncorrelatedness 
assumption $\Exp(Z B_0) = 0$. Consequently, if we wish to express model 
\eqref{eq_model_simple} in terms of potential outcomes, it is helpful to view 
$(B_0,B_1,V)$ as unobserved heterogeneity parameters which are intrinsic to each 
unit. After the instrument is assigned, the value of $X$ is determined via the 
first stage equation of \ref{as_fs} and then the value of $Y$ is determined 
through \eqref{eq_model_simple}. Thus the average partial effect $\Exp(B_1)$ 
tells us the average effect of exogenously increasing $X$ by one for all units.

In addition to the overall average of $\Exp(B)$ identified in Theorem 
\ref{thm_id}, the following result shows that averages for groups determined by 
their treatment intensity are also identified. This parameter is analogous to 
the ``effect of the treatment on the treated'' parameter defined in 
\cite{florensheckmanmeghiretal2008e}. 

\begin{theorem}\label{thm_att}
Under Assumptions I, the ``average effect of treatment on the treated'' 
parameter $\Exp(B \mid X_{k} = x_{k}, k \leq d_{b})$ is point identified for any 
$x = (x_1,\ldots,x_{d_b}) \in \supp(X_{1},\ldots,X_{d_{b}})$ such that
\[
    \left\{ \left(F_{X_1 \vert Z}(x_1 \vert z),\ldots, F_{X_{d_b} \vert 
                Z}(x_{d_b} \vert z) \right) : z \in \supp(Z \vert 
        (X_{1},\ldots,X_{d_{b}})=x) 
    \right\} \subseteq \mathcal{R}.
\]
\end{theorem}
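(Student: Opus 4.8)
The plan is to write the ``treatment on the treated'' parameter as an average of the function $\beta(\cdot)$ identified in Theorem~\ref{thm_id}, integrated against an observable conditional distribution. Write $\widetilde{X} \equiv (X_1,\ldots,X_{d_b})$ and $\widetilde{x} \equiv (x_1,\ldots,x_{d_b})$, and for $z \in \supp(Z)$ set $r(z) \equiv (F_{X_1 \vert Z}(x_1 \vert z),\ldots,F_{X_{d_b} \vert Z}(x_{d_b} \vert z))$, so that the hypothesis of the theorem is precisely that $r(z) \in \mathcal{R}$ for every $z \in \supp(Z \mid \widetilde{X} = \widetilde{x})$. By \ref{as_moments} all the conditional expectations below exist, and the law of iterated expectations gives
\[
    \Exp(B \mid \widetilde{X} = \widetilde{x}) = \Exp\big[\, \Exp(B \mid \widetilde{X} = \widetilde{x},\, Z) \mid \widetilde{X} = \widetilde{x} \,\big].
\]
So it suffices to show (i) $\Exp(B \mid \widetilde{X} = \widetilde{x}, Z = z) = \beta(r(z))$ for $z$ in the conditional support of $Z$, and (ii) that the outer expectation is an integral against an identified law.

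For (i) the key point is that conditioning on $(Z, \widetilde{X})$ is equivalent to conditioning on $(Z, R)$ up to null sets. One inclusion is trivial since $R$ is a function of $(Z, \widetilde{X})$ by construction. For the reverse, fix $z$ and observe that $X_k = Q_{X_k \vert Z}(R_k \vert z)$ holds $\Prob(\cdot \mid Z = z)$-almost surely for each $k \leq d_b$: because $V$ is continuously distributed (\ref{as_fs}) and $h_k(z, \cdot)$ is strictly increasing, $X_k$ is atomless conditional on $Z = z$, so the conditional quantile function is a left inverse of the conditional c.d.f.\ off a conditionally null set, even when $\supp(X_k \mid Z = z)$ is disconnected. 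Hence $\widetilde{X}$ is a.s.\ recoverable from $(Z, R)$, so $\sigma(Z, \widetilde{X}) = \sigma(Z, R)$ modulo null sets and therefore $\Exp(B \mid Z, \widetilde{X}) = \Exp(B \mid Z, R)$ almost surely. Now invoke Proposition~\ref{prop_controlvariable}: $(R, B) \independent Z$ implies $B \independent Z \mid R$, so $\Exp(B \mid Z, R) = \Exp(B \mid R)$ a.s., which equals $\beta(R)$ on the event $\{R \in \mathcal{R}\}$ by Theorem~\ref{thm_id}. Evaluating this chain of identities at $\widetilde{X} = \widetilde{x}$, $Z = z$, where $R = r(z)$, and using the hypothesis that $r(z) \in \mathcal{R}$, yields $\Exp(B \mid \widetilde{X} = \widetilde{x}, Z = z) = \beta(r(z))$.

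Substituting this into the iterated-expectations identity gives
\[
    \Exp(B \mid X_k = x_k,\, k \leq d_b) = \int \beta(r(z)) \, dF_{Z \mid \widetilde{X}}(z \mid \widetilde{x}).
\]
Every object on the right is identified: the integrating measure $F_{Z \mid \widetilde{X}}(\cdot \mid \widetilde{x})$ and the maps $z \mapsto F_{X_k \vert Z}(x_k \vert z)$ defining $r(z)$ are features of the distribution of $(X, Z)$; $\beta$ is identified on all of $\mathcal{R}$ by Theorem~\ref{thm_id}; and the hypothesis guarantees that the integrand evaluates $\beta$ only at points of $\mathcal{R}$, so the integral is well defined. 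Integrability follows from \ref{as_moments}.

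The step I expect to require the most care is the measurability argument in (i) --- making rigorous, via regular conditional distributions, the heuristic that the (probability-zero) event $\{\widetilde{X} = \widetilde{x}, Z = z\}$ coincides with $\{R = r(z), Z = z\}$. The delicate feature is that \ref{as_rel} is compatible with the conditional support of a basic endogenous variable being disconnected, so $F_{X_k \vert Z}(\cdot \mid z)$ need not be strictly increasing and $Q_{X_k \vert Z}(\cdot \vert z)$ recovers $X_k$ only on a set of conditional probability one; one has to verify that this suffices to reconstruct $\widetilde{X}$ from $(Z, R)$ almost surely. The remaining steps are routine applications of iterated expectations and Theorem~\ref{thm_id}.
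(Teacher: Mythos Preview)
Your proposal is correct and follows essentially the same route as the paper: iterated expectations conditional on $\widetilde{X}=x$, reduction of the inner expectation to $\beta(r(z))$ via Proposition~\ref{prop_controlvariable}, and then integration against the observable law of $Z \mid \widetilde{X}=x$. The only cosmetic difference is that the paper first conditions on $R$ and then drops $\widetilde{X}$ using $W \independent B \mid R$, whereas you first condition on $Z$, establish $\sigma(Z,\widetilde{X})=\sigma(Z,R)$ up to null sets, and then drop $Z$ using $(R,B)\independent Z$; both invoke the same proposition and arrive at the same integral representation.
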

	
\begin{proof}[\textbf{Proof of Theorem \ref{thm_att}}]
From the proof of Theorem \ref{thm_id}, $\beta(r) \equiv \Exp(B \mid R=r)$ is 
identified for all $r \in \mathcal{R}$. For notational convenience, let 
$\widetilde{X} \equiv (X_{1},\ldots,X_{d_{b}})$. By iterated expectations, the 
definition of $R$, and Proposition \ref{prop_controlvariable}, we have
\begin{align*}
    \Exp(B \mid \widetilde{X}=x)
    &= \Exp_{R \mid \widetilde{X}}[ \Exp( B \mid \widetilde{X}=x, R) \mid 
    \widetilde{X}=x] \\
    &= \Exp_{Z \mid \widetilde{X}}[ \Exp( B \mid \widetilde{X}=x, R= (F_{X_1 
        \vert Z}(x_1 \vert Z),\ldots,F_{X_{d_b} \vert Z}(x_{d_b} \vert Z)) ) 
    \mid \widetilde{X}=x] \\
    &= \Exp_{Z \mid \widetilde{X}}[ \Exp( B \mid R= (F_{X_1 \vert Z}(x_1 \vert 
    Z),\ldots,F_{X_{d_b} \vert Z}(x_{d_b} \vert Z)) )\mid \widetilde{X}=x] \\
    &= \Exp_{Z \mid \widetilde{X}}[ \beta((F_{X_1 \vert Z}(x_1 \vert 
    Z),\ldots,F_{X_{d_b} \vert Z}(x_{d_b} \vert Z))) \mid \widetilde{X}=x],
\end{align*}
which is identified since $(F_{X_1 \vert Z}(x_1 \vert z),\ldots, F_{X_{d_b} 
    \vert Z}(x_{d_b} \vert z) ) \in \mathcal{R}$ for all $z \in \supp(Z \vert 
\widetilde{X}=x)$.
\end{proof}

The support condition in Theorem \ref{thm_att} holds trivially if $\mathcal{R} = 
(0,1)$. To interpret the support condition when $\mathcal{R}$ is a strict subset 
of $(0,1)$, suppose for simplicity there is a single basic endogenous variable.  
Then the condition states that for every $z \in \supp(Z)$ such that there is an 
$r$ with $x = h[z,Q_V(r)]$, or equivalently $x = Q_{X \vert Z}(r \vert z)$, that 
$r$ must be such that we can identify $\beta(r)$. That is, the value $x$ must be 
obtainable via some $r$ for which we can identify $\beta(r)$. For example, in 
the simple model $Y = B_0 + B_1 X$, $\Var[ F_{X \mid Z}(x \mid Z) ] > 0$ is 
sufficient for the support condition. This variance condition says there are at 
least two different instrument values $z$ and $z'$ which could have yielded $x$, 
and which correspond to different conditional ranks $r$ and $r'$. That is, $x = 
h[z,Q_V(r)] = h[z', Q_V(r')]$. By strict monotonicity in the first stage 
equation, $h[z,Q_V(r)] \neq h[z,Q_V(r')]$ and $h[z', Q_V(r)] \neq h[z', 
Q_V(r')]$. Thus $h[z,Q_V(r)] \neq h[z',Q_V(r)]$ and $h[z,Q_V(r')] \neq 
h[z',Q_V(r')]$, and hence the relevance condition \ref{as_rel} holds so that $r, 
r' \in \mathcal{R}$. For example, see Figure \ref{fig_id_intuition} in which 
$x_0$ can be obtained via either $(z,r) = (1,0.6)$ or $(z',r') = (0,0.8)$. Both 
$r=0.6$ and $r' = 0.8$ are points at which $\beta(r)$ is identified. The figure 
also shows why this condition is not necessary: consider an $x$ value much 
larger than $x_1$. Such a value may be obtained only through $z=1$, and yet the 
corresponding conditional rank may be a point at which we can identify 
$\beta(r)$.

The treatment on the treated parameter $\Exp(B \mid (X_{1},\ldots,X_{d_{b}}) = 
x)$ provides one way of exploring heterogeneity in treatment effects. A truly 
constant treatment effect would yield a function $\Exp(B \mid 
(X_{1},\ldots,X_{d_{b}}) = x)$ which is constant over $x$.  An increasing 
function would show positive correlation between received treatment and the 
coefficients, while a decreasing function would show negative correlation 
between received treatment and the coefficients.  Indeed, if $\mathcal{R} = 
(0,1)^{d_{b}}$ then $\Exp(B \mid (X_{1},\ldots,X_{d_{b}}) = x)$ is identified 
for all $x \in \supp(X_{1},\ldots,X_{d_{b}})$ and hence the correlations 
$\Exp[B_{j}X_l ] = \Exp( \Exp[ B_{j} \mid X_k = x_k, k \leq d_b ] X_l )$ are 
identified for any $j$ and any $l \leq d_{b}$.

%%%%%%%%%%%%%%%%%%%%%%%%%%%%%%%%%%%%%%%%%%%%%%%%%%%%%%%%%%%%%%%%%%%%%%%%%%%%%%%%
%%%%%%%%%%%%%%%%%%%%%%%%%%%%%%%%%%%%%%%%%%%%%%%%%%%%%%%%%%%%%%%%%%%%%%%%%%%%%%%%

\section{Estimation}
\label{sec_est}

We construct estimators of $\beta(r)$ and $\beta_{\R}$ from an i.i.d.\ sample 
$\{(Y_{i}, X_{i}, Z_{i})\}_{i=1}^{n}$ using the sample analog of the expressions 
in Theorem \ref{thm_id}. We limit our focus to the case where there is one basic 
endogenous variable ($d_{b} = 1$), although there may be any number of known 
derived endogenous variables and exogenous variables $Z$. We discuss generalizations to $d_b > 1$ at the end of the section. To simplify notation, we let $X$ denote the one basic endogenous variable in both this section and the next. As a first step towards approximating the event 
that $R = r$, we construct estimates $\widehat{R}_{i}$ of $R_{i} \equiv F_{X 
    \vert Z}(X_{i} \vert Z_{i})$ for $i = 1,\ldots,n$ as
\begin{equation}
    \label{eq_rankestimation}
    \widehat{R}_{i} \equiv \widehat{F}_{X \vert Z}(X_{i} \vert Z_{i}),
\end{equation}
where $\widehat{F}_{X \vert Z}(x \vert z)$ is an estimator of $F_{X \vert Z}(x 
\vert z)$. This step of our estimation procedure is similar to those 
of \cite{imbensnewey2009e} and \cite{jun2009joe}, among others.

The asymptotic theory we develop in the next section is general enough to allow 
for many different $\sqrt{n}$--consistent estimators $\widehat{F}_{X 
    \vert Z}$. One could use a direct estimator such as the empirical 
conditional distribution function in the case that all $Z$ variables are 
discrete. Alternatively, as pointed out by 
\cite{chernozhukovfernandez-valgalichon2010e}, one can estimate $Q_{X \vert Z}(s 
\vert z)$ at several quantiles $s$ and then use the ``pre-rearrangement'' 
operator to construct an indirect estimator
\begin{equation}
    \label{eq_prerearrangement}
    \widehat{F}_{X \vert Z}(x \vert z) = \int_{0}^{1} \IndicSmall{\widehat{Q}_{X 
            \vert Z}(s \vert z) \leq x}\, ds.
\end{equation}
\cite{chernozhukovfernandez-valmelly2009wp, 
    chernozhukovfern'andez-valmelly2012e} discuss several different parametric 
direct and indirect estimators. 

For our purposes, we prefer nonparametric direct estimators (such as the 
empirical conditional distribution function) when the dimension of $Z$ is small 
and discrete, and parametric indirect estimators when there are more than a few 
covariates. The latter are easier than direct estimators to link to primitives 
under \ref{as_fs}, since, by strict monotonicity and independence, $Q_{X \vert 
    Z}(r \vert z) = h(z, Q_{V}(r))$. For example, the linear quantile regression 
model of \cite{koenkerbassett1978e} implies that $h(z,Q_{V}(r)) = Q_{X \vert 
    Z}(r \vert z) = z'\pi(r)$ for a function $\pi$ that is strictly increasing 
in $r$. Substituting $F_{V}(v)$ for $r$, we have
$h(z,v) = z'\pi(F_{V}(v))$, so that the linear quantile regression model imposes 
that $h$ is linear with respect to $z$, while \ref{as_fs} links together the 
components of $\pi$ to depend on a single underlying random variable $V$. For 
practical implementation when $Z$ has more than just a few components, we 
advocate using linear quantile regression together with 
\eqref{eq_prerearrangement} to construct $\widehat{F}_{X \vert Z}$ and 
$\widehat{R}_{i}$. Besides being easy to interpret under \ref{as_fs}, the linear 
quantile regression estimator has the additional benefits of being 
straightforward to compute, amenable to high-dimensional $Z$, and widely 
available in statistical packages. The integral in \eqref{eq_prerearrangement} 
can be evaluated using a uniform grid $\{s_{j}\}_{j=1}^{J} \subset (0,1)$.  

Having constructed $\widehat{R}_{i}$, we estimate $\beta(r)$ for a given $r$ as
\begin{equation}
    \label{eq_betahatr}
    \widehat{\beta}(r) \equiv 
    \left(\frac{1}{n}\sum_{i=1}^{n}\widehat{k}_{i}^{h}(r)W_{i}W_{i}'\right)^{+}
    \left(\frac{1}{n}\sum_{i=1}^{n}\widehat{k}_{i}^{h}(r)W_{i}Y_{i}\right),
\end{equation}
where $(\cdot)^{+}$ is the Moore-Penrose inverse and $\widehat{k}_{i}^{h}(r) 
\equiv h^{-1}K((\widehat{R}_{i} - r)/h)$ are weights constructed through a 
kernel function $K$ with bandwidth parameter $h$ that tends to $0$ 
asymptotically. The Moore-Penrose inverse is useful here because the matrix in 
question may not be invertible for all values of $r$ and $h$ in small samples, 
although our assumptions in the next section will ensure invertibility 
asymptotically. Since $R$ is always distributed uniformly with support $[0,1]$ 
when $d_{b} = 1$, we can use our estimates of $\beta(r)$ to estimate 
$\beta_{\R}$ by
\begin{equation}
    \label{eq_betahat_intr}
    \widehat{\beta}_{\R} \equiv 
    \lambda(\mathcal{R})^{-1}\int_{\R}\widehat{\beta}(r)\, dr,
\end{equation}
where $\mathcal{R}$ is a measurable subset of $[0,1]$ that is specified by the 
analyst and $\lambda$ is the Lebesgue measure.\footnote{Here and throughout the 
    paper, the integration of vectors as in \eqref{eq_betahat_intr} should be 
    understood as component-wise.} As we show in Section \ref{sec_asymptotics}, 
this estimator is $\sqrt{n}$--consistent and asymptotically normal for 
$\beta_{\R}$ under relatively weak regularity conditions. In studying a related 
problem for a different model, \cite{hoderleinsherman2013cwp4} described the 
strategy of an estimator like $\widehat{\beta}_{\R}$ as 
``localize-then-average.'' We find this terminology appealing as it captures the 
idea that for any given $r$, $\widehat{\beta}(r)$ only depends on the portion of 
the data local to the event $R = r$, while $\widehat{\beta}_{\R}$ forms an 
average of these various local estimators.

Overall, the computational complexity of \eqref{eq_betahat_intr} is very light 
for modern computing systems. A typical implementation would first estimate 
$\widehat{R}_{i}$, e.g.\ by using \eqref{eq_prerearrangement} with a moderate 
sized grid. Next, one would numerically integrate to compute 
\eqref{eq_betahat_intr}.  A simple and effective way to do this is to use 
variance-reducing pseudo-random draws, such as Halton sequences (see e.g.\  
Section 9.3.3 of \citealt{train2002}) or a uniform grid. Typically, a few hundred draws should be more than sufficient.  Moreover, unlike Monte Carlo integration, deterministic sequences can yield the same numerical results for all researchers. At each draw, one would estimate $\widehat{\beta}(r)$ using \eqref{eq_betahatr}, which 
is essentially just a weighted linear regression. Finally, the draws are averaged together to obtain $\widehat{\beta}_{\R}$. 

As we discuss in the next section, $\widehat{\beta}_{\R}$ is 
$\sqrt{n}$--consistent and asymptotically normal, but the asymptotic variance 
turns out to be complicated due to the effect of estimating $R_{i}$.  
Consequently, we use the nonparametric bootstrap to obtain standard errors. The 
typical procedure draws $S$ sets of $n$ observations with replacement from 
$\{(Y_{i}, X_{i}, Z_{i})\}_{i=1}^{n}$, say $\{(Y_{si}, X_{si}, 
    Z_{si})\}_{i=1}^{n}$ for $s = 1,\ldots,S$. These observations are used to 
compute $\widehat{\beta}^{s}_{\R}$ for $s = 1,\ldots,S$. Then
\[
    \widehat{\Sigma} \equiv \frac{1}{S-1} 
    \sum_{s=1}^{S}(\widehat{\beta}^{s}_{\R} - 
    \overline{\beta}_{\R})(\widehat{\beta}^{s}_{\R} - \overline{\beta}_{\R})'
\]
with $\overline{\beta}_{\R} \equiv S^{-1}\sum_{s=1}^{S} 
\widehat{\beta}^{s}_{\R}$ forms a bootstrapped estimate of the variance of 
$\widehat{\beta}_{\R}$. This estimator can be used to construct confidence 
intervals or conduct hypothesis tests in the usual fashion. For example, a 
two-sided confidence interval of level $\alpha$ for the first component of 
$\beta_{\R}$ would be given by \[
    \left[\widehat{\beta}_{\R, 1} - \widehat{\Sigma}_{11}^{1/2}\Phi^{-1}(1 - 
        \alpha/2), \widehat{\beta}_{\R, 1} + 
        \widehat{\Sigma}_{11}^{1/2}\Phi^{-1}(1 - \alpha/2) \right],
\]
where $\widehat{\beta}_{\R, 1}$ is the first component of 
$\widehat{\beta}_{\R}$, $\widehat{\Sigma}_{11}$ is the $(1,1)$ component of 
$\widehat{\Sigma}$, and $\Phi$ is the cumulative distribution function for the 
standard normal distribution.

Extending our estimator to the case where there are multiple basic endogenous 
variables ($d_{b} > 1$) requires a few modifications. First, we need to estimate 
$R_{ki} \equiv F_{X_{k} \vert Z}(X_{ki} \vert Z_i)$ for each $k=1,\ldots,d_{b}$.  
This can be done the same way as in the $d_{b} = 1$ case. Second, 
$\widehat{\beta}(r)$ in \eqref{eq_betahatr} needs to be modified so that the 
kernel weights are multivariate. The curse of dimensionality would accompany 
this sort of multivariate smoothing, and while $\widehat{\beta}_{\R}$ could 
still be expected to be formally $\sqrt{n}$--convergent under certain conditions 
on the kernel function, $K$, its small sample behavior will likely be quite poor 
with realistic sample sizes if $d_{b}$ is greater than $3$ or $4$. Third, when 
$d_{b} > 1$, the density of $R$ is no longer known \emph{a priori}, so that 
$\widehat{\beta}_{\R}$ could no longer be constructed by integrating as in 
\eqref{eq_betahat_intr}. A natural solution to the latter problem is to use the 
empirical measure to approximate the integral by taking
\begin{align*}
    \widehat{\beta}_{\R} = \frac{\sum_{i=1}^{n} \IndicSmall{\widehat{R}_{i} \in 
            \R}\widehat{\beta}(\widehat{R}_{i})}{\sum_{i=1}^{n} 
        \IndicSmall{\widehat{R}_{i} \in \R}}.
\end{align*}
The asymptotic analysis of this estimator involves third-order U-statistics and 
is much more complicated than that for \eqref{eq_betahat_intr}. Given this 
complication and since the case $d_{b} = 1$ is by far the most commonly 
encountered in applications, we focus our formal analysis in the next section on
$\widehat{\beta}_{\R}$ defined by \eqref{eq_betahat_intr}.

\section{Asymptotic Theory}
\label{sec_asymptotics}

In this section we discuss an asymptotic normality result for 
$\widehat{\beta}_{\R}$. The proof is in Appendix \ref{ap_proofs}. In the 
following, we let $P(r) \equiv \Exp[WW' \vert R = r]$ and use $\rightsquigarrow$ 
to denote convergence in distribution.

\begin{theorem}
    \label{thm_an}
    Under Assumptions I and E,
    \[
        \sqrt{n}(\widehat{\beta}_{\R} - \beta_{\R}) \rightsquigarrow N\left(0, 
            \lambda(\R)^{-2}\Exp[(\zeta_{1i} + \zeta_{2i})(\zeta_{1i} + 
            \zeta_{2i})'] \right),
     \]
     where
     \vspace{-5mm}
     \begin{align*}
        &\zeta_{1i} \equiv \IndicSmall{R_{i} \in 
            \mathcal{R}}P(R_{i})^{-1}W_{i}W_{i}'(B_{i} - \beta(R_{i})) \\
       &\zeta_{2i} \equiv -\Exp[\IndicSmall{R_{j} \in 
            \R}\xi_{i}(X_{j} \vert 
        Z_{j})P(R_{j})^{-1}W_{j}W_{j}'\dot{\beta}(R_{j}) \vert i] \qquad (j \neq 
        i),
    \end{align*}
    with all additional notation being defined below in Assumptions E.
\end{theorem}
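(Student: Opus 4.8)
The plan is to establish the asymptotically linear representation
\[
    \sqrt{n}(\widehat{\beta}_{\R} - \beta_{\R}) = \lambda(\R)^{-1}\,n^{-1/2}\sum_{i=1}^{n}(\zeta_{1i} + \zeta_{2i}) + \op{1}
\]
and then conclude by the classical central limit theorem for i.i.d.\ vectors together with Slutsky's theorem. The summand $\zeta_{1i} + \zeta_{2i}$ is a function of observation $i$ alone; it has mean zero --- $\Exp[\zeta_{1i}] = 0$ because $\Exp[WW'(B - \beta(r)) \mid R = r] = P(r)\beta(r) - P(r)\beta(r) = 0$ by $W \independent B \mid R$ (Proposition \ref{prop_controlvariable}), and $\Exp[\zeta_{2i}] = 0$ because the first-stage influence function is centered, $\Exp[\xi(x \mid z)] = 0$ --- and it has finite second moment under Assumptions E. So the entire task is the linearization.

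First I would use the relevance condition \ref{as_rel} (together with a lower bound $\inf_{r \in \R}\lambda_{\min}(P(r)) > 0$ from Assumptions E) and the uniform consistency of $\widehat{P}(r) \equiv n^{-1}\sum_{i}\widehat{k}_i^h(r)W_iW_i'$ to conclude that $\widehat{P}(r)$ is invertible on $\R$ with probability approaching one, so the Moore--Penrose inverse in \eqref{eq_betahatr} coincides there with an ordinary inverse. Substituting $Y_i = W_i'B_i$ into \eqref{eq_betahat_intr} then gives
\[
    \widehat{\beta}_{\R} - \beta_{\R} = \lambda(\R)^{-1}\int_{\R}\widehat{P}(r)^{-1}\Big(\tfrac{1}{n}\sum_{i=1}^{n}\widehat{k}_i^h(r)W_iW_i'(B_i - \beta(r))\Big)\,dr,
\]
which I would decompose by inserting $P(r)^{-1}$ for $\widehat{P}(r)^{-1}$ and the infeasible weights $k_i^h(r) \equiv h^{-1}K((R_i - r)/h)$ for $\widehat{k}_i^h(r)$. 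This produces a ``true-rank, true-$P$'' term $T_1$, a ``first-stage estimation'' term $T_3$ carrying $\widehat{k}_i^h - k_i^h$, and cross terms $T_2$ pairing $\widehat{P}(r)^{-1} - P(r)^{-1}$ with a kernel average of the terms $W_iW_i'(B_i - \beta(r))$. Each factor entering $T_2$ is of the standard kernel-estimation order $(nh)^{-1/2} + h^{q}$ for an order-$q$ kernel, so $T_2 = \op{n^{-1/2}}$ under the bandwidth conditions of Assumptions E (enough smoothing, e.g.\ $nh^{2}/\log n \to \infty$, plus undersmoothing, $nh^{2q} \to 0$).

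For $T_1$ I would interchange the (finite) sum and the integral, substitute $u = (R_i - r)/h$, and Taylor-expand $P(R_i - hu)^{-1}(B_i - \beta(R_i - hu))$ around $R_i$. Since $\int K = 1$, the leading term is $\lambda(\R)^{-1}n^{-1}\sum_{i}\IndicSmall{R_i \in \R}P(R_i)^{-1}W_iW_i'(B_i - \beta(R_i)) = \lambda(\R)^{-1}n^{-1}\sum_{i}\zeta_{1i}$; the odd-order Taylor terms integrate to zero against a symmetric order-$q$ kernel, and the residual bias is $\Op{h^{q}} = \op{n^{-1/2}}$ by undersmoothing. The only delicate point is the set of $R_i$ within $O(h)$ of $\partial\R$, where $\int K(u)\IndicSmall{R_i - hu \in \R}\,du$ differs from $\IndicSmall{R_i \in \R}$; there I would use that the summand has conditional mean zero given $R_i$ (again by $W \independent B \mid R$) and occurs with probability $O(h)$, so this contribution has variance $O(h/n) = o(1/n)$.

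The crux is $T_3$, the effect of estimating the ranks. Writing $\widehat{k}_i^h(r) - k_i^h(r) = h^{-2}K'((R_i - r)/h)(\widehat{R}_i - R_i) + \tfrac{1}{2}h^{-3}K''(\rho_i^{*}(r))(\widehat{R}_i - R_i)^{2}$ and integrating the $K'$ term against the smooth factor $\psi_i(r) \equiv P(r)^{-1}W_iW_i'(B_i - \beta(r))$, the apparently divergent power of $h$ collapses because, after $u = (R_i - r)/h$, the identities $\int K'(u)\,du = 0$ and $\int uK'(u)\,du = -1$ give $h^{-1}\int K'(u)\psi_i(R_i - hu)\,du = \psi_i'(R_i) + \Op{h}$, where $\psi_i'(r) = \big(\tfrac{d}{dr}P(r)^{-1}\big)W_iW_i'(B_i - \beta(r)) - P(r)^{-1}W_iW_i'\dot{\beta}(r)$. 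Substituting the assumed uniform asymptotically linear expansion $\widehat{R}_i - R_i = n^{-1}\sum_{j}\xi_j(X_i \mid Z_i) + \op{n^{-1/2}}$ then turns the leading part of $T_3$ into a second-order $V$-statistic $\lambda(\R)^{-1}n^{-2}\sum_{i,j}\xi_j(X_i \mid Z_i)\,\psi_i'(R_i)\IndicSmall{R_i \in \R}$, to which I would apply the H\'ajek projection. Two cancellations matter: the projection onto the observation carrying $(W_i, B_i, R_i)$ vanishes because $\Exp[\xi_j(X_i \mid Z_i) \mid X_i, Z_i] = 0$ by centering; and in the projection onto the observation carrying $\xi$, the $\tfrac{d}{dr}P(r)^{-1}$ piece of $\psi_i'$ drops out because $\Exp[B_i \mid X_i, Z_i] = \beta(R_i)$ --- a consequence of $W \independent B \mid R$ and $(R,B) \independent Z$ (Proposition \ref{prop_controlvariable}), using that $(X,Z)$ and $(R,Z)$ carry the same information when $d_b = 1$ --- so what remains is exactly $\lambda(\R)^{-1}n^{-1}\sum_{i}\zeta_{2i}$. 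The $K''$ remainder term, the degenerate part of the $V$-statistic (including its diagonal), the $\op{n^{-1/2}}$ tail of the influence expansion, and the near-boundary terms are all $\op{n^{-1/2}}$ under the bandwidth and moment conditions of Assumptions E. I expect this step --- pushing the first-stage perturbation through the kernel derivative while keeping the powers of $h$ under control, and verifying that the ``wrong'' H\'ajek projection term drops --- to be the main obstacle; assembling $T_1 + T_3$ and invoking the i.i.d.\ central limit theorem is then routine.
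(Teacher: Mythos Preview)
Your overall strategy --- linearize, land on a second-order $V$-statistic, project --- matches the paper's, and your H\'ajek-projection calculation (including the cancellation of the $\tfrac{d}{dr}P(r)^{-1}$ piece via $\Exp[U_j(R_j)\mid R_j,Z_j]=0$) is exactly what the paper does. The organizational difference is that the paper, after replacing $\widehat{P}(r)^{+}$ by $P(r)^{-1}$ uniformly (a separate lemma), changes variables $\eta=(r-\widehat{R}_i)/h$ in the \emph{full} kernel integral and then Taylor-expands the smooth integrand $T_i(r)\equiv P(r)^{-1}W_iW_i'(B_i-\beta(r))$ around $R_i$, obtaining $T_i(R_i)$ and $(\widehat{R}_i-R_i)\dot{T}_i(R_i)$ in one stroke, with remainder $(\widehat{R}_i-R_i+h\eta)^2\ddot{T}_i$.

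Your route instead Taylor-expands $K$, and this creates a genuine rate problem at the $K''$ step. The Lagrange remainder $\tfrac12 h^{-3}K''(\rho_i^{*}(r))(\widehat{R}_i-R_i)^2$, integrated against $\psi_i(r)$ over $\R$, only admits the crude bound $O(h^{-2})(\widehat{R}_i-R_i)^2$ per observation (the intermediate point $\rho_i^{*}$ blocks any moment cancellation, and $K''$ is supported on an $O(h)$-window in $r$), giving $\Op{n^{-1/2}h^{-2}}$ after summing and scaling by $\sqrt{n}$. But \ref{as_bandwidth} requires $\sqrt{n}h^2\to 0$, i.e.\ $h=o(n^{-1/4})$, so $n^{-1/2}h^{-2}\to\infty$ rather than $0$; the bandwidth window you would need, $nh^4\to\infty$, is exactly ruled out by undersmoothing. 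The paper sidesteps this by putting the second-order expansion on $T_i$ rather than on $K$: the remainder is then $\ddot{T}_i(R_i)$ times $(h\eta+\widehat{R}_i-R_i)^2$, and each of $h^2$, $h(\widehat{R}_i-R_i)$, $(\widehat{R}_i-R_i)^2$ is controlled directly by \ref{as_bandwidth} and the $n^{-1/2}$-rate from \ref{as_rank_estimation}, with no inverse powers of $h$ appearing. You could also rescue your expansion by switching to the integral-form remainder and exploiting $\int K''=\int uK''=0$, but that needs $K'(\pm1)=0$, which \ref{as_kernel} as stated does not guarantee.
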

%%%%%%%%%%%%%%%%%%%%%%%%%%%%%%%%%%%%%%%%%%%%%%%%%%%%%%%%%%%%%%%%%%%%%%%%%%%%%%%%
%%%%%%%%%%%%%%%%%%%%%%%%%%%%%%%%%%%%%%%%%%%%%%%%%%%%%%%%%%%%%%%%%%%%%%%%%%%%%%%%
%%%%%%%%%%%%%%%%%%%%%%%%%%%%%%%%%%%%%%%%%%%%%%%%%%%%%%%%%%%%%%%%%%%%%%%%%%%%%%%%

\begin{enumerate}[align=left, leftmargin=*, widest=E8.]
    \itshape
    \renewcommand{\theenumi}{Assumptions E}
    \renewcommand{\labelenumi}{\textbf{\theenumi.}}
    \item
    \setcounter{enumi}{0}

    \renewcommand{\theenumi}{E\arabic{enumi}}
    \renewcommand{\labelenumi}{\textbf{\theenumi.}}

    \item \label{as_iid} \textbf{(Random sample)} $(Y_{i}, X_{i}, Z_{i})$ is an 
        i.i.d.\ sample.

    \item \label{as_rset} \textbf{(Integration set)} $\R$ is a closed, 
        measurable subset of $[\delta, 1 - \delta]$ for some $\delta > 0$.

    \item \label{as_kernel} \textbf{(Kernel)} $K$ has support $[-1,1]$ and is
        twice continuously differentiable and symmetric around $0$ with 
        $\int_{-1}^{1} K(\eta) d\eta = 1$ and $\int_{-1}^{1} \eta^{2}K(\eta) 
        d\eta < \infty$.

    \item \label{as_bandwidth} \textbf{(Bandwidth)} As $n \rightarrow \infty$, 
        $\sqrt{n}h^{2} \rightarrow 0$ and $\sqrt{n}h/\log(n) \rightarrow 
        \infty$.

    \item \label{as_smoothness} \textbf{(Smoothness)} Every component of $P(r)$ 
        and $\beta(r)$ is twice continuously differentiable over $r \in \R$ with 
        first and second component-wise derivatives $\dot{P}(r), \ddot{P}(r), 
        \dot{\beta}(r), \ddot{\beta}(r)$.

    \item \label{as_moments_est} \textbf{(Existence of moments)} 
        $\Exp(\normSmall{WW'}^{4} \vert R \in \R)$ and $\Exp (\normSmall{B}^{4} 
        \vert R \in \R )$ are both finite.

    \item \label{as_rank_estimation} \textbf{(Rank estimation)} 
        $\widehat{R}_{i}$ is constructed from \eqref{eq_rankestimation} and for 
        all $(x,z) \in \mathcal{XZ}(\R) \equiv \{(x,z) : F_{X \vert Z}(x \vert 
            z) \in \R\}$,
        \begin{equation}
            \sqrt{n}(\widehat{F}_{X \vert Z}(x \vert z) - F_{X \vert Z}(x \vert 
            z)) = \frac{1}{\sqrt{n}}\sum_{i=1}^{n}\xi_{i}(x \vert z) + 
            \rho_{n}(x \vert z)
        \end{equation}
        with $\Exp[\xi_{i}(x \vert z)]$ $=$ $0$, $\Exp[\IndicSmall{R_{i} \in 
            \R}\xi_{j}(X_{i} \vert Z_{i})^{4}] < \infty$ for both $j = i$ and $j 
        \neq i$, and $\sup_{(x,z) \in \mathcal{XZ}(\R)}$ $\absSmall{\rho_{n}(x 
            \vert z)}$ $=$ $\op{1}$. Also, with probability approaching $1$, 
        $\widehat{F}_{X \vert Z}$ belongs to a class of functions $\mathcal{F}$ 
        such that $\log N(\epsilon, \mathcal{F}, \normSmall{\cdot}_{\infty}) < 
        C\epsilon^{-1/2}$ for some $C > 0$.\footnote{The notation 
            $N(\epsilon, \mathcal{F}, \normSmall{\cdot}_{\infty})$ stands for 
            the $\epsilon$--covering number of $\mathcal{F}$ under the 
            sup--norm; that is, the minimal number of 
            $\normSmall{\cdot}_{\infty}$--balls of radius $\epsilon$ that are 
            required to cover $\mathcal{F}$. Intuitively, the covering number is 
            a measure of the complexity of the class of functions $\mathcal{F}$.  
            See \cite{wellner1996} for more details.}
\end{enumerate}

The i.i.d.\ assumption \ref{as_iid} is standard for microeconometric 
applications and could in principle be extended to cover non-identical and/or 
dependent data frameworks. The assumption that $\R$ is closed and measurable in 
\ref{as_rset} is mild and of no practical significance. The additional 
restriction that $\mathcal{R}$ is a subset of $[\delta, 1-\delta]$ for some 
small $\delta > 0$ is made to avoid boundary issues. While these issues could 
potentially be addressed by using local linear weights, we have found that these 
work poorly in practice. This is perhaps not surprising in our framework since 
the density of $R$ is uniform, which is a particularly unfavorable case for 
local linear regression (see Remark 4 of \citealt{ruppertwand1994taos}).  
Additionally, as we discuss further below, \ref{as_rank_estimation} will in 
practice also require that $\mathcal{R}$ does not contain extremal ranks.  We 
therefore see \ref{as_rset} as a natural restriction given our identification 
strategy, although it does imply that we can only estimate a trimmed version of 
$\Exp(B)$. It is likely possible to adjust our estimator to allow for $\delta 
\rightarrow 0$ asymptotically, or to use a different smoothing approach that is 
less sensitive to boundary effects, such as sieves. We leave these modifications 
for future research.

The restrictions on the kernel in \ref{as_kernel} are relatively mild and allow 
for a broad range of commonly used kernels, such as the uniform or biweight 
kernel.  We rule out kernels with unbounded support such as the Gaussian kernel 
in order to apply results in the literature on kernel regression with generated 
regressors (specifically, those in \citealt{mammenrotheschienle2012taos}). Our 
bandwidth conditions in \ref{as_bandwidth} prescribe a choice of $h$ that 
undersmooths (goes to zero faster) relative to the usual optimal bandwidth 
choice for nonparametric kernel regression. This is standard given the 
semiparametric nature of our estimator $\widehat{\beta}_{\R}$ and appears in 
similar contexts like the average derivative estimator of 
\cite{powellstockstoker1989e}.  Intuitively, while $\widehat{\beta}(r)$ only 
uses a portion of the data,  $\widehat{\beta}_{\R}$ uses all the 
data and thus has a much smaller variance.  Consequently, the bandwidth $h$ can 
be sent to $0$ more quickly in order to remove the bias of 
$\widehat{\beta}_{\R}$ at a $\sqrt{n}$--rate and achieve the overall 
$\sqrt{n}$--rate of convergence asserted in Theorem \ref{thm_an}.

Assumption \ref{as_smoothness} places some standard smoothness conditions on the 
population objects $P(r)$ and $\beta(r)$. In combination with \ref{as_rel} and 
\ref{as_rset}, these imply that $P(r)$ is invertible uniformly over $\R$, and so 
serves to strengthen \ref{as_rel} in a way that is theoretically important for 
the asymptotics. The practical implication is that $\R$ should not include 
neighborhoods of isolated points where \ref{as_rel} fails, such as where the 
curves cross in Figure \ref{fig_id_intuition} ($r = .3$). Assumption 
\ref{as_moments_est} is a standard type of assumption regarding the number of 
existing moments for $W$ and $B$. Since $WW'$ contains squared terms, 
\ref{as_moments_est} essentially requires each component of $W$ to have a finite 
eighth moment.

The conditions in \ref{as_rank_estimation} require the estimator of $F_{X \vert 
    Z}$ used in constructing $\widehat{R}_{i}$ to be asymptotically linear and 
$\sqrt{n}$--convergent. This assumption is not very restrictive for parametric 
models.  
\cite{chernozhukovfernandez-valmelly2009wp,chernozhukovfern'andez-valmelly2012e} 
provide several examples of direct conditional distribution function estimators 
that satisfy this condition. In addition, 
\cite{chernozhukovfernandez-valgalichon2010e} show that 
\eqref{eq_prerearrangement}, viewed as a functional mapping from conditional 
quantile to conditional distribution functions, is Hadamard differentiable.  As 
a result, asymptotically linear representations for conditional quantile 
estimators give rise to asymptotically linear representations for 
conditional distribution estimators defined by \eqref{eq_prerearrangement} after 
applying the functional delta method. The results from the vast literature on 
quantile regression can therefore be transferred fairly easily to conditional 
distribution estimators defined by \eqref{eq_prerearrangement}. 

The more restrictive part of \ref{as_rank_estimation} is the requirement that 
the estimation error $\rho_{n}$ be convergent uniformly over $x$ and $z$ such 
that $F_{X \vert Z}(x \vert z) \in \R$. For some estimators of $F_{X \vert Z}$, 
such as the conditional empirical distribution function, this condition does not 
present a problem. For our preferred estimator that uses 
\eqref{eq_prerearrangement} and a linear quantile regression estimator of $Q_{X 
    \vert Z}$, it is well-known that this condition will generally not hold for 
subsets $\R$ that include extremal points in $[0,1]$ unless strong restrictions 
are placed on the tail behavior of $X$. However, since we rule out extremal 
ranks in \ref{as_rset}, this does not represent a substantive additional 
restriction in our setting. The following result, which is Theorem 3 in 
\cite{chernozhukovfernandezvalkowalski2011wp}, provides sufficient conditions 
for \ref{as_rank_estimation} for our preferred estimator of $\widehat{R}_{i}$.

\begin{proposition}
    \label{prop_linearqr}
    Suppose that $\widehat{F}_{X \vert Z}$ is estimated using 
    \eqref{eq_prerearrangement} with $\widehat{Q}_{X \vert Z}$ taken as the 
    linear quantile regression estimator of \cite{koenkerbassett1978e}. Then 
    \ref{as_rank_estimation} holds under Assumptions QR with
    \begin{align*}
        \xi_{i}(x \vert z) &= f_{X \vert Z}(x \vert z)z'\Exp\left[f_{X \vert 
                Z}(Z'\pi_{0}(F_{X \vert Z}(x \vert z)) \vert Z)ZZ'\right]^{-1}
        \\
        &\qquad\quad \times \left(F_{X \vert Z}(x \vert z) - \Indic{X_{i} 
                \leq \pi_{0}(F_{X \vert Z}(x \vert z))}\right)Z_{i}.
    \end{align*}
\end{proposition}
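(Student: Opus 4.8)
The plan is to obtain \ref{as_rank_estimation} for the pre-rearranged linear quantile regression estimator by composing two standard facts: (i) the uniform asymptotically linear (Bahadur) representation of the Koenker--Bassett coefficient process $\widehat{\pi}(\cdot)$, and (ii) the Hadamard differentiability of the pre-rearrangement operator in \eqref{eq_prerearrangement} as a map into $\ell^{\infty}$. Chaining these through the functional delta method yields exactly the expansion required in \ref{as_rank_estimation} together with the influence function $\xi_{i}$; this is the content of Theorem 3 of \cite{chernozhukovfernandezvalkowalski2011wp}, and the role of Assumptions QR is to be the specialization of that paper's hypotheses to the present setting --- compact support for $Z$ with $\Exp[ZZ']$ nonsingular, a conditional density $f_{X \vert Z}$ that is bounded, bounded away from zero, and Lipschitz over the relevant region, a coefficient map $u \mapsto \pi_{0}(u)$ that is continuously differentiable, and enough moments on $Z$. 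Throughout, \ref{as_rset} confines $\R$ to $[\delta, 1-\delta]$, so only the interior of the quantile process is ever used, which is what keeps the remainder terms uniformly negligible.

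First I would record ingredient (i): under Assumptions QR, standard quantile regression theory gives, uniformly over $u$ in compact subintervals of $(0,1)$,
\[
    \sqrt{n}\bigl(\widehat{\pi}(u) - \pi_{0}(u)\bigr) = \frac{1}{\sqrt{n}}\sum_{i=1}^{n} J(u)^{-1}\bigl(u - \Indic{X_{i} \leq Z_{i}'\pi_{0}(u)}\bigr)Z_{i} + \op{1},
\]
where $J(u) \equiv \Exp[f_{X \vert Z}(Z'\pi_{0}(u) \vert Z)ZZ']$ is invertible uniformly over that subinterval. Since $\mathcal{XZ}(\R) = \{(x,z) : F_{X \vert Z}(x \vert z) \in \R\}$ and $\R \subseteq [\delta, 1-\delta]$, every $(x,z) \in \mathcal{XZ}(\R)$ has index $u_{0} \equiv F_{X \vert Z}(x \vert z)$ lying in $[\delta, 1-\delta]$, so this representation applies at $u = u_{0}$ with a remainder that is $\op{n^{-1/2}}$ uniformly over $(x,z) \in \mathcal{XZ}(\R)$.

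Next I would treat ingredient (ii) and the composition. Let $\Phi$ be the operator $\pi(\cdot) \mapsto \bigl[(x,z) \mapsto \int_{0}^{1} \Indic{z'\pi(s) \leq x}\, ds\bigr]$; at $\pi = \pi_{0}$, where $s \mapsto z'\pi_{0}(s) = Q_{X \vert Z}(s \vert z)$ is strictly increasing, one has $\Phi(\pi_{0})(x,z) = F_{X \vert Z}(x \vert z)$. Differentiating the defining identity $z'\pi(u(\pi)) = x$, where $u(\pi) \equiv \Phi(\pi)(x,z)$, and using $\frac{d}{du}Q_{X \vert Z}(u \vert z) = 1/f_{X \vert Z}(Q_{X \vert Z}(u \vert z) \vert z)$, shows that the Hadamard derivative of $\Phi$ at $\pi_{0}$ in a direction $\delta$ is $(x,z) \mapsto -f_{X \vert Z}(x \vert z)\, z'\delta(F_{X \vert Z}(x \vert z))$; \cite{chernozhukovfernandez-valgalichon2010e} establish that this differentiability holds in the uniform ($\ell^{\infty}$) sense needed here. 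Substituting the quantile regression score from (i), evaluated at $u_{0} = F_{X \vert Z}(x \vert z)$, into this derivative produces an influence function of the form stated in the proposition, with $\Exp[ZZ' f_{X \vert Z}(Z'\pi_{0}(u_{0}) \vert Z)]$ appearing as the Jacobian $J(u_{0})$. The mean-zero property $\Exp[\xi_{i}(x \vert z)] = 0$ is immediate from $\Exp[\Indic{X_{i} \leq Z_{i}'\pi_{0}(u_{0})} \mid Z_{i}] = F_{X \vert Z}(Z_{i}'\pi_{0}(u_{0}) \vert Z_{i}) = u_{0}$; finiteness of $\Exp[\Indic{R_{i} \in \R}\xi_{j}(X_{i} \vert Z_{i})^{4}]$ for $j = i$ and $j \neq i$ follows from boundedness of $f_{X \vert Z}$, uniform invertibility of $J(u_{0})$ over $u_{0} \in \R$, and moment conditions on $Z$, via H\"{o}lder's inequality. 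For the entropy side-condition, one takes $\mathcal{F}$ to be a fixed uniformly bounded class that, with probability approaching one, contains $\widehat{F}_{X \vert Z}$: because the integral in \eqref{eq_prerearrangement} is evaluated on a finite grid and $\widehat{\pi}$ is uniformly consistent, the candidate $\widehat{F}_{X \vert Z}$ is indexed by finitely many coefficients confined to a compact set, and the corresponding class has entropy of smaller order than $\epsilon^{-1/2}$.

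I expect the main obstacle to be the bookkeeping needed to make all of the above \emph{uniform} over $(x,z) \in \mathcal{XZ}(\R)$: one must check that composing the uniform-in-$u$ Bahadur expansion with the Hadamard-differentiable operator $\Phi$ leaves a remainder with $\sqrt{n}\,\sup_{(x,z) \in \mathcal{XZ}(\R)}\absSmall{\rho_{n}(x \vert z)} = \op{1}$, which requires the differentiability of $\Phi$ in $\ell^{\infty}(\mathcal{XZ}(\R))$ (not merely pointwise in $(x,z)$) together with a quantile-regression remainder controlled uniformly over the compact index set $\R$ --- not up to the endpoints of $(0,1)$, which is precisely why \ref{as_rset} is imposed. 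Since this is exactly Theorem 3 of \cite{chernozhukovfernandezvalkowalski2011wp}, the cleanest route in the write-up is to state Assumptions QR as the specialization of that theorem's hypotheses and invoke it, after verifying that $\mathcal{XZ}(\R)$ is an admissible domain given \ref{as_rset}.
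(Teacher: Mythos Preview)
Your proposal is correct and matches the paper's approach exactly: the paper does not supply an independent proof of this proposition but simply states that it \emph{is} Theorem 3 of \cite{chernozhukovfernandezvalkowalski2011wp}, with Assumptions QR playing the role of that theorem's hypotheses. Your sketch of the two ingredients---the uniform Bahadur representation for $\widehat{\pi}(\cdot)$ and the Hadamard differentiability of the pre-rearrangement map from \cite{chernozhukovfernandez-valgalichon2010e}---and your concluding recommendation to invoke that theorem after checking that $\mathcal{XZ}(\R)$ is admissible under \ref{as_rset} is precisely what the paper does, only with more of the mechanics spelled out.
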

\begin{enumerate}[align=left, leftmargin=*, widest=QR1.]
    \itshape
    \renewcommand{\theenumi}{Assumptions QR}
    \renewcommand{\labelenumi}{\textbf{\theenumi.}}
    \item
    \setcounter{enumi}{0}

    \renewcommand{\theenumi}{QR\arabic{enumi}}
    \renewcommand{\labelenumi}{\textbf{\theenumi.}}

    \item \textbf{(Well-specified)} $Q_{X \vert Z}(r \vert z) = z'\pi_{0}(r)$ 
        for all $r \in \R$ and $z \in \supp(Z)$.
    \item \textbf{(Smooth quantile function)} $Q_{X \vert Z}(r \vert z)$ is 
        three times continuously differentiable in $r$ over $\mathcal{R}$ with a 
        uniformly bounded third derivative.
    \item \textbf{(Well-behaved density)} $f_{X \vert Z}(x \vert z)$ is 
        uniformly continuous, uniformly bounded and uniformly bounded away from 
        $0$ over $(x,z) \in \mathcal{XZ}(\R)$.
    \item \textbf{(Existence of moments)} $\Exp(\normSmall{Z}^{8}) < \infty$.
    \item \textbf{(No multicollinearity)} $\Exp( ZZ' )$ is invertible.
\end{enumerate}

The asymptotic variance of $\widehat{\beta}_{\R}$ given in Theorem \ref{thm_an} 
depends on two components.  If $R_{i}$ were known and did not need to be 
estimated by $\widehat{R}_{i}$, the asymptotic variance would only depend on 
$\zeta_{1i}$ and would be given by 
$\lambda(\R)^{-2}\Exp[\zeta_{1i}\zeta_{1i}']$. To interpret this quantity, 
rewrite $Y_i = W_i'B_i$ as $Y_{i} = W_{i}'\beta(r) + U_{i}(r)$, where $U_{i}(r) 
\equiv W_{i}'(B_{i} - \beta(r))$ satisfies $\Exp[U_{i}(r) \vert R_{i} = r] = 0$ 
by Proposition \ref{prop_controlvariable}.  Suppose that we were to regress $Y$ 
on $X$ in a large sample drawn from the subpopulation $R = r$.  (Of course, even 
if we knew $R_{i}$ \emph{a priori}, this wouldn't be feasible since the event $R 
= r$ has measure zero.) Then the asymptotic variance of the coefficient vector 
would be given by the usual sandwich form, 
$P(r)^{-1}\Exp[U_{i}(r)^{2}W_{i}W_{i}' \vert R_{i} = r]P(r)^{-1}$, where all of 
the typical components have been conditioned on $R = r$. This sandwich form is 
exactly what appears in
\[
    \lambda(\R)^{-2}\Exp[\zeta_{1i}\zeta_{1i}'] = 
    \Exp\left[P(R_{i})^{-1}\Exp[U_{i}(R_{i})^{2}W_{i}W_{i}' \vert 
        R_{i}]P(R_{i})^{-1} \vert R_{i} \in \R \right]\lambda(\mathcal{R})^{-1},
\]
except that it is now being integrated over all $r \in \R$ under consideration 
and scaled to account for the size of $\R$.

The second component of the asymptotic variance expression, $\zeta_{2i}$,
accounts for the effect of estimating $\widehat{R}_{i}$. This term involves the 
influence function from the first stage, $\xi_{i}$, and so will depend on the 
estimator of $F_{X \vert Z}$ that is used. It appears to generally have a 
complicated form, and at least for our preferred rank estimator discussed in 
Proposition \ref{prop_linearqr}, we have not found that the form of $\xi_{i}$ 
provides any useful simplification in the expression for $\zeta_{2i}$. Note also that $\zeta_{2i}$ depends multiplicatively on the first 
derivative of $\beta(r)$. Hence, in the case of no treatment effect 
heterogeneity, $\zeta_{2i}$ is identically zero and the asymptotic variance of 
$\widehat{\beta}_{\R}$ is determined exclusively by $\zeta_{1i}$.

Constructing a direct estimator of the asymptotic variance of 
$\widehat{\beta}_{\R}$ would be tedious and difficult, likely requiring
an additional estimator of $\dot{\beta}(r)$ as a function of $r$. Instead, we 
propose bootstrapping to approximate the limiting distribution of 
$\widehat{\beta}_{\R}$. The procedure for constructing bootstrapped standard 
errors and confidence intervals was outlined in Section \ref{sec_est}. This type 
of bootstrap procedure is generally consistent, and our framework does not 
possess any of the usual causes of inconsistency that have been studied in the 
literature. We therefore anticipate that the bootstrap is consistent, although 
we have not attempted a formal proof.

%%%%%%%%%%%%%%%%%%%%%%%%%%%%%%%%%%%%%%%%%%%%%%%%%%%%%%%%%%%%%%%%%%%%%%%%%%%%%%%%
%%%%%%%%%%%%%%%%%%%%%%%%%%%%%%%%%%%%%%%%%%%%%%%%%%%%%%%%%%%%%%%%%%%%%%%%%%%%%%%%

\section{Monte Carlo Simulations}
\label{sec_mc}
This section contains the results of Monte Carlo simulations on the 
finite-sample behavior of $\widehat{\beta}_{\R}$. We consider a data generating 
process with an outcome equation specified as
\[
    Y = B_{0} + B_{1} X,
\]
and with a first stage equation given by
\[
    X = \pi Z + \gamma Z V + V.
\]
We draw $V$ independently from a normal distribution with mean $0.1$ and standard 
deviation $0.4$. The random coefficients in the outcome equation are then 
generated as $B_{j} = \rho_{j}V + \epsilon_{j}$ for $j = 0,1$ with 
$\epsilon_{j}$ distributed $N(\mu_{j}, \sigma^{2}_{j})$ independently of all 
other variables. In particular, we take $\rho_{0} = .3$, $\mu_{0} = .2$, 
$\sigma_{0} = .2$ and $\rho_{1} = .7, \mu_{1} = .45$, $\sigma_{1} = 1$, which 
implies that $\Exp(B_{0}) = .23$ and $\Exp(B_{1}) = .52$. Since both $\rho_{j} 
\neq 0$, there is a strong endogeneity problem in this data generating process 
in the sense that $B$ and $X$ are highly correlated through their mutual 
dependence on $V$. As a consequence, the ordinary least squares (OLS) estimator 
will be inconsistent for $\Exp(B)$.

In the first stage equation we set $\pi = .2$ and consider the cases where 
$\gamma = 0$ and $\gamma = .4$. In the first case, the effect of $Z$ on $X$ is 
homogenous, so the results of \cite{heckmanvytlacil1998tjohr} and 
\cite{wooldridge1997el,wooldridge2003el,wooldridge2008ic} imply that the TSLS 
estimator will be consistent for $\Exp(B_{1})$, although it is still generally 
inconsistent for $\Exp(B_{0})$.  In the second case, the effect of $Z$ on $X$ 
varies with $V$, so that TSLS will generally be inconsistent for both 
$\Exp(B_{0})$ and $\Exp(B_{1})$. In contrast, $\widehat{\beta}_{\R}$ will be 
consistent for both components of $\Exp(B)$ for either value of $\gamma$. The 
instrument $Z$ is a binary random variable that takes values $\{0,1\}$ with 
equal probability and is drawn independently from $(V, \epsilon_{0}, 
\epsilon_{1})$. We used a conditional empirical distribution function to 
estimate $\widehat{R}_{i}$, a biweight kernel for $K$ and specified $\R = 
[0,1]$. Although this choice of $\mathcal{R}$ does not satisfy \ref{as_rset}, we 
    have found the results of these simulations to be insensitive to different 
    values of $\delta$. The number of replications in all 
simulations presented is $1000$ and the integrals in the definition of 
$\widehat{\beta}_{\R}$ were evaluated using $300$ Halton draws.

\begin{table}[H]
    \centering
    \begin{tabular}{c|ccc|ccc}
    \toprule
    % Header row
    estimators of&
    \multicolumn{3}{c|}{$N = 500$} &
    \multicolumn{3}{c}{$N = 1000$} \\

    $\Exp(B_{0}) = .23$& 
    bias & (std) & mse &
    bias & (std) & mse \\

    \hline

        OLS 
        
           &

        0.0291 
        
           &

        (0.0228) 
        
           &

        0.0014 
        
           &

        0.0294 
        
           &

        (0.0172) 
        
           &

        0.0012

    \\

        TSLS 
        
           &

        0.1124 
        
           &

        (0.0581) 
        
           &

        0.0160 
        
           &

        0.1105 
        
           &

        (0.0393) 
        
           &

        0.0137

    \\

    \hline
    \hline

            h = 0.01
             
               &

            -0.0136
             
               &

            (0.1784)
             
               &

            0.0320
             
               &

            -0.0164
             
               &

            (0.1085)
             
               &

            0.0120

        \\

            h = 0.03
             
               &

            -0.0394
             
               &

            (0.1322)
             
               &

            0.0190
             
               &

            -0.0391
             
               &

            (0.0830)
             
               &

            0.0084

        \\

            h = 0.05
             
               &

            -0.0590
             
               &

            (0.1118)
             
               &

            0.0160
             
               &

            -0.0571
             
               &

            (0.0703)
             
               &

            0.0082

        \\

            h = 0.07
             
               &

            -0.0750
             
               &

            (0.0988)
             
               &

            0.0154
             
               &

            -0.0724
             
               &

            (0.0620)
             
               &

            0.0091

        \\

            h = 0.09
             
               &

            -0.0878
             
               &

            (0.0896)
             
               &

            0.0157
             
               &

            -0.0852
             
               &

            (0.0558)
             
               &

            0.0104

        \\

            h = 0.11
             
               &

            -0.0980
             
               &

            (0.0822)
             
               &

            0.0164
             
               &

            -0.0957
             
               &

            (0.0510)
             
               &

            0.0118

        \\

            h = 0.13
             
               &

            -0.1066
             
               &

            (0.0762)
             
               &

            0.0172
             
               &

            -0.1045
             
               &

            (0.0472)
             
               &

            0.0132

        \\

            h = 0.15
             
               &

            -0.1137
             
               &

            (0.0716)
             
               &

            0.0181
             
               &

            -0.1120
             
               &

            (0.0444)
             
               &

            0.0145

        \\

\toprule
\toprule

    % Header row
    estimators of&
    \multicolumn{3}{c|}{$N = 500$} &
    \multicolumn{3}{c}{$N = 1000$} \\

    $\Exp(B_{1}) = .52$& 
    bias & (std) & mse &
    bias & (std) & mse \\

    \hline

        OLS 
        
           &

        0.4136 
        
           &

        (0.0889) 
        
           &

        0.1790 
        
           &

        0.4142 
        
           &

        (0.0648) 
        
           &

        0.1757

    \\

        TSLS 
        
           &

        -0.0024 
        
           &

        (0.2757) 
        
           &

        0.0760 
        
           &

        0.0104 
        
           &

        (0.1877) 
        
           &

        0.0353

    \\

    \hline
    \hline

            h = 0.01
             
               &

            -0.0125
             
               &

            (0.2668)
             
               &

            0.0713
             
               &

            0.0094
             
               &

            (0.1700)
             
               &

            0.0290

        \\

            h = 0.03
             
               &

            0.0207
             
               &

            (0.2256)
             
               &

            0.0513
             
               &

            0.0338
             
               &

            (0.1491)
             
               &

            0.0234

        \\

            h = 0.05
             
               &

            0.0532
             
               &

            (0.2012)
             
               &

            0.0433
             
               &

            0.0621
             
               &

            (0.1333)
             
               &

            0.0216

        \\

            h = 0.07
             
               &

            0.0853
             
               &

            (0.1826)
             
               &

            0.0406
             
               &

            0.0908
             
               &

            (0.1211)
             
               &

            0.0229

        \\

            h = 0.09
             
               &

            0.1158
             
               &

            (0.1674)
             
               &

            0.0414
             
               &

            0.1192
             
               &

            (0.1111)
             
               &

            0.0265

        \\

            h = 0.11
             
               &

            0.1442
             
               &

            (0.1549)
             
               &

            0.0448
             
               &

            0.1463
             
               &

            (0.1028)
             
               &

            0.0320

        \\

            h = 0.13
             
               &

            0.1706
             
               &

            (0.1447)
             
               &

            0.0501
             
               &

            0.1719
             
               &

            (0.0963)
             
               &

            0.0388

        \\

            h = 0.15
             
               &

            0.1948
             
               &

            (0.1368)
             
               &

            0.0566
             
               &

            0.1955
             
               &

            (0.0914)
             
               &

            0.0466

        \\
    
    \toprule
\end{tabular}

    \caption{Performance of $\widehat{\beta}_{\R}$ as an estimator of $\Exp(B)$ 
        relative to ordinary least squares (OLS) and two stage least squares 
        (TSLS) in the dgp without first stage heterogeneity ($\gamma = 0$).}
    \label{tbl_gamma0}
\end{table}

Table \ref{tbl_gamma0} reports the performance of the first and second 
components of $\widehat{\beta}_{\R}$ as estimators of $\Exp(B_{0})$ and 
$\Exp(B_{1})$ relative to both the OLS and TSLS estimators in the case without first stage heterogeneity, $\gamma = 0$. As expected, the OLS estimator is inconsistent for both 
parameters. The results support the prediction of
\cite{heckmanvytlacil1998tjohr} and 
\cite{wooldridge1997el,wooldridge2003el,wooldridge2008ic} that TSLS is 
consistent for $\Exp(B_{1})$ but inconsistent for $\Exp(B_{0})$.  The 
performance of $\widehat{\beta}_{\R}$ is reported for a variety of bandwidth 
choices $h$. Our prediction of the consistency of both components of 
$\widehat{\beta}_{\R}$ at the $\sqrt{n}$ rate is supported by the decrease in 
mean squared error (mse) that occurs when increasing $N$ from $500$ to $1000$.  
Most remarkable is the performance of the second component of 
$\widehat{\beta}_{\R}$ as an estimator of $\Exp(B_{1})$ relative to TSLS. Our 
results suggest a mean-squared error that is actually slightly lower than TSLS 
across a broad range of bandwidth values. For smaller bandwidth values, both the 
bias and standard deviation (std) are comparable to TSLS, perhaps even being a 
bit smaller.

\begin{table}[H]
    \centering
    \begin{tabular}{c|ccc|ccc}
    \toprule
    % Header row
    estimators of&
    \multicolumn{3}{c|}{$N = 500$} &
    \multicolumn{3}{c}{$N = 1000$} \\

    $\Exp(B_{0}) = .23$& 
    bias & (std) & mse &
    bias & (std) & mse \\

    \hline

        OLS 
        
           &

        0.0532 
        
           &

        (0.0254) 
        
           &

        0.0035 
        
           &

        0.0537 
        
           &

        (0.0190) 
        
           &

        0.0032

    \\

        TSLS 
        
           &

        0.0927 
        
           &

        (0.0542) 
        
           &

        0.0115 
        
           &

        0.0912 
        
           &

        (0.0360) 
        
           &

        0.0096

    \\

    \hline
    \hline

            h = 0.01
             
               &

            -0.0067
             
               &

            (0.1562)
             
               &

            0.0244
             
               &

            -0.0157
             
               &

            (0.1146)
             
               &

            0.0134

        \\

            h = 0.03
             
               &

            -0.0226
             
               &

            (0.0993)
             
               &

            0.0104
             
               &

            -0.0228
             
               &

            (0.0684)
             
               &

            0.0052

        \\

            h = 0.05
             
               &

            -0.0349
             
               &

            (0.0887)
             
               &

            0.0091
             
               &

            -0.0331
             
               &

            (0.0595)
             
               &

            0.0046

        \\

            h = 0.07
             
               &

            -0.0451
             
               &

            (0.0836)
             
               &

            0.0090
             
               &

            -0.0432
             
               &

            (0.0549)
             
               &

            0.0049

        \\

            h = 0.09
             
               &

            -0.0543
             
               &

            (0.0797)
             
               &

            0.0093
             
               &

            -0.0528
             
               &

            (0.0516)
             
               &

            0.0055

        \\

            h = 0.11
             
               &

            -0.0626
             
               &

            (0.0761)
             
               &

            0.0097
             
               &

            -0.0615
             
               &

            (0.0486)
             
               &

            0.0061

        \\

            h = 0.13
             
               &

            -0.0704
             
               &

            (0.0729)
             
               &

            0.0103
             
               &

            -0.0696
             
               &

            (0.0461)
             
               &

            0.0070

        \\

            h = 0.15
             
               &

            -0.0776
             
               &

            (0.0702)
             
               &

            0.0109
             
               &

            -0.0771
             
               &

            (0.0439)
             
               &

            0.0079

        \\

    \toprule
    \toprule

    % Header row
    estimators of&
    \multicolumn{3}{c|}{$N = 500$} &
    \multicolumn{3}{c}{$N = 1000$} \\

    $\Exp(B_{1}) = .52$& 
    bias & (std) & mse &
    bias & (std) & mse \\

    \hline

        OLS 
        
           &

        0.3678 
        
           &

        (0.0926) 
        
           &

        0.1439 
        
           &

        0.3690 
        
           &

        (0.0665) 
        
           &

        0.1406

    \\

        TSLS 
        
           &

        0.1888 
        
           &

        (0.2578) 
        
           &

        0.1021 
        
           &

        0.2002 
        
           &

        (0.1742) 
        
           &

        0.0704

    \\

    \hline
    \hline

            h = 0.01
             
               &

            -0.0194
             
               &

            (0.3205)
             
               &

            0.1031
             
               &

            -0.0105
             
               &

            (0.2315)
             
               &

            0.0537

        \\

            h = 0.03
             
               &

            -0.0077
             
               &

            (0.1970)
             
               &

            0.0389
             
               &

            0.0053
             
               &

            (0.1310)
             
               &

            0.0172

        \\

            h = 0.05
             
               &

            0.0101
             
               &

            (0.1655)
             
               &

            0.0275
             
               &

            0.0213
             
               &

            (0.1100)
             
               &

            0.0126

        \\

            h = 0.07
             
               &

            0.0304
             
               &

            (0.1512)
             
               &

            0.0238
             
               &

            0.0389
             
               &

            (0.1009)
             
               &

            0.0117

        \\

            h = 0.09
             
               &

            0.0515
             
               &

            (0.1423)
             
               &

            0.0229
             
               &

            0.0579
             
               &

            (0.0949)
             
               &

            0.0124

        \\

            h = 0.11
             
               &

            0.0724
             
               &

            (0.1354)
             
               &

            0.0236
             
               &

            0.0769
             
               &

            (0.0901)
             
               &

            0.0140

        \\

            h = 0.13
             
               &

            0.0923
             
               &

            (0.1298)
             
               &

            0.0254
             
               &

            0.0956
             
               &

            (0.0864)
             
               &

            0.0166

        \\

            h = 0.15
             
               &

            0.1113
             
               &

            (0.1253)
             
               &

            0.0281
             
               &

            0.1138
             
               &

            (0.0836)
             
               &

            0.0199

        \\
    
    \toprule
\end{tabular}

    \caption{Performance of $\widehat{\beta}_{\R}$ as an estimator of $\Exp(B)$ 
        relative to ordinary least squares (OLS) and two stage least squares 
        (TSLS) in the dgp with first stage heterogeneity ($\gamma = 0.4$).}
    \label{tbl_gamma.4}
\end{table}

Table \ref{tbl_gamma.4} reports the same type of results as Table 
\ref{tbl_gamma0} for the case with first stage heterogeneity, $\gamma = .4$. Here, we see that the 
heterogeneity in the effect of $Z$ on $X$ leads to severe inconsistency for the 
TSLS estimator. On the other hand, $\widehat{\beta}_{\R}$ remains consistent and 
performs similarly to the case where $\gamma = 0$. We interpret these results as 
promising evidence in support of the practical applicability of our estimator to 
situations where heterogeneity in the first stage cannot be ruled out.

%%%%%%%%%%%%%%%%%%%%%%%%%%%%%%%%%%%%%%%%%%%%%%%%%%%%%%%%%%%%%%%%%%%%%%%%%%%%%%%%
%%%%%%%%%%%%%%%%%%%%%%%%%%%%%%%%%%%%%%%%%%%%%%%%%%%%%%%%%%%%%%%%%%%%%%%%%%%%%%%%

\section{Air Pollution and House Prices}
\label{sec_app}

In this section, we apply our results to analyze the relationship between air 
pollution and house prices. A large literature on hedonic methods uses 
relationships like this to infer the value of non-market amenities, such as 
clean air (e.g. \citealt{Rosen1974}, \citealt{SmithHuang1995}, 
\citealt{EkelandHeckmanNesheim2004}, \citealt{Palmquist2005}, 
\citealt{HeckmanMatzkinNesheim2010}). Reliable measurements of these valuations 
are important for quantifying the economic benefit of air quality regulation. We 
follow the empirical approach of \cite{chaygreenstone2005jope}. They argue that 
previous analysis based on cross-sectional OLS or first-differences yields 
small, zero, or perverse-signed effects due to omitted variables, such as 
unobserved economic shocks, or sorting of households based on unobserved 
preferences for clean air. To remedy this, they use regulation introduced by the 1970 Clean Air Act 
Amendments to define a binary instrument for change in total suspended 
particulates (TSP) from 1970 to 1980 and then use TSLS to estimate the effect of 
TSP changes on county-level house price changes.

Our analysis builds on Chay and Greenstone in several directions. As they note 
(page 393), a correlated random coefficients model is appropriate due to sorting 
of households based on unobserved preferences for clean air (we also discuss 
this below). We demonstrate substantial first stage heterogeneity in the effect 
of the instrument, which strongly suggests that the simpler estimators discussed by 
\cite{heckmanvytlacil1998tjohr} and 
\cite{wooldridge1997el,wooldridge2003el,wooldridge2008ic} would be inconsistent for the APE. Likewise, the binary instrument precludes approaches which rely on continuous variation, such as \cite{florensheckmanmeghiretal2008e}. For two subsets of counties where the instrument has a 
statistically significant effect on pollution levels, we estimate unweighted 
average partial effects of changes in pollution on changes in house prices. 
These estimates demonstrate patterns that are consistent with household sorting. 
Taken together, these estimates along with TSLS suggest there is substantial 
heterogeneity in households' value of clean air.

\subsection{The dataset and institutional background}

\nocite{CCDB1977}
\nocite{CCDB1983}
\nocite{TSPdata}

The 1970 Clean Air Act Amendments set national ambient air quality standards 
(NAAQS) for TSPs with the goal that all counties would eventually meet these 
standards. The law requires the U.S.\ Environmental Protection Agency (EPA) to annually designate each county as either 
attainment, if the county meets the standard, or as nonattaintment if it does 
not. Firms in nonattainment counties are subject to much stricter pollution 
regulations than firms in attainment counties. Consequently, nonattainment 
status should affect counties' pollution levels. Chay and Greenstone argue 
(pages 395--406) that a county's nonattainment status in 1975 and 1976 is 
plausibly independent of unobserved variables which change between 1970 and 1980 
and affect housing prices, such as unobserved economic shocks, as well as 
unobserved changes in clean air preferences from 1970 to 1980 due to sorting.  
In addition, there is no reason to expect that households care about 
nonattainment status above and beyond its effect on pollution; i.e., the 
exclusion restriction holds. For these reasons, Chay and Greenstone conclude 
that mid-decade nonattaintment status is a valid instrument for identifying 
causal effects of changes in TSP from 1970 to 1980 on changes in house prices 
from 1970 to 1980. We take the instrument definition and validity arguments as given and investigate the implications of allowing for first stage heterogeneity via our CRC estimator.

Our dataset is essentially identical to that of \cite{chaygreenstone2005jope}, 
as described in their data appendix (pages 419--421). We obtain house price data 
as well as covariates from the 1972 and 1983 County and City Data Books 
(obtained via ICPSR). This price and covariate data is only available at the 
county level and hence the units of analysis are counties. TSP pollution data 
may be downloaded from the EPA. One minor difference between our dataset and 
Chay and Greenstone's is that we do not have TSP data from 1969; this data is 
not available for download and the EPA has not responded to our requests. Chay 
and Greenstone define TSP levels for 1970 as the average of TSP levels for 
1969--1972. Since we are missing 1969, we average over just 1970--1972. To 
define the 1980 TSP level, we average over 1977--1980 levels, as in Chay and 
Greenstone. The annual TSP levels are derived by aggregating observations throughout the year at 
different pollution monitors located across the country, as in Chay 
and Greenstone (page 384). Also as in their paper (page 391), we use the TSP 
data from 1974 and 1975 to define the instrument as the binary indicator 
variable for mid-decade nonattaintment status, since data on the actual EPA 
designated nonattaintment status does not exist. A second difference between 
their paper and our analysis arises here. Of our 989 observations, our 
definition of the instrument yielded 300 nonattaintment counties, whereas Chay 
and Greenstone have only 280 nonattaintment counties out of 988 
observations.\footnote{This difference may arise due to an ambiguity in 
    determining whether a county violates the ``bad day rule'', which says that 
    the second largest daily TSP value within a year must not exceed 260 
    $\mu$g/m$^3$ and would place a county in nonattainment. For counties with multiple monitors, there are at least two 
    approaches: (1) compute the second highest daily TSP value for each monitor, 
    and say a county violates the rule if any monitor within the county violates 
    that rule, and (2) compute a county-level daily reading by averaging all 
    monitors for a given day, and then compute the second highest daily TSP from 
    that averaging. Our reading of the EPA regulations suggest that (1) is the 
    approach EPA used and hence is what we use as well. Approach (2) leads to 
    far fewer counties being designated as nonattaintment---222 out of 989. 
    Hence this approach cannot be what Chay and Greenstone used either. We are 
    unsure what they used, and neither author has responded to our requests for 
    clarification.} This difference may explain why our TSLS results in table 
\ref{EmpiricalResultsTable} differ somewhat from theirs.

Table \ref{summarystatstable} shows summary statistics along with a list of all 
covariates included in the analysis (see pages 420--421 of Chay and Greenstone 
for further explanation of the covariates). This table is comparable to Chay and 
Greenstone's table 1. All prices are adjusted to 1982--1984 dollars. Mean house prices increased from 
around \$40,000 to around \$53,000 while TSP levels fell by about 9 
$\mu$g/m$^3$. The goal of the instrumental variable analysis is to determine to what extent this correlation between the rise in house prices and fall in pollution levels reflects causal effects.

\begin{table}[h]
\centering
\caption{Summary Statistics, 1970 and 1980 \label{summarystatstable}}

\setlength{\linewidth}{.1cm} 
\newcommand{\contents}{ 

\begin{tabular}{l*{2}{c}} \hline\hline
                    &        1970&        1980\\
\hline
Mean housing value  &     40,268 &     53,046 \\
Mean TSPs           &       65.5 &       56.3 \\
Income per capita   &      7,530 &      9,279 \\
Total population    &163,880,811 &175,516,811 \\
Unemployment rate   &      .0455 &       .068 \\
\% employment in manufacturing&       .249 &       .226 \\
Population density  &        613 &        476 \\
\% $\geq$ high school graduate&       .504 &       .646 \\
\% $\geq$ college graduate&      .0971 &       .146 \\
\% urban            &       .576 &       .593 \\
\% poverty          &       .124 &      .0976 \\
\% white            &       .902 &       .877 \\
\% female           &        .51 &       .511 \\
\% senior citizens  &      .0997 &       .113 \\
\% overall vacancy rate&      .0336 &      .0782 \\
\% owner-occupied   &       .676 &       .638 \\
\% of houses without plumbing&      .0822 &      .0253 \\
Per capita government revenue&        748 &      1,138 \\
Per capita property taxes&        314 &        366 \\
Per capita general expenditures&        769 &      1,111 \\
\% spending on education&       .549 &       .509 \\
\% spending on highways&      .0909 &      .0698 \\
\% spending on welfare&      .0462 &      .0371 \\
\% spending on health&      .0486 &      .0669 \\
Observations        &        989 &        989 \\
\hline\hline
\multicolumn{3}{p{\linewidth}}{\footnotesize Statistics are based on the 989 counties with data on TSP in 1970, 1980, and 1974 or 1975, as well as nonmissing price data in both 1970 and 1980. Mean TSP for 1970 is the average of 1970 to 1972 annual TSP. Mean TSP for 1980 is the average of 1977 to 1980 annual TSP. Annual TSP for a county is the weighted average of the geometric mean of each monitor's TSP readings in the county, using the number of observations per monitor as weights. All dollar quantities are adjusted to 1982-1984 dollars (housing values use the housing only part of the CPI, series CUUR0000SAH; all other values use overall CPI, series CUUR0000SA0).}\\
\end{tabular}
}
\setbox0=\hbox{\contents} 
\setlength{\linewidth}{\wd0} 
\contents
\end{table}

\subsection{Empirical results}

Let $X$ denote the change in TSP between 1970 and 1980, $Z_2$ denote mid-decade 
nonattainment status, and $Z_1$ denote the vector of 22 covariates. As discussed 
in Section \ref{sec_est}, we begin by estimating a linear quantile regression of 
the treatment variable $X$ on the instrument and the covariates for several 
different quantiles (similar results obtain when the covariates are omitted). 
Figure \ref{fig_empirical_first_stage} plots the coefficient on the instrument 
$Z_2$ against the quantile used. Recall that the first stage assumption \ref{as_fs} implies that countries with small conditional ranks generally have smaller values of $X$---that is, larger drops in pollution---than counties with larger conditional ranks. A $r$th-quantile regression tells us the effect of the instrument for counties with conditional rank $R = r$. For example, the median quantile regression tells us the effect of the instrument for counties generally at the middle of the distribution of changes in TSP. For these counties, the coefficient is around $-0.1$, which suggests that being in a nonattaintment county caused pollution to drop by $-10$ $\mu$g/m$^3$ relative to 
attainment counties, all else equal. Recall from table \ref{summarystatstable} 
that the average TSP level in 1970 was 65.5 $\mu$g/m$^3$, and it fell by around 
9 $\mu$g/m$^3$. So a $-10$ $\mu$g/m$^3$ effect is quite large. The effect of 
$-20$ $\mu$g/m$^3$ for counties with the smallest conditional ranks is even larger. Note that the 
coefficient we find at the median, about $-10$ $\mu$g/m$^3$, is essentially 
equal to the coefficient obtained from a linear mean regression, as in Chay and 
Greenstone's table 4 panel A column 2.

\begin{figure}[h]
\centering
\includegraphics[width=130mm]{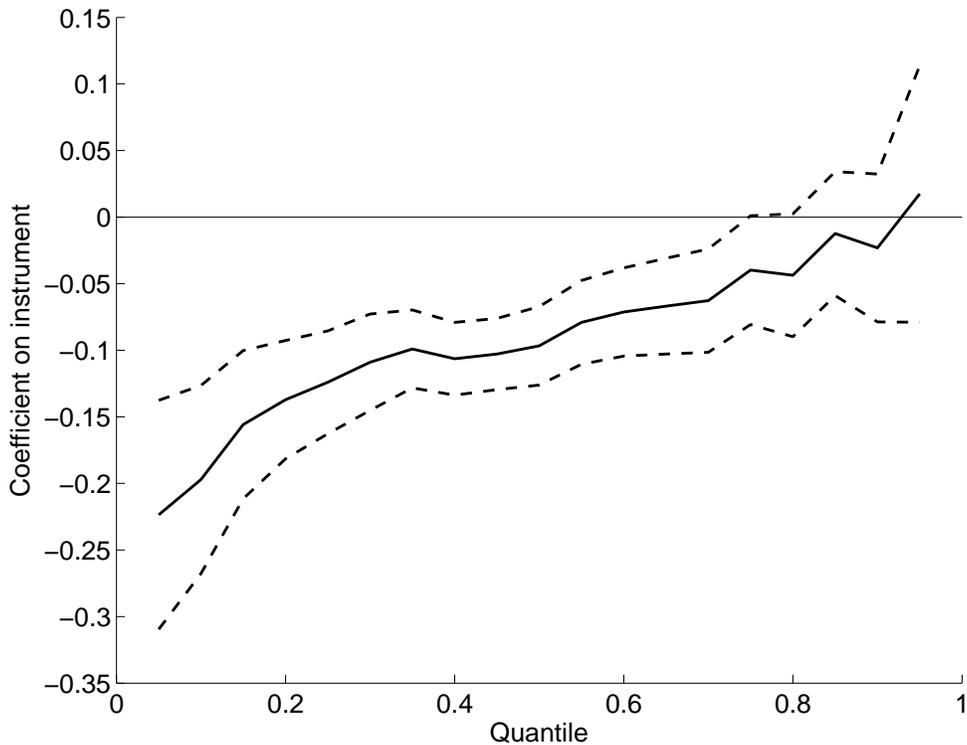}
\caption{\label{fig_empirical_first_stage} Plot of estimation results from 
    several linear quantile regressions of treatment $X$ (change in TSP, where 
    TSP here is in units of $1 \times 10^{-4}$ grams/m$^3$ rather than $1 \times 
    10^{-6}$ grams/m$^3 = \mu$g/m$^3$) on the instrument $Z_2$ (nonattainment 
    status) and controls. The solid line plots the estimated coefficient on the 
    instrument on the vertical axis against the corresponding quantile on the 
    horizontal axis, from $.05$ to $.95$ in $.05$ increments. The dotted lines 
    plot simultaneous confidence intervals for each of these quantiles.}
\end{figure}

The plot shows a heterogeneous effect of the instrument on treatment---the 
instrument has a strong negative effect at low quantiles, but this effect 
decreases towards zero for higher quantiles. For quantiles from about 0.75 to 1 
the instrument does not have a statistically significant effect on treatment. 
Even though the standard $F$-statistic for the instrument suggests that there is 
no weak instrument problem ($F = 25$), figure \ref{fig_empirical_first_stage} 
shows that the instrument is not uniformly strong for all counties, and indeed 
there are many counties (about 25\% of them) where the instrument appears to have no effect at all.

Counties with smaller conditional ranks have the smallest values of $X$---that 
is, the largest drops in pollution over the decade. Large drops in pollution are 
strongly negatively correlated with having a high baseline level of pollution in 
1970 ($\rho = -0.78$). Consequently, the heterogeneous effect of the instrument 
is to be expected. Counties in nonattainment are more heavily regulated than 
counties in attainment. But being regulated only matters if a county has a 
pollution problem to begin with. Hence the counties with the highest baseline 
pollution are also the ones where the instrument has a strongest effect. 
Conversely, the counties with the lowest baseline pollution (and hence lowest 
potential drops in pollution) are the ones where the instrument has essentially 
a zero effect.

Next we implement our generalized CRC estimator. We choose two different sets of 
conditional ranks to consider: $\mathcal{R} = [0.1,0.4]$ and $\mathcal{R} = 
[0.4,0.7]$. First, we omit estimating average partial effects near the tails as 
discussed in Section \ref{sec_asymptotics}. Second, we omit estimating average 
partial effects near the region where the instrument is weak or irrelevant, as 
discussed in Sections \ref{sec_model} and \ref{sec_asymptotics}. We split up the 
region $[0.1,0.7]$ for which we can estimate average partial effects into two 
pieces in order to examine potential heterogeneity in the effect of pollution on 
house prices. For each choice of $\mathcal{R}$ we use the following tuning 
parameters: 1999 points for the first stage grid (equation 
\ref{eq_prerearrangement}), which corresponds to 1999 linear quantile 
regressions (an equally spaced grid with step size $0.0005$). 2000 Halton 
draws for integration of $\widehat{\beta}(r)$ over $r \in \mathcal{R}$ (equation 
\ref{eq_betahat_intr}). We use 500 bootstrap draws to compute 95\% confidence 
intervals. Finally, we present a range of bandwidths from $h = 0.04$ to 
$h=0.085$. In our Monte Carlo simulations, the bandwidth $h= 0.07$ minimized MSE 
for the sample size $N=1000$ and when there was first stage heterogeneity (Table 
\ref{tbl_gamma.4}).

\begin{sidewaystable}[h]
%\begin{table}[h]
\centering\caption{Estimates of the effect of 1970--1980 changes in TSP pollution on changes in log housing values \label{EmpiricalResultsTable}} 
\setlength{\linewidth}{.1cm} 
\newcommand{\contents}{ 
\begin{tabular}{c c c c c} \hline\hline 
Estimator  & & & & \\ \hline 
OLS &   \multicolumn{2}{c}{0.0861} &   \multicolumn{2}{c}{0.0288} \\
  & \multicolumn{2}{c}{[  0.0079,   0.1643]} & \multicolumn{2}{c}{[ -0.0200,  0.0776]} \\ [0.5em]
 TSLS &  \multicolumn{2}{c}{-0.4149} &  \multicolumn{2}{c}{-0.2073} \\
  & \multicolumn{2}{c}{[ -0.7616,  -0.0682]} & \multicolumn{2}{c}{[ -0.4258,  0.0113]} \\ [0.5em]
 \hline \hline
  & (1) & (2) & (3) & (4) \\ [0.5em]
Generalized CRC estimator  & $\mathcal{R} = [0.1,0.4]$ & $\mathcal{R} = [0.4,0.7]$ & $\mathcal{R} = [0.1,0.4]$ &$\mathcal{R} = [0.4,0.7]$\\ [0.5em]
  \hline
$h=0.040$ &  -0.0241 &  -0.2067 &  -0.0664 &  -0.1517\\
  & [ -0.3011,   0.2314] & [ -0.7879,   0.4544] & [ -0.4395,  0.1585] & [ -0.7414,  0.4363] \\ [0.5em]
$h=0.0475$ &  -0.0252 &  -0.1766 &  -0.0640 &  -0.1535 \\
  & [ -0.2994,   0.2239] & [ -0.7635,   0.4540] & [ -0.3981,  0.1407] & [ -0.6894,  0.3974] \\ [0.5em]
$h=0.055$ &  -0.0261 &  -0.1545 &  -0.0670 &  -0.1580 \\
  & [ -0.2974,   0.2210] & [ -0.7419,   0.4498] & [ -0.3864,  0.1215] & [ -0.6441,  0.3634]\\ [0.5em]
$h=0.0625$ &  -0.0266 &  -0.1364 &  -0.0703 &  -0.1592 \\
  & [ -0.2940,   0.2187] & [ -0.6967,   0.4410] & [ -0.3719,  0.1179] & [ -0.6258,  0.3310]\\ [0.5em]
$h=0.0775$ &  -0.0258 &  -0.1073 &  -0.0755 &  -0.1544 \\
  & [ -0.2856,   0.2047] & [ -0.6551,   0.4255] & [ -0.3422,  0.1057] & [ -0.5988,  0.2751] \\ [0.5em]
$h=0.085$ &  -0.0248 &  -0.0954 &  -0.0784 &  -0.1524 \\
  & [ -0.2807,   0.2009] & [ -0.6329,   0.4430] & [ -0.3321,  0.1031] & [ -0.5927,  0.2636]\\ [0.5em]
 \hline 
Observations &      \multicolumn{2}{c}{983} &      \multicolumn{2}{c}{983}   \\ 
  County data book controls? & \multicolumn{2}{c}{No} & \multicolumn{2}{c}{Yes} \\ 
 \hline\hline \multicolumn{5}{p{\linewidth}}{\footnotesize Entries show estimates of coefficients on change in TSP over 1970-1980, and corresponding 95-percent confidence intervals for several different estimators: ordinary least squares, two-stage least squares, and a variety of bandwidths for our generalized correlated random coefficient model estimator. TSP here is in units of $1 \times 10^{-4}$ grams/m$^3$ rather than $1 \times 10^{-6}$ grams/m$^3 = \mu$g/m$^3$. Columns (1) and (3) show the generalized CRC estimates for conditional ranks over $\mathcal{R} = [0.1,0.4]$ while columns (2) and (4) use $\mathcal{R} = [0.4,0.7]$. All regressions are first differenced from 1970 to 1980. The outcome variable is 1980 log-housing value minus 1970 log-housing value. The treatment variable of interest is the 1980 TSP value minus the 1970 TSP value. All controls are also first differenced. The instrument is mid-decade nonattainment status (see body text for further details). OLS and TSLS confidence intervals are computed via asymptotic plug-in estimators of the heteroskedasticity robust standard errors. The generalized CRC model confidence intervals are computed using 500 bootstrap draws.} \\ 
\end{tabular}
} 
\setbox0=\hbox{\contents} 
\setlength{\linewidth}{\wd0} 
\contents 
%\end{table}
\end{sidewaystable}

Table \ref{EmpiricalResultsTable} shows the main results. There are four 
columns. Columns (1) and (2) show estimation results without any control 
variables while columns (3) and (4) show estimation results with the control 
variables. Chay and Greenstone present additional specifications which use a 
``flexible functional form'', but they do not specify what precisely they mean, 
and they have not responded to our requests for clarification. Hence we present 
only their first two specifications. Columns (1) and (3) show the generalized 
CRC estimates for the choice $\mathcal{R} = [0.1,0.4]$ while columns (2) and (4) 
show estimates for $\mathcal{R} = [0.4,0.7]$.

First, notice that OLS---which is a first-differences regression here---has the 
perverse sign (implying prices go up when pollution goes up), and is even statistically significant without covariates, as also 
found by Chay and Greenstone (their Table 3 panel C columns 1 and 2).  The TSLS 
results also mirror the main findings of Chay and Greenstone (their Table 5 
panel A columns 1 and 2): Without covariates, we find a large negative effect of 
pollution on log house prices: a 1 $\mu$g/m$^3$ reduction in mean TSP causes a 
0.42 percent increase in property values. With covariates, this effect size is 
cut in half and becomes marginally statistically insignificant. Chay and 
Greenstone's corresponding TSLS point estimate is $-0.213$ with a 95 percent 
confidence interval of $[-0.4,-0.025]$, which is not too different from the 
confidence interval we obtain (the difference is likely due to the data 
discrepancies previously mentioned).

Second, consider the bootstrap confidence intervals for the generalized CRC 
estimates. None of them are statistically significant. We have already seen that the instrument is strongest for 
smaller conditional ranks. Consequently, the confidence intervals are nearly 
twice as wide for the `weaker' instrument region $\mathcal{R} = [0.4,0.7]$ than 
for the `stronger' instrument region $\mathcal{R} = [0.1,0.4]$. For the 
region $\mathcal{R} = [0.1,0.4]$, the confidence intervals are roughly the same 
length as for TSLS for column (3), and they're actually smaller for column (1).  
But the point estimates for both of these columns are close to zero, as expected 
under the sorting story discussed below. Consequently, since the TSLS confidence 
interval was already quite close to zero, the generalized CRC estimate 
confidence intervals here have just been shifted over to be centered near 
zero.

Next consider the generalized CRC estimates. Begin by considering the point 
estimates. The point estimates are fairly insensitive to changes in bandwidths, 
especially after controlling for covariates. When controlling for covariates, 
the estimates for $\mathcal{R} = [0.4,0.7]$ are roughly twice as large as those 
for $\mathcal{R} = [0.1,0.4]$. A similar finding is true when comparing column 
(2) to column (1), although the difference in magnitudes is much larger without 
the controls. Moreover, all the CRC estimates are smaller than the TSLS 
estimates. These findings are consistent with the possibility that households 
sort according to their unobserved preference for clean air during the baseline year of 1970. 
In that case, households with the strongest taste for clean air will move to 
counties with low baseline pollution, while households who do not care about 
clean air will move to counties with high baseline pollution. As we saw earlier, 
the baseline pollution and conditional rank are strongly correlated. Hence 
counties where the instrument has the strongest effect are also counties where 
most households do not care about clean air, and so we find close to zero 
effects for $\mathcal{R} = [0.1,0.4]$. For counties with households that 
moderately care about clean air, $\mathcal{R} = [0.4,0.7]$, we find moderate 
sized effects. For counties with households that care strongly about clean air, 
$\mathcal{R} = [0.7,0.1]$, we are unable to identify their preferences, since 
those are precisely the counties where the instrument has little or no 
effect---because those are the counties with little pollution to begin with. The 
finding that TSLS is larger than all the CRC estimates suggests that the effects 
for counties with $\mathcal{R} = [0.7,1]$ are larger than $-0.15$, the effect we 
found for $\mathcal{R} = [0.4,0.7]$. This is because TSLS is a weighted average 
effect, where the weights depend on the strength of the instrument. Although 
counties in $\mathcal{R} = [0.7,1]$ will receive close to zero weight in forming 
the TSLS estimand, if their actual effect size is large enough it can still pull 
up the overall estimate.

In this application, we showed that the instrument has a naturally interpreted 
heterogeneous effect due to differences in baseline pollution levels. We showed 
that the data is consistent with sorting, and that comparing our CRC estimates 
with TSLS allows us to draw conclusions about effects for counties where the 
instrument is weak. Overall, our findings suggest that there is substantial 
heterogeneity in the size of the effect of pollution on housing prices.

%%%%%%%%%%%%%%%%%%%%%%%%%%%%%%%%%%%%%%%%%%%%%%%%%%%%%%%%%%%%%%%%%%%%%%%%%%%%%%%%
%%%%%%%%%%%%%%%%%%%%%%%%%%%%%%%%%%%%%%%%%%%%%%%%%%%%%%%%%%%%%%%%%%%%%%%%%%%%%%%%

\section{Conclusion}
\label{sec_conclusion}

In this paper we have studied a linear correlated random coefficients model. We 
provided conditions under which we can point identify the average partial and 
treatment effects of an endogenous treatment variable by using variation in an 
instrumental variable. In contrast to previous research, these conditions allow 
for heterogeneous effects of the instrument in the first stage equation, as well 
as binary or discrete instruments in many cases. Our identification argument led 
directly to a simple estimator of a trimmed average of the outcome coefficients. 
This estimator is just an average of weighted least squares regressions, where 
the weights depend on a first stage estimator. We established 
$\sqrt{n}$-consistency and asymptotic normality of this estimator, and showed 
that it performs well in finite sample simulations. We have illustrated how allowing for and analyzing heterogeneity in the first stage and in the outcome equation can be easily and fruitfully incorporated into a typical applied instrumental variables analysis. 

Several issues remain for future research. First, it may be theoretically 
interesting to modify our estimator to better account for boundary effects both 
due to kernel smoothing and first stage rank estimation. Second, we have not 
provided a method for choosing the bandwidth $h$, which is an important question 
in practice. Third, we assumed the set $\mathcal{R}$ for which the relevance 
condition \ref{as_rel} holds is known \textit{a priori}. In some applications, 
this is a reasonable assumption, such as in our empirical 
application. In other applications, it may not be reasonable. In principle, this 
set can be estimated in a preliminary step, and then the previous analysis can 
be repeated by using this estimated set $\widehat{\mathcal{R}}$ in place of 
$\mathcal{R}$. This extension is both nontrivial and of independent interest, 
and we leave it to future work. Fourth, it may be helpful to explore modifications of our proposed estimator to achieve efficiency gains. Finally, we are coding a Stata module that will enable practitioners to apply the estimator in this paper with minimal 
investment.

%%%%%%%%%%%%%%%%%%%%%%%%%%%%%%%%%%%%%%%%%%%%%%%%%%%%%%%%%%%%%%%%%%%%%%%%%%%%%%%%
%%%%%%%%%%%%%%%%%%%%%%%%%%%%%%%%%%%%%%%%%%%%%%%%%%%%%%%%%%%%%%%%%%%%%%%%%%%%%%%%

\appendix

\section{Proofs}
\label{ap_proofs}

\begin{proof}[\textbf{Proof of Proposition \ref{prop_controlvariable}}]
    For any $r \equiv (r_{1},\ldots,r_{d_{b}}) \in \interior \supp (R)$, $z \in 
    \supp(Z)$ and $b \in \re^{d_{w}}$ we have
    \begin{align*}
        \Prob\left[R_{k} \leq r_{k} \, \forall k, B \leq b \vert Z = z\right] &= 
        \Prob\left[X_{k} \leq Q_{X_{k} \vert Z}(r_{k} \vert z) \, 
            k=1,\ldots,d_{b}, B \leq b \vert Z = z\right] \\
        &= \Prob\left[h_{k}(z,V_{k}) \leq h_{k}(z,Q_{V_{k} \vert Z}(r_{k} \vert 
            z))\, \forall k, B \leq b \vert Z = z\right] \\
        &= \Prob\left[V_{k} \leq Q_{V_{k} \vert Z}(r_{k} \vert z)\, \forall k, B 
            \leq b \vert Z = z\right] \\
        &= \Prob\left[V_{k} \leq Q_{V_{k}}(r_{k})\, \forall k, B \leq b\right],
    \end{align*}
    where the first equality follows because for $k = 1,\ldots,d_{b}$, $X_{k} 
    \vert Z = z$ is continuous by \ref{as_fs} and $r_{k} \in (0,1)$, the second 
    follows by \ref{as_fs} and the equivariance to monotone transformations 
    property of quantiles, the third follows because $h_{k}(z, \cdot)$ is 
    strictly increasing by \ref{as_fs}, and the fourth uses \ref{as_ex}. Since 
    the right-hand side does not depend on $z$, we conclude that $(R, B) 
    \independent Z$. 
    
    The second statement follows because $R = r$ if and only if $X_{k} = 
    Q_{X_{k} \vert Z}(r_{k} \vert Z)$ for $k = 1,\ldots,d_{b}$. Hence, 
    conditional on $R = r$, $X_{k}$ is a stochastic function of $Z$ for $k = 
    1,\ldots,d_{b}$. Since, by \ref{as_derived}, the derived endogenous variables, $X_{k}$, $k > 
    d_{b}$, are functions of $X_{1},\ldots,X_{d_{b}}$ and $Z_{1}$, they are also 
    stochastic functions of $Z$ alone after conditioning on $R = r$. Thus, 
    conditional on $R = r$, $W \equiv [1, X', Z_{1}']'$ is only stochastic 
    through $Z$. Since $Z \independent B \vert R$ (as implied by $(R,B) 
    \independent Z$), this shows that $W \independent B \vert R$ as well.
\end{proof}

%%%%%%%%%%%%%%%%%%%%%%%%%%%%%%%%%%%%%%%%%%%%%%%%%%%%%%%%%%%%%%%%%%%%%%%%%%%%%%%%
%%%%%%%%%%%%%%%%%%%%%%%%%%%%%%%%%%%%%%%%%%%%%%%%%%%%%%%%%%%%%%%%%%%%%%%%%%%%%%%%
%%%%%%%%%%%%%%%%%%%%%%%%%%%%%%%%%%%%%%%%%%%%%%%%%%%%%%%%%%%%%%%%%%%%%%%%%%%%%%%%

\begin{proof}[\textbf{Proof of Theorem \ref{thm_an}}]
    We begin by rewriting the model as $Y_{i} \equiv W_{i}'\beta(r) + U_{i}(r)$, 
    where $U_{i}(r) \equiv W_{i}'(B_{i} - \beta(r))$. Substituting into 
    \eqref{eq_betahatr}, we have $\widehat{\beta}(r) = 
    \widehat{P}(r)^{+}\widehat{P}(r)\beta(r) + \widehat{P}(r)^{+}\widehat{A}(r)$ 
    with $\widehat{P}(r) \equiv 
    n^{-1}\sum_{i=1}^{n}\widehat{k}_{i}^{h}(r)W_{i}W_{i}'$ and $\widehat{A}(r) 
    \equiv n^{-1}\sum_{i=1}^{n}\widehat{k}_{i}^{h}(r)W_{i}U_{i}(r)$. For any 
    square matrix $M$, let $\sigma(M)$ denote the absolute value of the smallest 
    eigenvalue of $M$.  Then \ref{as_rel} and \ref{as_smoothness} imply that 
    $\inf_{r \in \mathcal{R}} \sigma(P(r)) = C$ for some $C > 0$. Defining
    $\widehat{1} \equiv \IndicSmall{\inf_{r \in \mathcal{R}} 
        \sigma(\widehat{P}(r)) > C/2}$, we have $\widehat{1}(\widehat{\beta}(r) 
    - \beta(r)) = \widehat{1}\widehat{P}(r)^{+}\widehat{A}(r)$,
    since $\widehat{1}(\widehat{P}(r)^{+}\widehat{P}(r) - I) = 0$ for all $r \in 
    \mathcal{R}$.\footnote{That $\widehat{1}$ is indeed a random variable (in 
        the sense of being a measurable function on the underlying sample space) 
        follows from Theorem 18.19 of \cite{aliprantisborder2006} combined with 
        standard results.}
    Centering, integrating over $\mathcal{R}$ and scaling by $\sqrt{n}$, we 
    obtain by Lemma \ref{lem_negligible} that
    \begin{align*}
        \widehat{1}\sqrt{n}\left(\int_{\R}\widehat{\beta}(r)\, dr - \int_{\R} 
            \beta(r)\,dr\right) = 
        \widehat{1}\sqrt{n}\int_{\R}P(r)^{-1}\widehat{A}(r)\, dr + \op{1}.
    \end{align*}

    Next, define $T_{i}(r) \equiv P(r)^{-1}W_{i}U_{i}(r)$ and note that our 
    assumptions only ensure that $T_{i}(r)$ is defined for $r \in \R$. For $r 
    \notin \R$ we use the convention that $\IndicSmall{r \in \R}T_{i}(r) = 0 
    \cdot T_{i}(r) = 0$, which will help to ease notation.\footnote{The 
        alternative would be to define $T_{i}(r)$ and related functions (its 
        derivatives, etc.) case by case, which seems unnecessarily tedious.} 
    Then
    \begin{align}
        \label{eq_pa_equal_alpha}
        \int_{\R}P(r)^{-1}\sqrt{n}\widehat{A}(r)\, dr &= 
        \frac{1}{\sqrt{n}}\sum_{i=1}^{n}\int \frac{1}{h}K\left(\frac{r - 
                \widehat{R}_{i}}{h}\right)\IndicSmall{r \in \R} T_{i}(r)\, dr 
        \notag \\
        &= \frac{1}{\sqrt{n}}\sum_{i=1}^{n} \int K(\eta) 
        \IndicSmall{\widehat{R}_{i} + h\eta \in \R}T_{i}(\widehat{R}_{i} + 
        h\eta)\, d\eta \notag \\
        &\stackrel{\text{a.s.}}{=} \frac{1}{\sqrt{n}}\sum_{i=1}^{n} 
        \IndicSmall{R_{i} \notin \boundary(\R)} \int 
        K(\eta)\IndicSmall{\widehat{R}_{i} + h\eta \in \R}T_{i} (\widehat{R}_{i} 
        + h\eta)\, d\eta,
    \end{align}
    where the second equality follows by changing the variable of integration 
    from $r$ to $\eta \equiv (r - \widehat{R}_{i})/h$ and the third equality 
    follows with $\boundary(\R)$ denoting the boundary of $\R$ and 
    $\stackrel{\text{a.s.}}{=}$ denoting almost-sure equality because 
    $\Prob[R_{i} \in \boundary(\R), \text{ any } i] = 0$.\footnote{Since 
        almost-sure equality is sufficient for determining limiting 
        distributions, we will drop the ``a.s.'' qualifier in the following.}  
    Note that every component of the vector $\IndicSmall{r \in \R}T_{i}(r)$ is 
    twice continuously differentiable at all $r \notin \boundary(\R)$, since 
    \ref{as_rel} and \ref{as_smoothness} imply that $T_{i}(r)$ is twice 
    continuously differentiable at all $r \in \R^{\circ}$ (the interior of $\R$) 
    with first and second component-wise derivatives $\dot{T}_{i}(r)$ and 
    $\ddot{T}_{i}(r)$, while $\IndicSmall{r \in \R}T_{i}(r) = 0$ for $r \notin 
    \R$.\footnote{The differentiability of $T_{i}(r)$ for $r \in \R^{\circ}$ can 
        be determined using the calculus rules for matrices of functions derived 
        in Section 6.5 of \cite{hornjohnson1991cupc}.} For $R_{i} \in 
    \R^{\circ}$, a second order element-by-element application of Young's form 
    of Taylor's Theorem yields
    \begin{align}
        \label{eq_taylorexpansion}
        &\IndicSmall{\widehat{R}_{i} + h\eta \in \R}T_{i}(\widehat{R}_{i} + 
        h\eta) \notag \\
        &\quad = \IndicSmall{R_{i} \in \R}\Big[T_{i}(R_{i}) + (\widehat{R}_{i} - 
        R_{i}) \dot{T}_{i}(R_{i}) + h\eta \dot{T}_{i}(R_{i}) + (\widehat{R}_{i} 
        - R_{i})^{2}(\ddot{T}_{i}(R_{i}) + o(1)) \notag \\
        &\qquad\qquad\qquad \quad + 2(\widehat{R}_{i} - R_{i})h\eta 
        (\ddot{T}_{i}(R_{i}) + o(1)) + (h\eta)^{2}(\ddot{T}_{i}(R_{i}) + 
        o(1))\Big].
    \end{align}

    Let $\alpha_{i}^{h} \equiv \IndicSmall{R_{i} \notin \boundary(\R)}\int 
    K(\eta)\IndicSmall{\widehat{R}_{i} + h\eta \in \R}T_{i}(\widehat{R}_{i} + 
    h\eta)\, d\eta$ so that from \eqref{eq_pa_equal_alpha} we have 
    $\int_{\R}P(r)^{-1}\sqrt{n}\widehat{A}(r)\, dr = 
    n^{-1/2}\sum_{i=1}^{n}\alpha_{i}^{h}$. Substituting the Taylor expansion in 
    \eqref{eq_taylorexpansion} into the definition of $\alpha_{i}^{h}$ and using 
    the symmetry of $K$, $\int K(\eta)\, d\eta = 1$ and $\IndicSmall{R_{i} \in 
        \R}\IndicSmall{R_{i} \notin \boundary(\R)} = \IndicSmall{R_{i} \in 
        \R^{\circ}}$, we obtain
    \begin{align}
        \label{eq_aftertaylorexpansion}
        \frac{1}{\sqrt{n}}\sum_{i=1}^{n}\alpha_{i}^{h} &= 
        \frac{1}{\sqrt{n}}\sum_{i=1}^{n} \IndicSmall{R_{i} \in \R^{\circ}} 
        \left[T_{i}(R_{i}) + (\widehat{R}_{i} - R_{i})\dot{T}_{i}(R_{i})\right] 
        \\
        &\quad + \frac{1}{\sqrt{n}}\sum_{i=1}^{n} \IndicSmall{R_{i} \in 
            \R^{\circ}}\left[(\widehat{R}_{i} - R_{i})^{2} + h^{2}\int 
            \eta^{2}K(\eta)\, d\eta\right](\ddot{T}_{i}(R_{i}) + o(1)). \notag
    \end{align}
    It can be shown through some tedious algebra that \ref{as_rel}, 
    \ref{as_smoothness} and \ref{as_moments_est} imply that 
    $\Exp[\IndicSmall{R_{i} \in 
        \mathcal{R}^{\circ}}\normSmall{\ddot{T}_{i}(R_{i})}]$ is finite. As a 
    result, the second term in \eqref{eq_aftertaylorexpansion} is $\op{1}$, 
    since by \ref{as_rank_estimation}
    \begin{align*}
        &\norm{\frac{1}{\sqrt{n}}\sum_{i=1}^{n}\IndicSmall{R_{i} \in 
                \R^{\circ}}(\widehat{R}_{i} - R_{i})^{2}(\ddot{T}_{i}(R_{i}) + 
            o(1))} \\
        &\qquad \leq \max_{i : R_{i} \in \R^{\circ}} 
        \sqrt{n}(\widehat{R}_{i} - 
            R_{i})^{2}\left[\frac{1}{n}\sum_{i=1}^{n}\IndicSmall{R_{i} \in 
                \R^{\circ}}\left(\normSmall{\ddot{T}_{i}(R_{i})} + 
                o(1)\right)\right] = \Op{n^{-1/2}},
    \end{align*}
    and by $\int \eta^{2}K(\eta)\,d\eta$ finite and \ref{as_bandwidth},
    \begin{align*}
        &\norm{\frac{1}{\sqrt{n}}\sum_{i=1}^{n}\IndicSmall{R_{i} \in 
                \R^{\circ}}\left[h^{2}\int\eta^{2}K(\eta)\, 
                d\eta\right](\ddot{T}_{i}(R_{i}) + o(1))} \\
        &\qquad \leq 
        \sqrt{n}O(h^{2})\left[\frac{1}{n}\sum_{i=1}^{n}\IndicSmall{R_{i} \in 
                \R^{\circ}}\left(\normSmall{\ddot{T}_{i}(R_{i})} + 
                o(1)\right)\right] = \op{1}.
    \end{align*}
    
    Substituting the asymptotically linear form for $\widehat{R}_{i} - R_{i}$ 
    given in \ref{as_rank_estimation} into the first term in 
    \eqref{eq_aftertaylorexpansion}, we obtain
    \begin{align}
        \label{eq_an_vstat_plus_neg}
        \frac{1}{\sqrt{n}}\sum_{i=1}^{n}\alpha_{i}^{h} &= 
        \frac{\sqrt{n}}{n^{2}}\sum_{i=1}^{n}\sum_{j=1}^{n}\IndicSmall{R_{i} \in 
            \mathcal{R}^{\circ}}\left( T_{i}(R_{i}) + \xi_{j}(X_{i} \vert 
            Z_{i})\dot{T}_{i}(R_{i})\right)  \notag \\
        &\qquad + \frac{1}{n}\sum_{i=1}^{n}
        \IndicSmall{R_{i} \in \mathcal{R}^{\circ}} \rho_{n}(X_{i} \vert 
        Z_{i})\dot{T}_{i}(R_{i})  + \op{1}.
    \end{align}
    Some tedious algebra shows that \ref{as_rel}, \ref{as_smoothness} and 
    \ref{as_moments_est} imply that $\Exp[\IndicSmall{R_{i} \in 
        \mathcal{R}^{\circ}}\normSmall{\dot{T}_{i}(R_{i})}^{4}]$ is finite.
    Consequently, the second term in \eqref{eq_an_vstat_plus_neg} is 
    asymptotically negligible under \ref{as_rank_estimation} since
    \begin{align*}
        &\norm{\frac{1}{n}\sum_{i=1}^{n}
            \IndicSmall{R_{i} \in \mathcal{R}^{\circ}} \rho_{n}(X_{i} \vert 
            Z_{i})\dot{T}_{i}(R_{i})} \\
        &\quad \leq  \left[\sup_{(x,z) \in \mathcal{XZ}(\R)}\absSmall{\rho_{n}(x 
                \vert z)}\right]
        \frac{1}{n}\sum_{i=1}^{n} \IndicSmall{R_{i} \in \mathcal{R}^{\circ}} 
        \normSmall{\dot{T}_{i}(R_{i})} = \op{1}.
    \end{align*}
    The first term in \eqref{eq_an_vstat_plus_neg} can be written as a 
    second-order V-statistic with a symmetric kernel by defining $M_{ij} \equiv 
    \IndicSmall{R_{i} \in \mathcal{R}^{\circ}}\left( T_{i}(R_{i}) + 
        \xi_{j}(X_{i} \vert Z_{i})\dot{T}_{i}(R_{i})\right)$ and $\zeta_{ij} 
    \equiv \frac{1}{2}(M_{ij} + M_{ji})$, so that
    \begin{align*}
        \frac{\sqrt{n}}{n^{2}}\sum_{i=1}^{n}\sum_{j=1}^{n}\IndicSmall{R_{i} \in 
            \mathcal{R}^{\circ}}\left( T_{i}(R_{i}) + \xi_{j}(X_{i} \vert 
            Z_{i})\dot{T}_{i}(R_{i})\right) =  
        \frac{\sqrt{n}}{n^{2}}\sum_{i=1}^{n}\sum_{j=1}^{n} \zeta_{ij}.
    \end{align*}
    As noted, it can be shown that $\Exp[\IndicSmall{R_{i} \in 
        \mathcal{R}^{\circ}} \normSmall{\dot{T}_{i}(R_{i})}^{4}]$ is finite 
    under our moment and smoothness assumptions. It can similarly be shown that 
    \ref{as_rel}, \ref{as_smoothness} and \ref{as_moments_est} imply that 
    $\Exp[\IndicSmall{R_{i} \in 
        \mathcal{R}^{\circ}}\normSmall{T_{i}(R_{i})}^{2}]$ is also finite. These 
    observations and \ref{as_rank_estimation} imply that
    \begin{align*}
        \Exp(\normSmall{M_{ij}}^{2}) &\leq \Exp\left(\IndicSmall{R_{i} \in 
                \R^{\circ}}\normSmall{T_{i}(R_{i})}^{2}\right) \\
        &\qquad + \Exp\left(\IndicSmall{R_{i} \in 
                \R^{\circ}}\normSmall{\dot{T}_{i}(R_{i})}^{4}\right)^{1/2}
        \Exp\left(\xi_{j}(X_{i} \vert Z_{i})^{4}\right)^{1/2} < \infty,
    \end{align*}
    for both $j \neq i$ and $j = i$. Since $\Exp(\normSmall{\zeta_{ij}}^{2}) 
    \leq \frac{1}{2}\Exp(\normSmall{M_{ij}}^{2})$, this implies that 
    $\Exp(\normSmall{\zeta_{ij}}^{2}) < \infty$ for all $i$ and $j$, by which we 
    can conclude that
    \begin{align*}
        \frac{\sqrt{n}}{n^{2}}\sum_{i=1}^{n}\sum_{j=1}^{n} \zeta_{ij} = 
        \sqrt{n}{n \choose 2}^{-1}\sum_{i < j} \zeta_{ij} + \op{1}.
    \end{align*}
    That is, the second-order V-statistic is asymptotically equivalent to a 
    second-order U-statistic, see e.g.\ pg. 206 of 
    \cite{serfling1980}.\footnote{Serfling's discussion is limited to 
        scalar-valued variables, but the modification for vector-valued 
        variables is immediate.} Recapping, we have now shown that
    \begin{align}
        \label{eq_ustat}
        \widehat{1}\sqrt{n}\left(\int_{\R}\widehat{\beta}(r)\, dr - \int_{\R} 
            \beta(r)\,dr\right) = \widehat{1}\sqrt{n}{n \choose 2}^{-1}\sum_{i < 
            j} \zeta_{ij} + \op{1}.
    \end{align}
    
    We determine the projection of the U-statistic in \eqref{eq_ustat} by 
    computing $\Exp(\zeta_{ij} \vert i)$ for $j \neq i$, where the notation 
    $\Exp(\cdot \vert i)$ is shorthand for $\Exp(\cdot \vert Y_{i}, X_{i}, 
    Z_{i})$. First, since $\Exp[\xi_{j}(X_{i} \vert Z_{i}) \vert i] = 0$ for 
    $R_{i} \in \R^{\circ}$ by \ref{as_iid} and \ref{as_rank_estimation},
    \begin{align*}
        \Exp(M_{ij} \vert i) &= \Exp\left[\IndicSmall{R_{i} \in 
                \mathcal{R}^{\circ}}\left( T_{i}(R_{i}) + \xi_{j}(X_{i} \vert 
                Z_{i})\dot{T}_{i}(R_{i})\right) \vert i \right] \\
        &= \IndicSmall{R_{i} \in \mathcal{R}^{\circ}}\left(T_{i}(R_{i}) + 
            \Exp(\xi_{j}(X_{i} \vert Z_{i}) \vert i)\dot{T}_{i}(R_{i})\right) 
        \stackrel{\text{a.s.}}{=} \IndicSmall{R_{i} \in \mathcal{R}}T_{i}(R_{i}) 
        \equiv \zeta_{1i},
    \end{align*}
    which has mean zero, because $\Exp(\zeta_{1i} \vert R_{i} = r) = 
    \IndicSmall{r \in \mathcal{R}}P(r)\Exp(W_{i}U_{i}(r) \vert R_{i} = r) = 0$ 
    and $\Exp(W_{i}U_{i}(r) \vert R_{i} = r) = \Exp(W_{i}W_{i}'(B_{i} - 
    \beta(r)) \vert R_{i} = r) = 0$ by Proposition \ref{prop_controlvariable}.  
    Second,
    \begin{align*}
        \Exp(M_{ji} \vert i) &= \Exp\left[\IndicSmall{R_{j} \in 
                \mathcal{R}^{\circ}}\left( T_{j}(R_{j}) + \xi_{i}(X_{j} \vert 
                Z_{j})\dot{T}_{j}(R_{j})\right) \vert i \right] \\
        &= \Exp[\IndicSmall{R_{j} \in \mathcal{R}^{\circ}}T_{j}(R_{j})] + 
        \Exp\left[\IndicSmall{R_{j} \in \mathcal{R}^{\circ}}\xi_{i}(X_{j} \vert 
            Z_{j})\dot{T}_{j}(R_{j}) \vert i \right] \\
        &= \Exp(\zeta_{1j}) + \Exp\left[\IndicSmall{R_{j} \in \R^{\circ}} 
            \xi_{i}(X_{j} \vert Z_{j})\Exp(\dot{T}_{j}(R_{j}) \vert R_{j}, 
            Z_{j}) \vert i \right] \\
        &= -\Exp[\IndicSmall{R_{j} \in \R}\xi_{i}(X_{j} \vert 
        Z_{j})P(R_{j})^{-1}W_{j}W_{j}'\dot{\beta}(R_{j}) \vert i] \equiv 
        \zeta_{2i},
    \end{align*}
    where the third equality uses the law of iterated expectations (noting that 
    $X_{j}$ is deterministic conditional on $R_{j}, Z_{j}$) and the fourth 
    equality uses $\dot{T}_{j}(r) = -P(r)^{-1}[\dot{P}(r)P(r)^{-1}W_{j}U_{j}(r) 
    + W_{j}W_{j}'\dot{\beta}(r)]$ and $\Exp[U_{j}(R_{j}) \vert R_{j}, Z_{j}] = 
    0$.\footnote{This expression for the component-wise derivative of $T_{j}(r)$ 
        follows immediately from the rules in Section 6.5 of 
        \cite{hornjohnson1991cupc}.} Applying the law of iterated expectations 
    shows that $\Exp(\zeta_{2i}) = 0$, since $\Exp(\xi_{i}(x \vert z)) = 0$ for 
    $(x,z) \in \mathcal{XZ}(\R)$ by \ref{as_rank_estimation}. We have now shown 
    that $\Exp(\zeta_{ij} \vert i) = \frac{1}{2}(\zeta_{1i} + \zeta_{2i})$, 
    $\Exp(\zeta_{ij}) = \Exp[\Exp(\zeta_{ij} \vert i)] = 0$ and 
    $\Exp(\normSmall{\zeta_{ij}}^{2}) < \infty$, so that by the central limit 
    theorem for U-statistics (e.g.\ Theorem A on page 192 of 
    \citealt{serfling1980})\footnote{Serfling's discussion is limited to 
        scalar-valued variables. The original work by \cite{hoeffding1948taoms} 
        contains an explicit statement of the vector case; see also Chapter 5 of 
        \cite{kowalskitu2008} for a modern treatment.}
    \begin{align}
        \label{eq_ustat_limiting_dist}
        \sqrt{n}{n \choose 2}^{-1}\sum_{i < j} \zeta_{ij} \rightsquigarrow 
        N\left(0, \Exp[(\zeta_{1i} + \zeta_{2i})(\zeta_{1i} + 
            \zeta_{2i})']\right).
    \end{align}
    Combining equations \eqref{eq_ustat} and \eqref{eq_ustat_limiting_dist} and 
    applying Slutsky's theorem with $\widehat{1} \plim 1$ from Lemma 
    \ref{lem_negligible}, we have
    \begin{align*}
        \sqrt{n}\left(\int_{\R} \widehat{\beta}(r)\, dr - \int_{\R} \beta(r)\, 
            dr\right) \rightsquigarrow N\left(0, \Exp[(\zeta_{1i} + 
            \zeta_{2i})(\zeta_{1i} + \zeta_{2i})']\right).
    \end{align*}
    The result now follows after scaling both sides by $\lambda(\R)^{-1}$.
\end{proof}

%%%%%%%%%%%%%%%%%%%%%%%%%%%%%%%%%%%%%%%%%%%%%%%%%%%%%%%%%%%%%%%%%%%%%%%%%%%%%%%%
%%%%%%%%%%%%%%%%%%%%%%%%%%%%%%%%%%%%%%%%%%%%%%%%%%%%%%%%%%%%%%%%%%%%%%%%%%%%%%%%

\begin{lemma}
    \label{lem_negligible}
    Under the assumptions of Theorem \ref{thm_an}, $\widehat{1} \plim 1$ and
    \begin{align*}
        \widehat{1}\int_{\R}[\widehat{P}(r)^{+} - 
        P(r)^{-1}]\widehat{A}(r)\, dr = \op{n^{-1/2}}.
    \end{align*}
\end{lemma}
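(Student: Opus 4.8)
The plan is to reduce both assertions to two uniform rate bounds on the building blocks $\widehat{P}(r) \equiv n^{-1}\sum_{i}\widehat{k}_{i}^{h}(r)W_{i}W_{i}'$ and $\widehat{A}(r) \equiv n^{-1}\sum_{i}\widehat{k}_{i}^{h}(r)W_{i}U_{i}(r)$ from the proof of Theorem \ref{thm_an}: (i) $\sup_{r\in\R}\normSmall{\widehat{P}(r) - P(r)} = \Op{\sqrt{\log n/(nh)} + h^{2} + n^{-1/2}}$, and (ii) $\int_{\R}\normSmall{\widehat{A}(r)}\,dr = \Op{(nh)^{-1/2} + h^{2} + n^{-1/2}}$. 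Granting these, the lemma is immediate. For $\widehat{1}\plim 1$: as observed in the discussion of \ref{as_smoothness}, the combination of \ref{as_rel}, \ref{as_smoothness} and the compactness of $\R$ (\ref{as_rset}) gives $\inf_{r\in\R}\sigma(P(r)) = C > 0$, and since $\widehat{P}(r)$ and $P(r)$ are symmetric, Weyl's inequality yields $\sigma(\widehat{P}(r)) \geq C - \sup_{r\in\R}\normSmall{\widehat{P}(r) - P(r)}$ uniformly in $r$, so $\{\sup_{r\in\R}\normSmall{\widehat{P}(r) - P(r)} < C/2\} \subseteq \{\widehat{1} = 1\}$ and (i) forces $\Prob[\widehat{1} = 1]\to 1$. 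For the second claim, on $\{\widehat{1} = 1\}$ each $\widehat{P}(r)$ with $r\in\R$ is invertible, $\widehat{P}(r)^{+} = \widehat{P}(r)^{-1}$, $\normSmall{\widehat{P}(r)^{-1}}\leq 2/C$, and $\normSmall{P(r)^{-1}}\leq 1/C$, so the resolvent identity $\widehat{P}^{-1} - P^{-1} = -\widehat{P}^{-1}(\widehat{P} - P)P^{-1}$ gives
\[
    \widehat{1}\norm{\int_{\R}[\widehat{P}(r)^{+} - P(r)^{-1}]\widehat{A}(r)\, dr} \leq \frac{2}{C^{2}}\left(\sup_{r\in\R}\normSmall{\widehat{P}(r) - P(r)}\right)\int_{\R}\normSmall{\widehat{A}(r)}\, dr.
\]
Multiplying by $\sqrt{n}$ and plugging in (i)--(ii), the dominant cross term is $\sqrt{n}\,\sqrt{\log n/(nh)}\,(nh)^{-1/2} = \sqrt{\log n}/(\sqrt{n}h)$, which tends to $0$ because $\sqrt{n}h/\log n\to\infty$ by \ref{as_bandwidth}; every other product is of smaller order (the $h^{2}$ pieces use $h\to 0$ and $\sqrt{n}h^{2}\to 0$, and the $n^{-1/2}$ pieces are absorbed since $\sqrt{n}\cdot n^{-1/2} = 1$ multiplies an $\op{1}$ factor). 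Hence the integral is $\op{n^{-1/2}}$.

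It remains to prove (i) and (ii), which is the substantive part. Let $\widetilde{P}(r)$ and $\widetilde{A}(r)$ be the infeasible analogues of $\widehat{P}(r), \widehat{A}(r)$ built from the true ranks $R_{i}$ rather than $\widehat{R}_{i}$. Since $R$ is uniform (density one on $\R\subseteq[\delta,1-\delta]$) and $\Exp[W_{i}U_{i}(r)\vert R_{i} = r] = \Exp[W_{i}W_{i}'(B_{i} - \beta(r))\vert R_{i} = r] = 0$ by Proposition \ref{prop_controlvariable}, the oracle objects are handled by standard kernel-estimation bookkeeping: the symmetry and second-moment conditions on $K$ (\ref{as_kernel}) together with the $C^{2}$ smoothness of $P$ and $\beta$ (\ref{as_smoothness}) make the biases $O(h^{2})$, the eighth-moment conditions (\ref{as_moments_est}) make the variances $O((nh)^{-1})$, a maximal inequality over the compact $\R$ upgrades the latter to the uniform $\sqrt{\log n/(nh)}$ rate for $\widetilde{P}$, and for $\widetilde{A}$ (whose pointwise $L^{2}$ norm is $O((nh)^{-1/2} + h^{2})$) Fubini plus Markov give $\int_{\R}\normSmall{\widetilde{A}(r)}\,dr = \Op{(nh)^{-1/2} + h^{2}}$. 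The remaining step is to pass from the oracle objects to the feasible ones, i.e.\ to bound $\sup_{r\in\R}\normSmall{\widehat{P}(r) - \widetilde{P}(r)}$ and $\int_{\R}\normSmall{\widehat{A}(r) - \widetilde{A}(r)}\,dr$. This is exactly a kernel estimation problem with a nonparametrically generated covariate, and \ref{as_kernel} (compactly supported, twice continuously differentiable $K$) and \ref{as_rank_estimation} ($\widehat{F}_{X\vert Z}$ being $\sqrt{n}$--consistent, asymptotically linear, and eventually in a class with entropy $\log N(\epsilon,\mathcal{F},\normSmall{\cdot}_{\infty}) < C\epsilon^{-1/2}$) are precisely the hypotheses under which the uniform stochastic expansions of \cite{mammenrotheschienle2012taos} apply; these deliver the two differences at a rate small enough that, combined with \ref{as_bandwidth}, it is absorbed into the $n^{-1/2}$ terms appearing in (i) and (ii).

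I expect the generated-regressor step to be the main obstacle, and it is not a mere technicality. A naive bound that Lipschitz-controls the perturbed weights, $\absSmall{h^{-1}K((\widehat{R}_{i} - r)/h) - h^{-1}K((R_{i} - r)/h)}\leq h^{-2}\normSmall{K'}_{\infty}\absSmall{\widehat{R}_{i} - R_{i}}$, only yields a bound of order $h^{-2}n^{-1/2}$ for $\sup_{r\in\R}\normSmall{\widehat{P}(r) - \widetilde{P}(r)}$, which \emph{diverges} under \ref{as_bandwidth} precisely because $\sqrt{n}h^{2}\to 0$. The resolution must exploit averaging across $i$: the leading effect of substituting $\widehat{R}_{i}$ for $R_{i}$ is an empirical process in which the $h^{-2}K'$ factors concentrate near an $O(1)$ mean rather than their $O(h^{-2})$ worst case, and the first-stage influence functions $\xi_{j}$ enter through a double sum whose projection is $\Op{n^{-1/2}}$; making this precise uniformly over $r\in\R$ (together with the first-stage remainder $\rho_{n}$, which is where the exclusion of extremal ranks in \ref{as_rset} is needed so that $\sup_{(x,z)\in\mathcal{XZ}(\R)}\absSmall{\rho_{n}(x\vert z)} = \op{1}$ from \ref{as_rank_estimation} is available) is exactly what \cite{mammenrotheschienle2012taos} supplies. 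Verifying that Assumptions I and E imply every regularity hypothesis of that reference — in particular the smoothness and moment conditions it imposes on the relevant conditional expectations — is the part demanding the most care, but involves no new ideas.
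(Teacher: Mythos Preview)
Your proposal is correct and follows essentially the same route as the paper's proof: split $\widehat{P}-P$ and $\widehat{A}$ through their oracle analogues $\widetilde{P},\widetilde{A}$, handle the oracle pieces by standard uniform kernel bounds, and invoke Lemma~1 of \cite{mammenrotheschienle2012taos} for the generated-regressor differences, then combine via the resolvent identity. The only cosmetic differences are that the paper bounds $\sup_{r\in\R}\normSmall{\widehat{A}(r)}$ rather than your $\int_{\R}\normSmall{\widehat{A}(r)}\,dr$ (costing an extra $\sqrt{\log n}$ that is immaterial), and that the Mammen--Rothe--Schienle rate is $\Op{\log(n)n^{-1/2}}$ rather than the $\Op{n^{-1/2}}$ you write---but this is still dominated by $\sqrt{\log n/(nh)}$ under \ref{as_bandwidth}, so nothing changes.
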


\begin{proof}[\textbf{Proof of Lemma \ref{lem_negligible}}]
    Let $J(r) \equiv \widehat{P}(r)^{+} - P(r)^{-1}$, let $\norm{\cdot}_{1}$ 
    denote the $l_{1}$ norm and let $\norm{\cdot}_{1,op}$ denote the matrix 
    operator norm induced by the $l_{1}$ norm. Then
    \begin{align}
        \label{eq_maindecomposition}
        \norm{\widehat{1}\int_{\R}J(r)\widehat{A}(r)\, dr}_{1} &\leq 
        \widehat{1}\int_{\R}\normSmall{J(r)\widehat{A}(r)}_{1}\, dr \notag \\
        &\leq 
        \widehat{1}\int_{\R}\normSmall{J(r)}_{op,1}\normSmall{\widehat{A}(r)}_{1}\,dr 
        \leq \widehat{1}\sup_{r \in \R} \normSmall{J(r)}_{op,1}\sup_{r \in \R} 
        \normSmall{\widehat{A}(r)},
    \end{align}
    where the first inequality follows because for vector-valued function $x: 
    \re \rightarrow \re^{K}$ and component-wise integration we have 
    $\sum_{k=1}^{K} \absSmall{\int x(r)\, dr} \leq \sum_{k=1}^{K} \int 
    \absSmall{x(r)}\, dr = \int\normSmall{x(r)}_{1}\, dr$, and the second 
    inequality uses the sub-multiplicative property of the matrix operator norm.  

    First, we consider the behavior of
    \begin{align*}
        \sup_{r \in \R}\normSmall{\widehat{P}(r) - P(r)} \leq \sup_{r \in 
            \R}\normSmall{\widehat{P}(r) - \widetilde{P}(r)} + \sup_{r \in 
            \R}\normSmall{\widetilde{P}(r) - P(r)},
    \end{align*}
    where $\widetilde{P}(r) \equiv n^{-1}\sum_{i=1}^{n}k_{i}^{h}(r)W_{i}W_{i}'$ 
    with $k_{i}^{h} \equiv h^{-1}K((R_{i} - r)/h)$. Notice that both 
    $\widehat{P}(r)$ and $\widetilde{P}(r)$ are Nadaraya-Watson kernel 
    regression estimators of the matrix $P(r) \equiv \Exp[WW' \vert R = 
    r]f_{R}(r)$ with $f_{R}(r) = 1$, but that the weights in $\widehat{P}(r)$ 
    use the generated regressor $\widehat{R}_{i}$, while the weights in 
    $\widetilde{P}(r)$ use $R_{i}$.  Recent work on nonparametric regression 
    with generated regressors has established that $\sup_{r \in 
        \R}\normSmall{\widehat{P}(r) - \widetilde{P}(r)} = \Op{\log(n)n^{-1/2}}$ 
    under our assumptions.\footnote{\label{fn_mammen} In particular, we appeal 
        to Lemma 1 of \cite{mammenrotheschienle2012taos}, but see also 
        \cite{sperlich2009ej}, \cite{mammenrotheschienle2013wp}, 
        \cite{hahnridder2013e}, \cite{lee2013wp} and 
        \cite{escancianojacho-chavezlewbel2014joe} for related results. We 
        verify the conditions for Lemma 1 of \cite{mammenrotheschienle2012taos}.  
        In their Assumption 1, (i) is \ref{as_iid}, (ii) is satisfied with $R 
        \sim \text{Unif}[0,1]$, (iii) is \ref{as_smoothness}, (iv) is not used 
        in the proof of their Lemma 1, (v) is \ref{as_kernel} and (vi) is met 
        under \ref{as_bandwidth} and \ref{as_rank_estimation}. Their Assumptions 
        2 and 3 are satisfied by our \ref{as_rank_estimation}, while their 
        Assumption 4 is not used in the proof of their Lemma 1. The rate of 
        $\Op{\log(n)n^{-1/2}}$ is determined by computing $\kappa_{1}$ on pg.  
        1141 and observing their notational convention of leaving out $\log(n)$ 
        terms.} 
    Using standard results in the literature, our assumptions also ensure that 
    $\sup_{r \in \R} \normSmall{\widetilde{P}(r) - P(r)} = 
    \Op{(\log(n)/nh)^{1/2} + h^{2}}$, with the dominant rate being 
    $(\log(n)/nh)^{1/2}$ given \ref{as_bandwidth}.\footnote{For example, see 
        Lemma B3 of \cite{newey1994et}.} Since this rate also dominates 
    $\log(n)n^{-1/2}$, it follows that $\sup_{r \in \mathcal{R}} 
    \absSmall{\widehat{P}(r) - P(r)} = \Op{(\log(n)/nh)^{1/2}}$. Note that this 
    also implies that $\widehat{1} \plim 1$. 
    Using these observations, the definition of $\widehat{1}$ and \ref{as_rel} 
    with \ref{as_smoothness}, we have that
    \begin{align*}
        \widehat{1}\sup_{r \in \R}\normSmall{J(r)} &= \widehat{1}\sup_{r \in 
            \R}\normSmall{\widehat{P}(r)^{-1}(P(r) - \widehat{P}(r))P(r)^{-1}} 
        \\
        &\leq \widehat{1}\sup_{r \in \R}\normSmall{\widehat{P}(r)^{-1}}\sup_{r 
            \in \R}\normSmall{\widehat{P}(r) - P(r)}\sup_{r \in 
            \R}\normSmall{P(r)^{-1}} = \Op{(\log(n)/nh)^{1/2}}.
    \end{align*}
    The same rate applies to $\widehat{1}\sup_{r \in \R} 
    \normSmall{J(r)}_{op,1}$, because finite-dimensional norms are equivalent.

    A rate of convergence for $\sup_{r \in \R} \normSmall{\widehat{A}(r)}$ 
    follows similarly after using the definition of $U_{i}(r) \equiv 
    W_{i}'(B_{i} - \beta(r))$ to write
    \begin{align*}
        \widehat{A}(r) &= 
        \frac{1}{n}\sum_{i=1}^{n}\widehat{k}_{i}^{h}(r)W_{i}W_{i}'B_{i} - 
        \widehat{P}(r)\beta(r) \\
        \text{ and } \quad \widetilde{A}(r) &= 
        \frac{1}{n}\sum_{i=1}^{n}k_{i}^{h}(r)W_{i}W_{i}'B_{i} - 
        \widetilde{P}(r)\beta(r).
    \end{align*}
    The difference of the first two terms in these expressions is uniformly 
    $\Op{\log(n)n^{-1/2}}$ again by Lemma 1 of 
    \cite{mammenrotheschienle2012taos}.\footnote{The verification for most of 
        their conditions is as in footnote \ref{fn_mammen}. Their
        Assumption 1 (iii) is satisfied since each component of $\Exp[WW'B \vert 
        R = r] = P(r)\beta(r)$ is twice continuously differentiable by 
        \ref{as_smoothness}.} Since $\beta(r)$ is bounded uniformly over $\R$ by 
    \ref{as_rset} and \ref{as_smoothness}, the difference of the second two 
    terms is also $\Op{\log(n)n^{-1/2}}$ using the already established rate for 
    $\sup_{r \in \R}\normSmall{\widehat{P}(r) - \widetilde{P}(r)}$, and hence 
    $\sup_{r \in \R}\normSmall{\widehat{A}(r) - \widetilde{A}(r)} = 
    \Op{\log(n)n^{-1/2}}$. Also, since $\Exp[WW'B \vert R = r] - \Exp[WW' \vert 
    R = r]\beta(r) = 0$ by Proposition \ref{prop_controlvariable}, standard 
    results again imply that $\sup_{r \in \R} \normSmall{\widetilde{A}(r)} = 
    \Op{(\log(n)/nh)^{1/2}}$ under our assumptions. We conclude that $\sup_{r 
        \in \R} \normSmall{\widehat{A}(r)} = \Op{(\log(n)/nh)^{1/2}}$. This 
    establishes the claim via \eqref{eq_maindecomposition}, since under 
    \ref{as_bandwidth}, $\sqrt{n}\log(n)(nh)^{-1} = \log(n)/(\sqrt{n}h) 
    \rightarrow 0$.
\end{proof}

%%%%%%%%%%%%%%%%%%%%%%%%%%%%%%%%%%%%%%%%%%%%%%%%%%%%%%%%%%%%%%%%%%%%%%%%%%%%%%%%
%%%%%%%%%%%%%%%%%%%%%%%%%%%%%%%%%%%%%%%%%%%%%%%%%%%%%%%%%%%%%%%%%%%%%%%%%%%%%%%%

\singlespacing
\small

\nocite{2010JEPsymposium}
\nocite{2010JELsymposium}
\bibliographystyle{ecta}

\newpage
\bibliography{crc}

\begin{thebibliography}{61}
\newcommand{\enquote}[1]{``#1''}
\expandafter\ifx\csname natexlab\endcsname\relax\def\natexlab#1{#1}\fi

\bibitem[\protect\citeauthoryear{Aliprantis and Border}{Aliprantis and
  Border}{2006}]{aliprantisborder2006}
\textsc{Aliprantis, C.~D. and K.~C. Border} (2006): \emph{{I}nfinite
  {D}imensional {A}nalysis: {A} {H}itchhiker's {G}uide}, Springer, 3 ed.

\bibitem[\protect\citeauthoryear{Angrist and Imbens}{Angrist and
  Imbens}{1995}]{angristimbens1995jotasa}
\textsc{Angrist, J.~D. and G.~W. Imbens} (1995): \enquote{{T}wo-{S}tage {L}east
  {S}quares {E}stimation of {A}verage {C}ausal {E}ffects in {M}odels with
  {V}ariable {T}reatment {I}ntensity,} \emph{The Journal of the American
  Statistical Association}, 90, 431--442.

\bibitem[\protect\citeauthoryear{Angrist, Keane, Leamer, Nevo, Pischke, Sims,
  Stock, and Whinston}{Angrist et~al.}{2010}]{2010JEPsymposium}
\textsc{Angrist, J.~D., M.~P. Keane, E.~E. Leamer, A.~Nevo, J.-S. Pischke,
  C.~A. Sims, J.~H. Stock, and M.~D. Whinston} (2010): \enquote{{S}ymposia:
  {C}on out of {E}conomics,} \emph{The Journal of Economic Perspectives}, 24,
  3--94.

\bibitem[\protect\citeauthoryear{Beran and Hall}{Beran and
  Hall}{1992}]{beranhall1992taos}
\textsc{Beran, R. and P.~Hall} (1992): \enquote{{E}stimating {C}oefficient
  {D}istributions in {R}andom {C}oefficient {R}egressions,} \emph{The Annals of
  Statistics}, 1970--1984.

\bibitem[\protect\citeauthoryear{Blundell and Powell}{Blundell and
  Powell}{2004}]{blundellpowell2004troes}
\textsc{Blundell, R.~W. and J.~L. Powell} (2004): \enquote{{E}ndogeneity in
  {S}emiparametric {B}inary {R}esponse {M}odels,} \emph{The Review of Economic
  Studies}, 71, 655--679.

\bibitem[\protect\citeauthoryear{Chamberlain}{Chamberlain}{1992}]{chamberlain1992e}
\textsc{Chamberlain, G.} (1992): \enquote{{E}fficiency {B}ounds for
  {S}emiparametric {R}egression,} \emph{Econometrica}, 60, 567--596.

\bibitem[\protect\citeauthoryear{Chay and Greenstone}{Chay and
  Greenstone}{2005}]{chaygreenstone2005jope}
\textsc{Chay, K.~Y. and M.~Greenstone} (2005): \enquote{{D}oes {A}ir {Q}uality
  {M}atter? {E}vidence from the {H}ousing {M}arket,} \emph{The Journal of
  Political Economy}, 113, 376--424.

\bibitem[\protect\citeauthoryear{Chernozhukov, Fern{\'a}ndez-Val, and
  Galichon}{Chernozhukov et~al.}{2010}]{chernozhukovfernandez-valgalichon2010e}
\textsc{Chernozhukov, V., I.~Fern{\'a}ndez-Val, and A.~Galichon} (2010):
  \enquote{{Q}uantile and {P}robability {C}urves {W}ithout {C}rossing,}
  \emph{Econometrica}, 78, 1093--1125.

\bibitem[\protect\citeauthoryear{Chernozhukov, Fern{\'a}ndez-Val, and
  Kowalski}{Chernozhukov et~al.}{2011}]{chernozhukovfernandezvalkowalski2011wp}
\textsc{Chernozhukov, V., I.~Fern{\'a}ndez-Val, and A.~Kowalski} (2011):
  \enquote{{Q}uantile {R}egression with {C}ensoring and {E}ndogeneity,}
  \emph{Working paper}.

\bibitem[\protect\citeauthoryear{Chernozhukov, Fern{\'a}ndez-Val, and
  Melly}{Chernozhukov et~al.}{2009}]{chernozhukovfernandez-valmelly2009wp}
\textsc{Chernozhukov, V., I.~Fern{\'a}ndez-Val, and B.~Melly} (2009):
  \enquote{{I}nference on {C}ounterfactual {D}istributions,} \emph{Working
  paper}.

\bibitem[\protect\citeauthoryear{Chernozhukov, Fern{\'a}ndez-Val, and
  Melly}{Chernozhukov et~al.}{2012}]{chernozhukovfern'andez-valmelly2012e}
---\hspace{-.1pt}---\hspace{-.1pt}--- (2012): \enquote{{I}nference on
  {C}ounterfactual {D}istributions,} \emph{Econometrica (forthcoming)}.

\bibitem[\protect\citeauthoryear{Chernozhukov and Hansen}{Chernozhukov and
  Hansen}{2005}]{chernozhukovhansen2005e}
\textsc{Chernozhukov, V. and C.~Hansen} (2005): \enquote{{A}n {IV} {M}odel of
  {Q}uantile {T}reatment {E}ffects,} \emph{Econometrica}, 73, 245--261.

\bibitem[\protect\citeauthoryear{Chesher}{Chesher}{2003}]{chesher2003e}
\textsc{Chesher, A.} (2003): \enquote{{I}dentification in {N}onseparable
  {M}odels,} \emph{Econometrica}, 71, 1405--1441.

\bibitem[\protect\citeauthoryear{Deaton, Heckman, and Imbens}{Deaton
  et~al.}{2010}]{2010JELsymposium}
\textsc{Deaton, A., J.~J. Heckman, and G.~W. Imbens} (2010): \enquote{{F}orum
  on the {E}stimation of {T}reatment {E}ffects,} \emph{The Journal of Economic
  Literature}, 48, 356--455.

\bibitem[\protect\citeauthoryear{D'Haultf{\oe}uille and
  F{\'e}vrier}{D'Haultf{\oe}uille and
  F{\'e}vrier}{2012}]{d'haultfouillefevrier2012wp}
\textsc{D'Haultf{\oe}uille, X. and P.~F{\'e}vrier} (2012):
  \enquote{{I}dentification of {N}onseparable {M}odels with {E}ndogeneity and
  {D}iscrete {I}nstruments,} \emph{Working paper}.

\bibitem[\protect\citeauthoryear{Doksum}{Doksum}{1974}]{doksum1974taos}
\textsc{Doksum, K.} (1974): \enquote{{E}mpirical {P}robability {P}lots and
  {S}tatistical {I}nference for {N}onlinear {M}odels in the {T}wo-{S}ample
  {C}ase,} \emph{The Annals of Statistics}, 2, 267--277.

\bibitem[\protect\citeauthoryear{Ekeland, Heckman, and Nesheim}{Ekeland
  et~al.}{2004}]{EkelandHeckmanNesheim2004}
\textsc{Ekeland, I., J.~J. Heckman, and L.~Nesheim} (2004):
  \enquote{Identification and Estimation of Hedonic Models,} \emph{The Journal
  of Political Economy}, 112, 60.

\bibitem[\protect\citeauthoryear{Escanciano, Jacho-Chavez, and
  Lewbel}{Escanciano et~al.}{2014}]{escancianojacho-chavezlewbel2014joe}
\textsc{Escanciano, J.~C., D.~T. Jacho-Chavez, and A.~Lewbel} (2014):
  \enquote{{U}niform {C}onvergence of {W}eighted {S}ums of {N}on and
  {S}emiparametric {R}esiduals for {E}stimation and {T}esting,} \emph{Journal
  of Econometrics}, 178, 426--443.

\bibitem[\protect\citeauthoryear{Florens, Heckman, Meghir, and
  Vytlacil}{Florens et~al.}{2008}]{florensheckmanmeghiretal2008e}
\textsc{Florens, J.~P., J.~J. Heckman, C.~Meghir, and E.~Vytlacil} (2008):
  \enquote{{I}dentification of {T}reatment {E}ffects {U}sing {C}ontrol
  {F}unctions in {M}odels {W}ith {C}ontinuous, {E}ndogenous {T}reatment and
  {H}eterogeneous {E}ffects,} \emph{Econometrica}, 76, 1191--1206.

\bibitem[\protect\citeauthoryear{Graham and Powell}{Graham and
  Powell}{2012}]{grahampowell2012e}
\textsc{Graham, B.~S. and J.~L. Powell} (2012): \enquote{{I}dentification and
  {E}stimation of {A}verage {P}artial {E}ffects in “{I}rregular”
  {C}orrelated {R}andom {C}oefficient {P}anel {D}ata {M}odels,}
  \emph{Econometrica}, 80, 2105--2152.

\bibitem[\protect\citeauthoryear{Hahn and Ridder}{Hahn and
  Ridder}{2013}]{hahnridder2013e}
\textsc{Hahn, J. and G.~Ridder} (2013): \enquote{{A}symptotic {V}ariance of
  {S}emiparametric {E}stimators {W}ith {G}enerated {R}egressors,}
  \emph{Econometrica}, 81, 315--340.

\bibitem[\protect\citeauthoryear{Heckman and Vytlacil}{Heckman and
  Vytlacil}{1998}]{heckmanvytlacil1998tjohr}
\textsc{Heckman, J. and E.~Vytlacil} (1998): \enquote{{I}nstrumental
  {V}ariables {M}ethods for the {C}orrelated {R}andom {C}oefficient {M}odel:
  {E}stimating the {A}verage {R}ate of {R}eturn to {S}chooling {W}hen the
  {R}eturn is {C}orrelated with {S}chooling,} \emph{The Journal of Human
  Resources}, 33, 974--987.

\bibitem[\protect\citeauthoryear{Heckman}{Heckman}{2001}]{heckman2001tjope}
\textsc{Heckman, J.~J.} (2001): \enquote{{M}icro {D}ata, {H}eterogeneity, and
  the {E}valuation of {P}ublic {P}olicy: {N}obel {L}ecture,} \emph{The Journal
  of Political Economy}, 109, 673--748.

\bibitem[\protect\citeauthoryear{Heckman, Matzkin, and Nesheim}{Heckman
  et~al.}{2010}]{HeckmanMatzkinNesheim2010}
\textsc{Heckman, J.~J., R.~L. Matzkin, and L.~Nesheim} (2010):
  \enquote{Nonparametric Identification and Estimation of Nonadditive Hedonic
  Models,} \emph{Econometrica}, 78, 1569--1591.

\bibitem[\protect\citeauthoryear{Heckman, Smith, and Clements}{Heckman
  et~al.}{1997}]{heckmansmithclements1997troes}
\textsc{Heckman, J.~J., J.~Smith, and N.~Clements} (1997): \enquote{{M}aking
  the {M}ost {O}ut of {P}rogramme {E}valuations and {S}ocial {E}xperiments:
  {A}ccounting for {H}eterogeneity in {P}rogramme {I}mpacts,} \emph{The Review
  of Economic Studies}, 64, 487--535.

\bibitem[\protect\citeauthoryear{Hoderlein, Klemel{\"a}, and Mammen}{Hoderlein
  et~al.}{2010}]{hoderleinklemelamammen2010et}
\textsc{Hoderlein, S., J.~Klemel{\"a}, and E.~Mammen} (2010):
  \enquote{{A}nalyzing the {R}andom {C}oefficient {M}odel {N}onparametrically,}
  \emph{Econometric Theory}, 26, 804--837.

\bibitem[\protect\citeauthoryear{Hoderlein and Sherman}{Hoderlein and
  Sherman}{2013}]{hoderleinsherman2013cwp4}
\textsc{Hoderlein, S. and R.~Sherman} (2013): \enquote{{I}dentification and
  {E}stimation in a {C}orrelated {R}andom {C}oefficients Binary Response
  Model,} \emph{Working paper}.

\bibitem[\protect\citeauthoryear{Hoeffding}{Hoeffding}{1948}]{hoeffding1948taoms}
\textsc{Hoeffding, W.} (1948): \enquote{{A} {C}lass of {S}tatistics with
  {A}symptotically {N}ormal {D}istribution,} \emph{The Annals of Mathematical
  Statistics}, 19, 293--325.

\bibitem[\protect\citeauthoryear{Horn and Johnson}{Horn and
  Johnson}{1991}]{hornjohnson1991cupc}
\textsc{Horn, R.~A. and C.~R. Johnson} (1991): \emph{{T}opics in {M}atrix
  {A}nalysis}, Cambridge: Cambridge University Press.

\bibitem[\protect\citeauthoryear{Imbens}{Imbens}{2007}]{imbens2007ic}
\textsc{Imbens, G.} (2007): \enquote{{N}onadditive {M}odels with {E}ndogenous
  {R}egressors,} in \emph{Advances in Economics and Econometrics: Theory and
  Applications, Ninth World Congress}, ed. by R.~Blundell, W.~Newey, and
  T.~Persson, Cambridge University Press, New York, vol.~3.

\bibitem[\protect\citeauthoryear{Imbens and Angrist}{Imbens and
  Angrist}{1994}]{imbensangrist1994e}
\textsc{Imbens, G.~W. and J.~D. Angrist} (1994): \enquote{{I}dentification and
  {E}stimation of {L}ocal {A}verage {T}reatment {E}ffects,}
  \emph{Econometrica}, 62, 467--475.

\bibitem[\protect\citeauthoryear{Imbens and Newey}{Imbens and
  Newey}{2009}]{imbensnewey2009e}
\textsc{Imbens, G.~W. and W.~K. Newey} (2009): \enquote{{I}dentification and
  {E}stimation of {T}riangular {S}imultaneous {E}quations {M}odels {W}ithout
  {A}dditivity,} \emph{Econometrica}, 77, 1481--1512.

\bibitem[\protect\citeauthoryear{Jun}{Jun}{2009}]{jun2009joe}
\textsc{Jun, S.~J.} (2009): \enquote{{L}ocal {S}tructural {Q}uantile {E}ffects
  in a {M}odel with a {N}onseparable {C}ontrol {V}ariable,} \emph{Journal of
  Econometrics}, 151, 82--97.

\bibitem[\protect\citeauthoryear{Kasy}{Kasy}{2011}]{kasy2010et}
\textsc{Kasy, M.} (2011): \enquote{{I}dentification {I}n {T}riangular {S}ystems
  {U}sing {C}ontrol {F}unctions,} \emph{Econometric Theory}, 27, 663--671.

\bibitem[\protect\citeauthoryear{Kasy}{Kasy}{2013}]{kasy2013wp}
---\hspace{-.1pt}---\hspace{-.1pt}--- (2013): \enquote{{I}dentification in
  {G}eneral {T}riangular {S}ystems,} \emph{Working paper}.

\bibitem[\protect\citeauthoryear{Koenker and Bassett}{Koenker and
  Bassett}{1978}]{koenkerbassett1978e}
\textsc{Koenker, R. and G.~Bassett} (1978): \enquote{{R}egression {Q}uantiles,}
  \emph{Econometrica}, 46, 33--50.

\bibitem[\protect\citeauthoryear{Kowalski and Tu}{Kowalski and
  Tu}{2008}]{kowalskitu2008}
\textsc{Kowalski, J. and X.~M. Tu} (2008): \emph{{M}odern {A}pplied
  {U}-statistics}, Wiley.

\bibitem[\protect\citeauthoryear{Lee}{Lee}{2013}]{lee2013wp}
\textsc{Lee, Y.-Y.} (2013): \enquote{{P}artial {M}ean {P}rocesses with
  {G}enerated {R}egressors: {C}ontinuous {T}reatment {E}ffects and
  {N}onseparable {M}odels,} \emph{Working paper}.

\bibitem[\protect\citeauthoryear{Li and Tobias}{Li and
  Tobias}{2011}]{litobias2011joe}
\textsc{Li, M. and J.~L. Tobias} (2011): \enquote{{B}ayesian {I}nference in a
  {C}orrelated {R}andom {C}oefficients {M}odel: {M}odeling {C}ausal {E}ffect
  {H}eterogeneity with an {A}pplication to {H}eterogeneous {R}eturns to
  {S}chooling,} \emph{Journal of Econometrics}, 162, 345--361.

\bibitem[\protect\citeauthoryear{Mammen, Rothe, and Schienle}{Mammen
  et~al.}{2012}]{mammenrotheschienle2012taos}
\textsc{Mammen, E., C.~Rothe, and M.~Schienle} (2012): \enquote{{N}onparametric
  {R}egression {W}ith {N}onparametrically {G}enerated {C}ovariates,} \emph{The
  Annals of Statistics}, 40, 1132--1170.

\bibitem[\protect\citeauthoryear{Mammen, Rothe, and Schienle}{Mammen
  et~al.}{2013}]{mammenrotheschienle2013wp}
---\hspace{-.1pt}---\hspace{-.1pt}--- (2013): \enquote{{S}emiparametric
  {E}stimation with {G}enerated {C}ovariates,} \emph{Working paper}.

\bibitem[\protect\citeauthoryear{Masten}{Masten}{2012}]{masten2012jmp}
\textsc{Masten, M.~A.} (2012): \enquote{{R}andom {C}oefficients on {E}ndogenous
  {V}ariables in {S}imultaneous {E}quations {M}odels,} \emph{Working paper}.

\bibitem[\protect\citeauthoryear{Newey}{Newey}{1994}]{newey1994et}
\textsc{Newey, W.~K.} (1994): \enquote{{K}ernel {E}stimation of {P}artial
  {M}eans and a {G}eneral {V}ariance {E}stimator,} \emph{Econometric Theory},
  10, 233--253.

\bibitem[\protect\citeauthoryear{Palmquist}{Palmquist}{2005}]{Palmquist2005}
\textsc{Palmquist, R.~B.} (2005): \enquote{Property Value Models,}
  \emph{Handbook of Environmental Economics}, 2, 763--819.

\bibitem[\protect\citeauthoryear{Powell, Stock, and Stoker}{Powell
  et~al.}{1989}]{powellstockstoker1989e}
\textsc{Powell, J.~L., J.~H. Stock, and T.~M. Stoker} (1989):
  \enquote{{S}emiparametric {E}stimation of {I}ndex {C}oefficients,}
  \emph{Econometrica}, 57, 1403--1430.

\bibitem[\protect\citeauthoryear{Rosen}{Rosen}{1974}]{Rosen1974}
\textsc{Rosen, S.} (1974): \enquote{Hedonic Prices and Implicit Markets:
  Product Differentiation in Pure Competition,} \emph{The Journal of Political
  Economy}, 82, 34--55.

\bibitem[\protect\citeauthoryear{Rothe}{Rothe}{2009}]{rothe2009joe}
\textsc{Rothe, C.} (2009): \enquote{{S}emiparametric Estimation of Binary
  Response Models with Endogenous Regressors,} \emph{Journal of Econometrics},
  153, 51--64.

\bibitem[\protect\citeauthoryear{Ruppert and Wand}{Ruppert and
  Wand}{1994}]{ruppertwand1994taos}
\textsc{Ruppert, D. and M.~P. Wand} (1994): \enquote{{M}ultivariate {L}ocally
  {W}eighted {L}east {S}quares {R}egression,} \emph{The Annals of Statistics},
  22, 1346--1370.

\bibitem[\protect\citeauthoryear{Serfling}{Serfling}{1980}]{serfling1980}
\textsc{Serfling, R.~J.} (1980): \emph{{A}pproximation {T}heorems of
  {M}athematical {S}tatistics}, New York: John Wiley \& Sons.

\bibitem[\protect\citeauthoryear{Smith and Huang}{Smith and
  Huang}{1995}]{SmithHuang1995}
\textsc{Smith, V.~K. and J.-C. Huang} (1995): \enquote{Can Markets Value Air
  Quality? A Meta-analysis of Hedonic Property Value Models,} \emph{The Journal
  of Political Economy}, 209--227.

\bibitem[\protect\citeauthoryear{Sperlich}{Sperlich}{2009}]{sperlich2009ej}
\textsc{Sperlich, S.} (2009): \enquote{{A} {N}ote on {N}on-parametric
  {E}stimation with {P}redicted {V}ariables,} \emph{Econometrics Journal}, 12,
  382--395.

\bibitem[\protect\citeauthoryear{Torgovitsky}{Torgovitsky}{2012}]{torgovitsky2012wpa}
\textsc{Torgovitsky, A.} (2012): \enquote{{I}dentification of {N}onseparable
  {M}odels with {G}eneral {I}nstruments,} \emph{Working paper}.

\bibitem[\protect\citeauthoryear{Torgovitsky}{Torgovitsky}{2013}]{torgovitsky2013wp}
---\hspace{-.1pt}---\hspace{-.1pt}--- (2013): \enquote{{M}inimum {D}istance
  from {I}ndependence {E}stimation of {N}onseparable {I}nstrumental {V}ariables
  {M}odels,} \emph{Working paper}.

\bibitem[\protect\citeauthoryear{Train}{Train}{2009}]{train2002}
\textsc{Train, K.} (2009): \emph{{D}iscrete {C}hoice {M}ethods with
  {S}imulation}, Cambridge University Press, 2 ed.

\bibitem[\protect\citeauthoryear{{U.S. Department of Commerce. Bureau of the
  Census}}{{U.S. Department of Commerce. Bureau of the
  Census}}{2008}]{CCDB1983}
\textsc{{U.S. Department of Commerce. Bureau of the Census}} (2008):
  \emph{County and City Data Book [United States], 1983. ICPSR version}, Ann
  Arbor, MI: {Inter-university Consortium for Political and Social Research
  (ICPSR) [distributor]}, \url{http://doi.org/10.3886/ICPSR08256.v1}.

\bibitem[\protect\citeauthoryear{{U.S. Department of Commerce. Bureau of the
  Census}}{{U.S. Department of Commerce. Bureau of the
  Census}}{2012}]{CCDB1977}
---\hspace{-.1pt}---\hspace{-.1pt}--- (2012): \emph{County and City Data Book
  [United States] Consolidated File: County Data, 1947-1977, ICPSR07736-v2},
  Ann Arbor, MI: {Inter-university Consortium for Political and Social Research
  (ICSPR) [distributor]}, \url{http://doi.org/10.3886/ICPSR07736.v2}.

\bibitem[\protect\citeauthoryear{{U.S. Environmental Protection Agency}}{{U.S.
  Environmental Protection Agency}}{2001}]{TSPdata}
\textsc{{U.S. Environmental Protection Agency}} (2001): \enquote{Historical TSP
  data, 1970--1999,}
  \url{http://www.epa.gov/ttn/airs/airsaqs/archived\%20data}, accessed:
  10-8-2013.

\bibitem[\protect\citeauthoryear{{van der Vaart} and Wellner}{{van der Vaart}
  and Wellner}{1996}]{wellner1996}
\textsc{{van der Vaart}, A.~W. and J.~A. Wellner} (1996): \emph{{W}eak
  {C}onvergence and {E}mpirical {P}rocesses: {W}ith {A}pplications to
  {S}tatistics}, Springer.

\bibitem[\protect\citeauthoryear{Wooldridge}{Wooldridge}{1997}]{wooldridge1997el}
\textsc{Wooldridge, J.~M.} (1997): \enquote{{O}n {T}wo {S}tage {L}east
  {S}quares {E}stimation of the {A}verage {T}reatment {E}ffect in a {R}andom
  {C}oefficient {M}odel,} \emph{Economics Letters}, 56, 129--133.

\bibitem[\protect\citeauthoryear{Wooldridge}{Wooldridge}{2003}]{wooldridge2003el}
---\hspace{-.1pt}---\hspace{-.1pt}--- (2003): \enquote{{F}urther {R}esults on
  {I}nstrumental {V}ariables {E}stimation of {A}verage {T}reatment {E}ffects in
  the {C}orrelated {R}andom {C}oefficient {M}odel,} \emph{Economics Letters},
  79, 185--191.

\bibitem[\protect\citeauthoryear{Wooldridge}{Wooldridge}{2008}]{wooldridge2008ic}
---\hspace{-.1pt}---\hspace{-.1pt}--- (2008): \enquote{{I}nstrumental
  {V}ariables {E}stimation of the {A}verage {T}reatment {E}ffect in the
  {C}orrelated {R}andom {C}oefficient {M}odel,} in \emph{Modelling and
  Evaluating Treatment Effects in Econometrics (Advances in Econometrics,
  Volume 21)}, ed. by D.~Millimet, J.~Smith, and E.~Vytlacil, Emerald Group
  Publishing Limited, 93--116.

\end{thebibliography}

\end{document}